\begin{document}

\title{Balanced Allocations in Batches: The Tower of Two Choices\footnote{Full version of a paper appearing in SPAA 2023.}}
\author[]{Dimitrios Los\thanks{\texttt{dimitrios.los@cl.cam.ac.uk}} }
\author[]{Thomas Sauerwald\thanks{\texttt{thomas.sauerwald@cl.cam.ac.uk}}}
\affil[]{Department of Computer Science \& Technology, University of Cambridge, UK}

\date{} %

\maketitle

\begin{abstract}
In the balanced allocation framework, the goal is to allocate $m$ balls into $n$ bins, so as to minimize the gap (difference of maximum to average load). The \OneChoice process allocates each ball to a bin sampled independently and uniformly at random. 
The \TwoChoice process allocates balls sequentially, and each ball is placed in the least loaded of two sampled bins.
Finally, the \OnePlusBeta-process mixes these processes, meaning each ball is allocated using \TwoChoice with probability $\beta \in (0,1)$, and using \OneChoice otherwise.

Despite \TwoChoice being optimal in the sequential setting, 
it has been observed in practice that it does not perform well 
in a parallel environment, where load information may be outdated.
Following~\cite{BCEFN12}, we study such a parallel setting where balls are allocated in batches of size $b$, and balls within the same batch are allocated with the same strategy and based on the same load information.

For small batch sizes $b \in [n,n \log n]$, it was shown in \cite{LS22Noise} that \TwoChoice achieves an asymptotically optimal gap among all allocation processes with two (or any constant number of) samples.

In this work, we focus on larger batch sizes $b \in [n \log n,n^3]$. It was proved in~\cite{LS22Batched} that \TwoChoice leads to a gap of $\Theta(b/n)$. As our main result, we prove that the gap reduces to 
$\Oh(\sqrt{(b/n) \cdot \log n})$, if one runs the $(1+\beta)$-process with an appropriately chosen $\beta$ (in fact this result holds for a larger class of processes). 
This not only proves the phenomenon that \TwoChoice is not the best (leading to the formation of ``towers'' over previously light bins), but also that mixing two processes (\OneChoice and \TwoChoice) leads to a process which achieves a gap that is asymptotically smaller than both. We also derive a matching lower bound of $\Omega(\sqrt{(b/n) \cdot \log n})$ for any allocation process, which demonstrates that the above $(1+\beta)$-process is asymptotically optimal.

Our analysis also 
works in the presence of randomly weighted balls, and also implies exponential tails for the number of bins above a certain load value.

\end{abstract}

\clearpage

\clearpage
\tableofcontents
~
\clearpage

\section{Introduction}

\paragraph{Sequential balanced allocations.} In the sequential balanced allocations framework, there are $m$ tasks (balls) to be allocated into $n$ servers (bins). It is well-known that allocating the balls into bins sampled uniformly at random (a.k.a.~\OneChoice) leads \Whp\footnote{In general, with high probability refers to probability of at least $1 - n^{-c}$ for some constant $c > 0$.}~to a maximum load of $\Theta(\log n/\log \log n)$ for $m = n$ and a gap (maximum load minus average load) of $\Theta\big(\sqrt{(m/n) \cdot \log n}\big)$ for $m \geq n \log n$. 

An improvement over \OneChoice is the \DChoice process~\cite{KLM96,ABKU99,BCSV06}, where each ball is allocated to the least loaded of $d$ bins sampled uniformly at random. For any $m \geq n$, this process achieves \Whp~an $\log_d \log n + \Theta(1)$ gap, i.e., a gap that does not depend on $m$. For $d = 2$, this great improvement is known as ``power-of-two-choices'' (see also surveys~\cite{MRS01,W17} for more details). Despite the simplistic nature of the balanced allocation framework, the \TwoChoice process has had a significant impact on practical applications such as load balancing and distributed storage systems, which was also acknowledged by the ``\emph{ACM Paris Kanellakis Theory and Practice Award 2020}''~\cite{award20} (see also \textit{Applications} below).

Several variants of \TwoChoice have been studied. Of particular importance to this work is the \OnePlusBeta-process, where each ball is allocated using \TwoChoice with probability $\beta \in (0, 1]$ and \OneChoice otherwise. Mitzenmacher~\cite[Section 4.4.1]{M96} introduced this process as a model of~\TwoChoice with erroneous comparisons. Peres, Talwar and Wieder~\cite{PTW15} showed that for $\beta := \beta(n) \ll 1$, it achieves \Whp~a $\Theta((\log n)/\beta)$ gap (see also~\cite{LS22Batched}), which becomes worse for smaller $\beta$, but still remains independent of $m$. The \OnePlusBeta-process has also been applied to the analysis of \TwoChoice in the popular \emph{graphical setting}~\cite{KP06,BF21,PTW15}, where bins are organized as vertices in a graph, and each ball is allocated to the lesser loaded of two adjacent vertices of an edge sampled uniformly at random.

Another variant of \TwoChoice that has received some attention recently is the family of \TwoThinning processes~\cite{FG18,FL20}, where the ball is allocated to the second sample only if the first one does not meet a certain criterion, e.g., based on a threshold on its load or a quantile on its rank.

It should be noted that the analyses of all these processes strongly rely on the fact that the load information of each bin is updated after each allocation. In effect this means balls can only be allocated sequentially, which is a downside in distributed and parallel environments.

\paragraph{Outdated information settings.} In this work, we demonstrate that in outdated information settings by choosing an appropriately small $\beta$, \OnePlusBeta achieves the asymptotically optimal gap among a large class of processes, including not only \TwoChoice (and \OneChoice), but even adaptive processes that may allocate with a different scheme after each batch. This confirms earlier empirical observations that the performance of the \TwoChoice process deteriorates under outdated information and delays~\cite{W86,M00,D00,OWZS13,FGCBG97}. %

Berenbrink, Czumaj, Englert, Friedetzky and Nagel~\cite{BCEFN12} introduced the \Batched setting where balls are allocated in batches of size $b$. That means, in every batch the $b$ balls are allocated \textit{in parallel}, as the decision where to allocate the ball only depends on the load configuration before that batch of balls arrived. For $b = n$, they proved that \TwoChoice achieves \Whp~an $\Oh(\log n)$ gap. This bound was recently improved to $\Theta(\log n/\log \log n)$ in~\cite{LS22Noise}, and in the same work, it was shown that \TwoChoice has a gap that matches the maximum load of \OneChoice for $b$ balls, for any batch size $b \in [n \cdot e^{-\log^{\Theta(1)} n}, n \log n]$, and so it is asymptotically optimal. In contrast, for $b \geq n \log n$, \TwoChoice (and a family of other processes) have \Whp~a $\Theta(b/n)$ gap~\cite{LS22Batched}, a bound which was shown to hold even in the presence of weights and on some graphs. This analysis also demonstrates that increasing $d$ in the \DChoice process, does not always improve the gap, which is in sharp contrast to the sequential setting. In~\cite{LS22Noise}, a more powerful setting, \TauDelay was studied for the \TwoChoice process, where an adversary can choose to report for each of the bins any load from the last $\tau$ steps. For $b = \tau$, \Batched is a special instance of \TauDelay and for any $\tau \leq n \log n$, the same asymptotic bounds where shown to hold.

Outdated information settings have also been studied in the queuing setting~\cite{W86,AN92,KK95,FGCBG97,M00}. In particular, Mitzenmacher~\cite{M00} studied the corresponding version of the \Batched setting, called the \textit{bulletin board model with periodic updates}, showing that some processes requiring centralized coordination can outperform \TwoChoice, but no explicit rigorous bounds were proven. This shortcoming of \TwoChoice was characterized as \textit{herd behavior}, meaning that some of the initially lighter bins receive disproportionately many balls, turning them into heavy bins. In another empirically study, Dahlin~\cite{D00} also observed the herd behavior and suggested similar centralized strategies to improve upon \DChoice. Regarding identifying optimal processes, Whitt~\cite{W86} remarks:
\begin{quote}
\textit{We have shown that several natural selection rules are not optimal in various situations, but we have not identified any optimal rules. Identifying optimal rules in these situations would obviously be interesting, but appears to be difficult. Moreover, knowing an optimal rule might not be so useful because the optimal rule may be very complicated.}
\end{quote}

\paragraph{Applications.} Recently, several \textit{distributed low-latency schedulers}, including Sparrow~\cite{OWZS13}, Eagle~\cite{DDDZ16}, Hawk~\cite{DDKZ15}, Peacock~\cite{KG18}, Pigeon~\cite{WLLSRCJ19} and Tarcil~\cite{DSK15}, have used variants of the \TwoChoice process.  In \cite{OWZS13}, with regards to the implementation of Sparrow, the authors state: 
\begin{quote}
\textit{The power of two choices suffers from
two remaining performance problems: first, server queue
length is a poor indicator of wait time, and second, due
to messaging delays, multiple schedulers sampling in
parallel may experience race conditions.}
\end{quote}
Similar observations have been made in the context of \textit{distributed stream processing}~\cite{NMGKS15,NMKS16} and \textit{load balancers}~\cite{LXKGLG11}. These studies support that batch sizes $b = \Omega(n \log n)$ for which \TwoChoice is no longer optimal are relevant to real-world applications.

\paragraph{Weighted settings.} Several works study balanced allocation processes with \textit{weights}~\cite{TW07,BFHM08,PTW15,LS22Batched}. We will be focusing on weights sampled independently from probability distributions with bounded moment generating functions as in~\cite{LS22Batched} and~\cite{PTW15}, which includes the geometric, exponential and Poisson distributions.

\paragraph{Our results.} In this work, we prove that a family of processes satisfying a mild technical condition achieve the asymptotically optimal gap\footnote{By \textit{optimal} we mean over all processes that choose a probability allocation vector $p$, where $p_i$ gives the probability to allocate to the $i$-th heaviest bin, at the beginning of the batch and this vector remains the same throughout the entire batch.
} of $\Oh\big(\sqrt{(b/n) \cdot \log n} \big)$ in the weighted \Batched setting for $b \in [2n \log n, n^3]$, leading to roughly a quadratic improvement over the gap of the \TwoChoice process. This family of processes includes the \OnePlusBeta-process, which is a process that can be easily implemented in a \textit{decentralized manner}, and demonstrates that by setting $\beta = \sqrt{(n/b) \cdot \log n}$ we attain this asymptotically optimal gap. %

We also provide lower bounds establishing the tightness of our upper bounds. Interestingly, the lower bound of $\Omega( \sqrt{(b/n) \cdot \log n} )$ applies to a much more powerful class of allocation processes, where the allocation rule is arbitrarily tailored at the beginning of the batch.

The intuition for these optimal processes relates to the herd behavior observed in~\cite{M00} and \cite{D00}. For the \DChoice process, the maximum probability of allocating to a bin is $\max_{i \in [n]} p_i \approx d/n$. This means that, for example, in \TwoChoice in a batch of $b$ balls there are some bins that receive $\approx 2b/n$ balls and so a gap of $\approx b/n$ arises. This becomes worse as $d$ grows. To avoid this, we will investigate processes where $\max_{i \in [n]} p_i = (1 + o(1))/n$, which means that in expectation no bin receives too many balls in any particular batch. For example, the \OnePlusBeta-process has $\max_{i \in [n]} p_i \approx (1 + \beta)/n$, which means that this mixing of \OneChoice steps with \TwoChoice steps circumvents the herd behavior. See \cref{fig:two_choie_vs_batch_visual} for a visualization of how \OnePlusBeta achieves a more balanced distribution than \TwoChoice over one batch, and \cref{fig:batched_unit_weights_high_level} for how the gaps of different processes are getting worse with larger $\max_{i \in [n]} p_i$. The asymptotic gap bounds of the \OneChoice, \TwoChoice and \OnePlusBeta processes in the \Batched setting are summarized in \cref{tab:gap_summary}. Our results also imply bounds for the shape of the load vector (see 
\cref{rem:load_vector_shape}). Our analysis also 
applies in the presence of randomly weighted balls, and also implies exponential tails for the number of bins above a certain load value.

\begin{figure}
    \centering
\begin{minipage}[t]{0.3\textwidth}
\begin{center}
\includegraphics[scale=0.15]{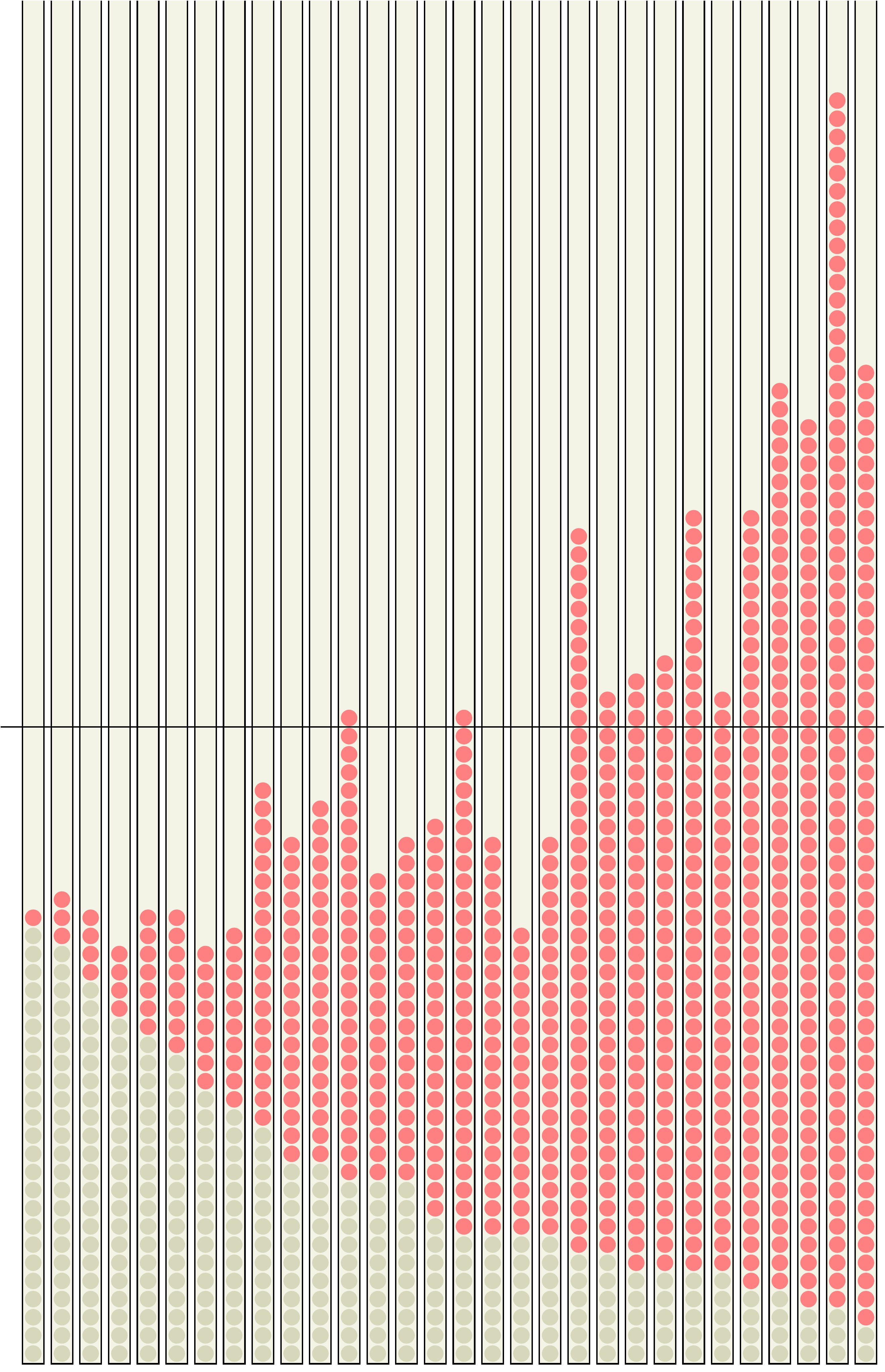} \\
    \TwoChoice
\end{center}
\end{minipage}\hspace{1cm}
\begin{minipage}[t]{0.3\textwidth}
\begin{center}
\includegraphics[scale=0.15]{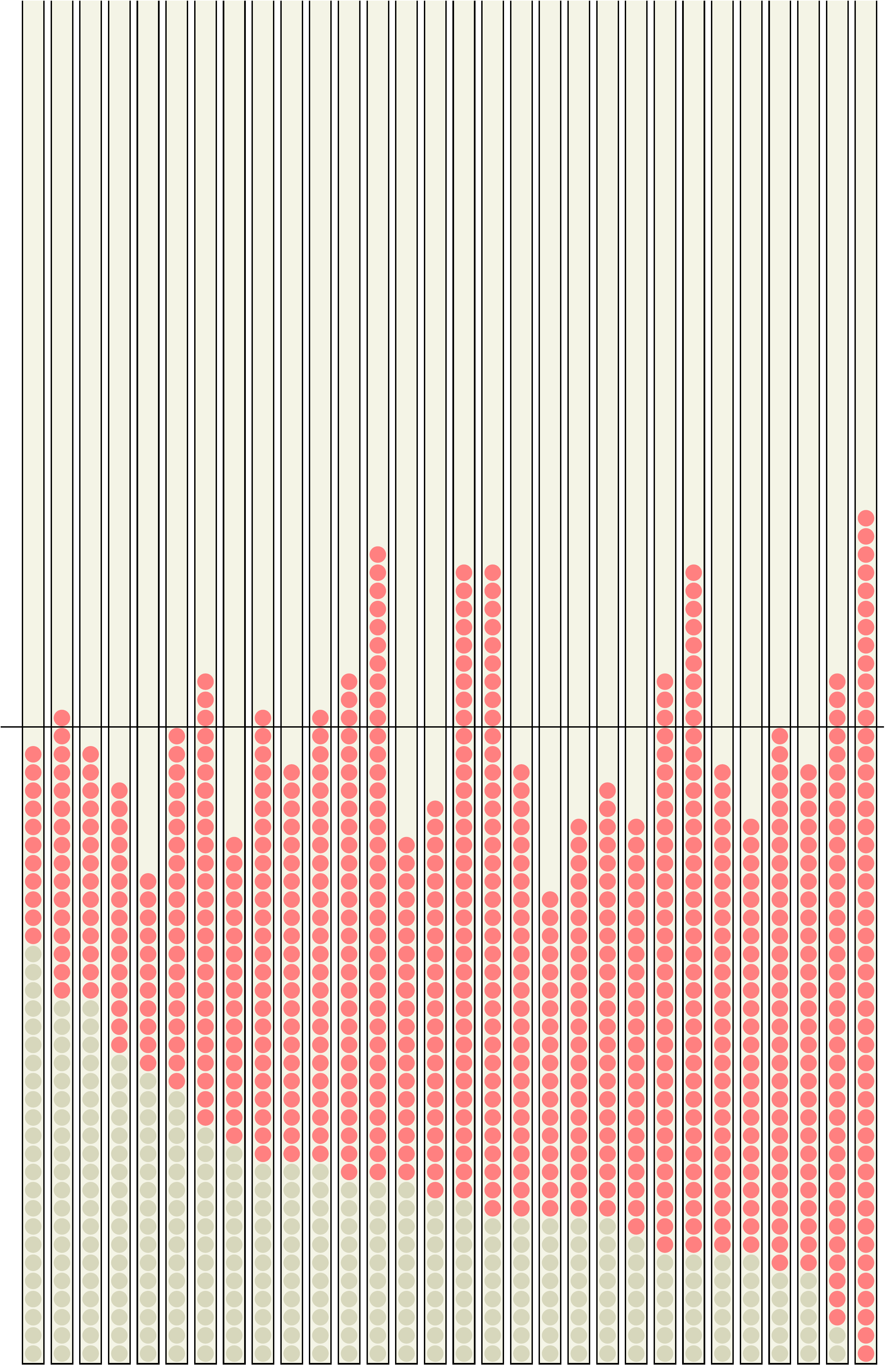} \\
    \OnePlusBeta-process
\end{center}
\end{minipage}
    \caption{The $b = 750$ balls of the latest batch shown in red allocated over the $n = 35$ bins (\textbf{left}) for \TwoChoice and (\textbf{right}) \OnePlusBeta with $\beta = 1/2$. Observe that \TwoChoice allocates more aggressively on the bins that are lightly loaded at the beginning of the batch, while \OnePlusBeta spreads the allocations more evenly.}    \label{fig:two_choie_vs_batch_visual}
\end{figure}

\paragraph{Our techniques.} Our techniques build on and refine those in~\cite{LS22Batched}, making use of the hyperbolic cosine potential function~\cite{PTW15} and variants. More specifically, a slightly weaker version of our tight upper bound is based on \cite[Theorem 3.1]{LS22Batched} and a refinement of \cite[Lemma 4.1]{LS22Batched}.  For our tight gap bound, our approach uses an interplay between two hyperbolic cosine potential functions to prove concentration and then an exponential potential with a larger smoothing parameter to deduce the refined gap.
A similar method was used in \cite[Section 5]{LS22Batched}, but one crucial novelty here is that we consider allocation processes whose probability allocation vector have a small $\ell_{\infty}$ distance from the uniform distribution. We believe that relating and comparing different allocation processes based on their $\ell_{\infty}$ distance (or other metrics) could be a promising avenue for future work. This can be also seen as a natural relaxation of the \emph{majorization technique}, which has been the dominant tool to relate different allocation processes \cite{PTW15,LS22Queries}.

\paragraph{Organization.} In \cref{sec:notation}, we introduce the basic notation for balanced allocations, and define the processes and settings that we will be working with. In particular, in \cref{sec:conditions} we define general conditions on the probability allocation vector used by the processes, under which our upper bounds on the gap apply. In \cref{sec:weak_gap}, we prove the $\Oh\big(\sqrt{b/n} \cdot \log n\big)$ bound on the gap for a family of processes in the weighted \Batched setting. In \cref{sec:strong_gap}, we perform a refined analysis and improve this bound to $\Oh\big(\sqrt{(b/n) \cdot \log n}\big)$.  
In \cref{sec:lower_bounds}, we show that this achieved gap is asymptotically optimal, and in~\cref{sec:experiments}, we present some empirical results on the gap of some specific processes.
Finally, in \cref{sec:conclusions}, we summarize the results and conclude with some open problems.

\begin{figure}[H]
    \centering
    \includegraphics{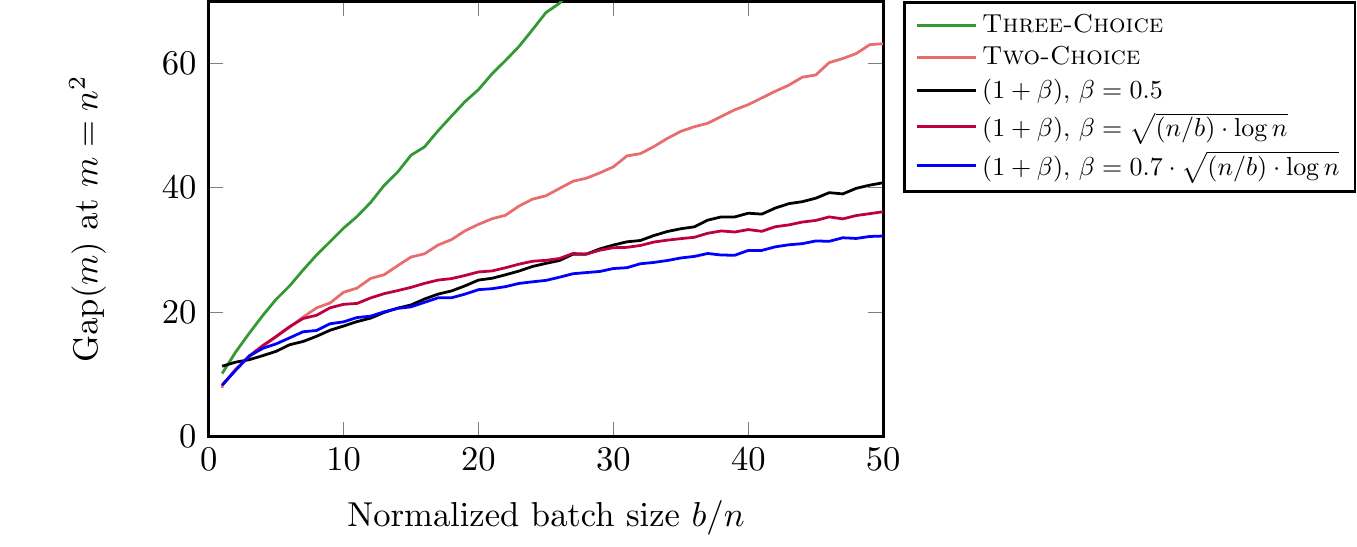}
    \definecolor{red2}{RGB}{232,109,109}
    \definecolor{green1}{RGB}{51,153,51}
    \caption{In the \Batched setting for large batch size $b$, the gaps achieved by the processes are ordered by their maximum entry in the probability allocation vector $p$: \textcolor{green1}{\ThreeChoice with $\max_{i \in [n]} p_i \approx \frac{3}{n}$}, \textcolor{red2}{\TwoChoice with $\max_{i \in [n]} p_i \approx \frac{2}{n}$}, \textcolor{black}{\OnePlusBeta with $\max_{i \in [n]} p_i \approx \frac{1 + \beta}{n}$} for $\beta = 0.5$, \textcolor{purple}{$\beta = \sqrt{(n/b) \cdot \log n}$} and \textcolor{blue}{$\beta = \sqrt{(n/b) \cdot \log n}$}. See \cref{fig:batched_unit_weights} for full details of the experiment.}
    \label{fig:batched_unit_weights_high_level}
\end{figure}

\begin{table}[H]
\centering
\resizebox{\textwidth}{!}{
\renewcommand{\arraystretch}{1.75}
 \begin{tabular}{ccccc}
        \textbf{Process} & \textbf{Gap in Sequential Setting } & \textbf{Gap in \Batched Setting} & \textbf{Batch Size}    \\ \hline
        \rowcolor{Gray} $\OneChoice$ & $\Theta \left(\sqrt{ (m/n) \cdot \log n} \right)$~\cite{RS98} &  $\Theta \left(\sqrt{ (m/n) \cdot \log n} \right)$~\cite{RS98} & $b \in \N$ \\ \hline 
        \rowcolor{Gray} & & $\Theta(\log \log n)$~\cite{LS22Noise} & $b \in \Theta(1)$ \\
        \rowcolor{Gray} & & $\Theta\left(\frac{\log n}{\log((4n/b) \log n)} \right)$~\cite{LS22Noise} & $b \in [n \cdot e^{-\log^{\Theta(1)} n}, n \log n]$ \\
        \rowcolor{Gray} \multirow{-3}{*}{$\TwoChoice$} & \multirow{-3}{*}{$\log_2 \log n + \Oh(1)$~\cite{ABKU99,BCSV06} } & $\Theta\left( b/n \right)$~\cite{LS22Batched} & $b \in [n \log n, n^3]$ \\ \hline
        \rowcolor{Gray} & & & \\ 
        \rowcolor{Gray} \multirow{-2}{*}{\makecell{$\OnePlusBeta$, \\ $\beta = \Theta(1)$}} & \multirow{-2}{*}{$\Theta(\log n)$~\cite{PTW15} } & \multirow{-2}{*}{$\Theta\left( b/n + \log n \right)$~\cite{LS22Batched}} & \multirow{-2}{*}{$b \in [n, n^3]$} \\ \hline
        \rowcolor{Greenish}  &  & $\Omega\left( \sqrt{ (b/n) \cdot \log n} \right)$~Thm~\ref{thm:lower} &  \\
        \rowcolor{Greenish} \multirow{-2}{*}{\makecell{\OnePlusBeta-process, \\ $\beta = \sqrt{(n/b) \cdot \log n}$}} & \multirow{-2}{*}{$\Theta(  (\log n)/\beta  ) $~\cite{PTW15}} & $\Oh\left( \sqrt{ (b/n) \cdot \log n} \right)$~Thm~\ref{thm:batching_strong_gap_bound} & \multirow{-2}{*}{$b \in [2n \log n,n^3]$} \\ \hline
        \rowcolor{Greenish} & & $\Omega \left( \sqrt{ (b/n ) \cdot \log n} \right)$~Thm~\ref{thm:lower} & \\
        \rowcolor{Greenish} \multirow{-2}{*}{\makecell{\OnePlusBeta-process, \\ $\beta = \sqrt{n/b}$}} & \multirow{-2}{*}{$\Theta(  (\log n)/\beta  ) $~\cite{PTW15}} & $\Oh\left( \sqrt{ b/n } \cdot \log n \right)$~Cor~\ref{cor:weak_bound} & \multirow{-2}{*}{$b \geq n \log n$}
             \\ \hline
        \end{tabular}
}
~\vspace{1em}~
\caption{Overview of the gap bounds in previous works (rows in \hlgray{~Gray~}) and the gap bounds derived in this work (rows in \hlgreenish{~Green~}). All gap bounds hold with probability at least $1 - o(1)$. Lower bounds hold for sufficiently large enough $m$. For the sake of simplicity, we focus on the setting with unit weights and only list results for \OnePlusBeta. Among all these processes, \OneChoice produces the worst gap in both settings, even though the gap does not change between the \Batched and sequential setting. For \TwoChoice, the gap becomes $b/n$ in the \Batched setting with $b = \Omega(n \log n)$, whereas for \OnePlusBeta the gap is improved to $\sqrt{(b/n) \cdot \log n}$ (for a suitable $\beta$).}
\label{tab:gap_summary}
\end{table}

\clearpage

\section{Notation, Processes and Settings} \label{sec:notation}

In this section, we introduce notation, processes and settings used throughout this work. %

\subsection{Basic Notation} \label{sec:basic_notation}

We consider the allocation of $m$ balls into $n$ bins, which are labeled $[n]:=\{1,2,\ldots,n\}$. For the moment, the $m$ balls are unweighted (or equivalently, all balls have weight $1$). For any step $t \geq 0$, $x^{t}$ is the $n$-dimensional \emph{load vector}, where $x_i^{t}$ is the number of balls allocated to bin $i$ in the first $t$ allocations. In particular, $x_i^{0}=0$ for every $i \in [n]$. Finally, the \emph{gap} is defined as
\[
 \Gap(t) = \max_{i \in [n]} x_i^{t} - \frac{t}{n}.
\]
It will also be convenient to sort the load vector $x$. To this end, let $\tilde{x}^t:=x^t-\frac{t}{n}$. Then, relabel the bins such that $y^{t}$ is a permutation of $\tilde{x}^t$ and $y_1^{t} \geq y_2^{t} \geq \cdots \geq y_n^{t}$. Note that $\sum_{i \in [n]} y_i^t=0$ and $\Gap(t)=y_1^t$. We call a bin $i \in [n]$ \emph{overloaded}, if $y_i^t \geq 0$ and \emph{underloaded} otherwise. 

A \textit{probability vector} $p \in \R^n$ is any vector satisfying $\sum_{i = 1}^n p_i = 1$ and $p_i \in [0,1]$ for $i \in [n]$. Following~\cite{PTW15}, many allocation processes can be described by a time-invariant \emph{probability allocation vector} $p^t$, which is the probability vector with $p_i^t$ being probability of allocating a ball to the $i$-th heaviest bin.

By $\mathfrak{F}^t$ we denote the filtration of the process until step $t$, which in particular reveals the load vector $x^t$.

\subsection{Processes} \label{sec:processes}

We start with a formal description of the \OneChoice process.

\begin{samepage}
\begin{framed}
\vspace{-.45em} \noindent
\underline{\OneChoice Process:} \\
\textsf{Iteration:} For each $t \geq 0$, sample one bin $i$, independently and uniformly at random. Then update:  
    \begin{equation*}
     x_{i}^{t+1} = x_{i}^{t} + 1.
 \end{equation*}\vspace{-1.5em}
\end{framed}
\end{samepage}
\noindent We continue with a formal description of the \TwoChoice process.
\begin{samepage}
\begin{framed}
\vspace{-.45em} \noindent
\underline{\TwoChoice Process:} \\
\textsf{Iteration:} For each $t \geq 0$, sample two bins $i_1$ and $i_2$, independently and uniformly at random. Let $i \in \{i_1, i_2 \}$ be such that $x_{i}^{t} = \min\{ x_{i_1}^t,x_{i_2}^t\}$, breaking ties randomly. Then update:  
    \begin{equation*}
     x_{i}^{t+1} = x_{i}^{t} + 1.
 \end{equation*}\vspace{-1.5em}
\end{framed}
\end{samepage}
It is immediate that the probability allocation vector of \TwoChoice is
\begin{equation*}
    p_{i} = \frac{2i-1}{n^2}, \qquad \mbox{ for all $i \in [n]$.}
\end{equation*}

Following~\cite{PTW15}, we recall the definition of the \OnePlusBeta-process which interpolates between \OneChoice and \TwoChoice:
\begin{samepage}
\begin{framed}
\vspace{-.45em} \noindent
\underline{($1+\beta$) Process:}\\
\textsf{Parameter:} A \textit{mixing factor} $\beta \in (0,1]$.\\
\textsf{Iteration:} For each $t \geq 0$, sample two bins $i_1$ and $i_2$, independently and uniformly at random. Let $i \in \{ i_1, i_2 \}$ be such that $x_{i}^{t} = \min\big\{ x_{i_1}^t,x_{i_2}^t \big\}$, breaking ties randomly. Then update:  
    \begin{equation*}
    \begin{cases}
     x_{i}^{t+1} = x_{i}^{t} + 1 & \mbox{with probability $\beta$}, \\
      x_{i_1}^{t+1} = x_{i_1}^{t} + 1 & \mbox{otherwise}.
   \end{cases}
 \end{equation*}\vspace{-1.em}
\end{framed}
\end{samepage}

In other words at each step, the \OnePlusBeta-process allocates the ball following the \TwoChoice rule with probability $\beta$, and otherwise allocates the ball following the \OneChoice rule. Therefore, its probability allocation vector is given by
\begin{equation*}
    p_{i} =
    (1-\beta) \cdot \frac{1}{n} + \beta \cdot \frac{2i-1}{n^2}, \qquad \mbox{ for all $i \in [n]$.}
\end{equation*}
Recall that in \cite{PTW15} (and \cite{LS22Batched}), it was shown that $\Gap(m) = \Oh\big(\frac{\log n}{\beta} \big)$ for any $m \geq n$ and $\beta \in (0, 1]$; so in particular, this gap (bound) does not grow with $m$.

The next process is another relaxation of \TwoChoice.
\begin{samepage}
\begin{framed}
\vspace{-.45em} \noindent
\underline{$\Quantile(\delta)$ Process:}\\
\textsf{Parameter:} A \textit{quantile} $\delta \in \{1/n, 2/n, \ldots, 1 \}$.\\
\textsf{Iteration:} For each $t \geq 0$, sample two bins $i_1$ and $i_2$, independently and uniformly at random. Then update:  
    \begin{equation*}
    \begin{cases}
     x_{i_2}^{t+1} = x_{i_2}^{t} + 1 & \mbox{if $i_1$ is among the $\delta n$ heaviest bins}, \\
     x_{i_1}^{t+1} = x_{i_1}^{t} + 1 & \mbox{otherwise}.
   \end{cases}
 \end{equation*}\vspace{-1.em}
\end{framed}
\end{samepage}
Note that the $\Quantile(\delta)$ processes can be implemented as a two-phase procedure: First probe the bin $i_1$ and place the ball there if $i_1$ is not among the $\delta n$ heaviest bins. Otherwise, take a second sample $i_2$ and place the ball there. Since we only need to know whether a bin's rank is above or below a value, the response by a bin can be encoded as a single bit (at the cost of knowing the rank of each bin). The probability allocation vector of $\Quantile(\delta)$ is given by:
\begin{equation*}
    p_{i} =
    \begin{cases}
     \frac{\delta}{n} & \mbox{ if $1 \leq i \leq \delta n$}, \\
     \frac{1+\delta}{n} & \mbox{ if $\delta n + 1 \leq i \leq n$}.
    \end{cases}
\end{equation*}

\subsection{Conditions on Probability Vectors}\label{sec:conditions}

In \cite{LS22Batched}, the weighted \Batched  setting was analyzed for probability allocation vectors satisfying the following two conditions. The first condition says that the process has a small $\eps/n$ bias to place away from overloaded and towards underloaded bins; and the second condition says that no bin has too high probability of being allocated.

\begin{itemize}\itemsep0pt
  \item \textbf{Condition $\mathcal{C}_1$}: There exist constant quantile\footnote{Here constant means that the quantile satisfies $\delta \in (\delta_1, \delta_2)$ for constant $\delta_1, \delta_2 \in (0, 1)$.} $\delta \in (0, 1)$ and (not necessarily constant) $\eps \in (0, 1)$, such that for any $1 \leq k \leq \delta n$,
    \[
    \sum_{i=1}^{k} p_{i} \leq (1 - \epsilon) \cdot \frac{k}{n},
    \]
    and similarly for any $\delta n +1 \leq k \leq n$,
    \[
     \sum_{i=k}^{n} p_i \geq \left(1 + \epsilon \cdot \frac{\delta}{1-\delta} \right) \cdot \frac{n-k+1}{n}.
    \]
 
  \item \textbf{Condition $\mathcal{C}_2$}: There exists a $C > 1$, such that $\max_{i \in [n]} p_i \leq \frac{C}{n}$.

\end{itemize}

In the same paper~\cite[Proposition 7.4]{LS22Batched} it was shown that any process with $\max_{i \in [n]} p_i \geq \frac{1+\eps}{n}$ for $\eps = \Omega(1)$ also has $\Gap(m) = \Omega(b/n)$ for any $b = \Omega(n \log n)$. Therefore, to improve on this asymptotic gao bound, we have to consider processes with $\max_{i\in [n]} p_i = \frac{1 + o(1)}{n}$. In our analysis in \cref{sec:weak_gap,sec:strong_gap} we will make use of the following condition based on the $\ell_\infty$-distance between the probability allocation vector $p$ and the uniform distribution (i.e., \OneChoice):%

\begin{itemize}
 \item \textbf{Condition $\mathcal{C}_3$}:  There exists a $C > 1$, such that
  \[
  \max_{i \in [n]} \left| p_i - \frac{1}{n}\right| \leq \frac{C - 1}{n}.
  \]
\end{itemize}

\noindent Note that this condition implies condition $\mathcal{C}_2$ for the same $C > 1$, but unlike $\mathcal{C}_2$ it imposes both an upper and a lower bound on the $p_i$'s. It is easy to verify that \OnePlusBeta-process satisfies all three conditions. 

\begin{lem} \label{lem:one_plus_beta_c123}
For any $\beta \in (0,1]$, the \OnePlusBeta-process satisfies condition $\mathcal{C}_1$ with $\delta=\frac{1}{4}$ and $\epsilon=\frac{\beta}{2}$, condition $\mathcal{C}_2$ with $C= 1+\beta$ and condition $\mathcal{C}_3$ with $C = 1 + \beta$.
\end{lem}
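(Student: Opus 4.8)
The plan is to verify each of the three conditions directly from the explicit probability allocation vector of the \OnePlusBeta-process, namely $p_i = (1-\beta)\cdot\frac{1}{n} + \beta\cdot\frac{2i-1}{n^2}$ for $i \in [n]$, substituting $\delta = \frac14$ and $\epsilon = \frac{\beta}{2}$ where required. The whole proof is a sequence of elementary algebraic manipulations, so I expect no genuine obstacle; the only mild care needed is in checking the second inequality of $\mathcal{C}_1$ with the asymmetric factor $1 + \epsilon\cdot\frac{\delta}{1-\delta}$, which for $\delta = \frac14$ evaluates to $1 + \frac{\epsilon}{3}$.

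First I would handle $\mathcal{C}_1$. For a prefix sum, $\sum_{i=1}^k p_i = (1-\beta)\cdot\frac{k}{n} + \beta\cdot\frac{k^2}{n^2}$, using $\sum_{i=1}^k (2i-1) = k^2$. Thus $\sum_{i=1}^k p_i = \frac{k}{n}\big(1 - \beta + \beta\cdot\frac{k}{n}\big) = \frac{k}{n}\big(1 - \beta(1 - \frac{k}{n})\big)$. For $k \le \delta n = \frac{n}{4}$ we have $1 - \frac{k}{n} \ge \frac34$, hence $\sum_{i=1}^k p_i \le \frac{k}{n}(1 - \frac{3\beta}{4}) \le \frac{k}{n}(1 - \frac{\beta}{2}) = (1-\epsilon)\cdot\frac{k}{n}$, which is the first inequality. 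For the suffix, since $\sum_{i=1}^n p_i = 1$ and $\sum_{i=1}^{k-1} p_i = \frac{k-1}{n}\big(1 - \beta(1 - \frac{k-1}{n})\big)$, we get $\sum_{i=k}^n p_i = 1 - \frac{k-1}{n} + \beta\cdot\frac{k-1}{n}\big(1 - \frac{k-1}{n}\big) = \frac{n-k+1}{n} + \beta\cdot\frac{k-1}{n}\cdot\frac{n-k+1}{n}$. So $\sum_{i=k}^n p_i = \frac{n-k+1}{n}\big(1 + \beta\cdot\frac{k-1}{n}\big)$, and for $k \ge \delta n + 1 = \frac{n}{4}+1$ we have $\frac{k-1}{n} \ge \frac14$, giving $\sum_{i=k}^n p_i \ge \frac{n-k+1}{n}(1 + \frac{\beta}{4}) \ge \frac{n-k+1}{n}(1 + \frac{\beta}{6}) = \big(1 + \epsilon\cdot\frac{\delta}{1-\delta}\big)\cdot\frac{n-k+1}{n}$, since $\epsilon\cdot\frac{\delta}{1-\delta} = \frac{\beta}{2}\cdot\frac13 = \frac{\beta}{6}$. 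This establishes $\mathcal{C}_1$.

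Next, for $\mathcal{C}_2$ and $\mathcal{C}_3$ it suffices to bound $|p_i - \frac1n|$. We have $p_i - \frac1n = \beta\big(\frac{2i-1}{n^2} - \frac1n\big) = \frac{\beta}{n}\cdot\frac{2i - 1 - n}{n}$. Since $i$ ranges over $[n]$, the quantity $2i - 1 - n$ lies in $[1-n, n-1]$, so $\big|\frac{2i-1-n}{n}\big| \le \frac{n-1}{n} < 1$, whence $\big|p_i - \frac1n\big| \le \frac{\beta}{n} = \frac{(1+\beta)-1}{n}$. This is exactly condition $\mathcal{C}_3$ with $C = 1 + \beta$, and in particular $p_i \le \frac1n + \frac{\beta}{n} = \frac{1+\beta}{n}$ gives condition $\mathcal{C}_2$ with the same $C = 1 + \beta$. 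Collecting the three parts completes the proof.
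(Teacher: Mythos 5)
Your proof is correct and follows essentially the same route as the paper: exact evaluation of the prefix and suffix sums via $\sum_{i=1}^k(2i-1)=k^2$, then substitution of $\delta=\tfrac14$, $\epsilon=\tfrac{\beta}{2}$, and a direct bound on $\bigl|p_i-\tfrac1n\bigr|$ for $\mathcal{C}_3$ (and hence $\mathcal{C}_2$). The only cosmetic differences are that the paper bounds the prefix sum via monotonicity of $p$ rather than computing it exactly, and reduces the $\mathcal{C}_3$ maximum to the endpoints $i=1,n$ rather than bounding $|2i-1-n|/n$ uniformly; both variants are equally valid.
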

\begin{proof}
Recall that for the \OnePlusBeta-process, the probability allocation vector satisfies
\[
 p_i = (1-\beta) \cdot \frac{1}{n} + \beta \cdot \frac{2i-1}{n^2}, \quad \text{for all }i \in [n].
\]
We will first show that $\mathcal{C}_1$ holds with $\delta = 1/4$ and $\eps = \beta/2$. For any $1 \leq k \leq \delta n$, since $p$ is non-decreasing the prefix sums satisfy
\[
 \sum_{i = 1}^{k} p_{i} \leq p_k \cdot k \leq p_{\delta n} \cdot k \leq \left( (1-\beta)  + \beta \cdot (2\delta) \right) \cdot \frac{k}{n} = \left(1 - \frac{\beta}{2}\right) \cdot \frac{k}{n},
\]
Similarly, for any $\delta n + 1 \leq k \leq n$, the suffix sums satisfy 
\begin{align*}
\sum_{i = k}^n p_i 
 & \stackrel{(a)}{=} \frac{n - k + 1}{n} \cdot (1-\beta) + \frac{\beta}{n^2} \cdot (n^2 - (k-1)^2) \\
 & = \frac{n - k + 1}{n} \cdot (1-\beta) + \frac{\beta}{n^2} \cdot (n - k + 1) \cdot (n + k - 1) \\
 & =  \left(1 + \frac{\beta}{n} \cdot (k - 1) \right) \cdot \frac{n - k + 1}{n} \\
 & \stackrel{(b)}{\geq} (1 + \beta \cdot \delta ) \cdot \frac{n - k + 1}{n} \\
 & \stackrel{(c)}{\geq} \left(1 + \eps \cdot \frac{\delta}{1-\delta} \right) \cdot \frac{n - k + 1}{n},
\end{align*}
using in $(a)$ that $\sum_{i = 1}^u (2i - 1) = u^2$, in $(b)$ that $k \geq \delta n + 1$ and in $(c)$ that $\delta = 1/4$ and $\eps = \beta/2$.

Condition $\mathcal{C}_3$ for $C = 1 + \beta$ (and hence $\mathcal{C}_2$ as well) is verified as follows. As $p_i$ is increasing in $i \in [n]$,
\[
\max_{i \in [n]} \left\vert p_i - \frac{1}{n}\right\vert 
 = \max\left\{ \frac{1}{n} - p_1, p_n - \frac{1}{n} \right\} 
 = \frac{\beta}{n} - \frac{\beta}{n^2} 
 \leq \frac{\beta}{n}. \qedhere
\]
\end{proof}

Note that in contrast to \TwoChoice which satisfies $\mathcal{C}_3$ for $C = 2 - \frac{1}{n}$, by choosing $\beta$ small enough we can make the probability allocation vector arbitrarily close to uniform. 

We also note that for any process $\mathcal{P}$ satisfying condition $\mathcal{C}_3$ for some $C > 1$, we can define a process $\mathcal{P}'$ satisfying condition $\mathcal{C}_3$ for $C' \in (1, C)$ by mixing the probability allocation vector of $\mathcal{P}$ with that of \OneChoice with probability $\eta = \frac{C' - 1}{C - 1}$. 

For instance, the $\Quantile(1/2)$ process satisfies condition $\mathcal{C}_3$ for any $C = 1 + 1/2$ (since $\min_{i\in [n]} p_i = \frac{1}{2n}$ and $\max_{i \in [n]} p_i = \frac{3}{2n}$). Therefore, mixing $\Quantile(1/2)$ with \OneChoice with probability $\eta \in [0, 1]$, gives the following probability allocation vector satisfying condition $\mathcal{C}_3$ for $C = 1 + \eta/2$,
\[
p_i = \begin{cases}
\frac{1}{n} \cdot (1 - \eta) + \frac{1}{2n} \cdot \eta = \frac{1}{n} - \frac{\eta}{2n} & \text{if } i \leq \frac{1}{2}n, \\
\frac{1}{n} \cdot (1 - \eta) + \frac{3}{2n} \cdot \eta = \frac{1}{n} + \frac{\eta}{2n} & \text{otherwise}.
\end{cases}
\]

\begin{obs}
The process obtained by mixing $\Quantile(1/2)$ with \OneChoice satisfies condition $\mathcal{C}_1$ with $\delta = 1/2$ and $\eps = \eta/2$, condition $\mathcal{C}_2$ with $C = 1 + \eta/2$ and condition $\mathcal{C}_3$ with $C = 1 + \eta/2$.
\end{obs}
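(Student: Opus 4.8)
The plan is to verify the three conditions directly from the explicit probability allocation vector displayed just before the statement, following the same template as the proof of \cref{lem:one_plus_beta_c123}. Write $p_i = \frac1n - \frac{\eta}{2n}$ for $i \le n/2$ (the ``light half'') and $p_i = \frac1n + \frac{\eta}{2n}$ for $i > n/2$ (the ``heavy half''); note that $p$ is non-decreasing and $\sum_{i\in[n]} p_i = 1$. All three verifications then reduce to short arithmetic.

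For $\mathcal{C}_3$ (and hence $\mathcal{C}_2$), I would observe that $p_1 = \cdots = p_{n/2}$ and $p_{n/2+1} = \cdots = p_n$ are the two extreme values, so $\max_{i\in[n]} \bigl| p_i - \tfrac1n \bigr| = \tfrac{\eta}{2n} = \tfrac{C-1}{n}$ with $C = 1 + \eta/2$; likewise $\max_{i\in[n]} p_i = \tfrac{1+\eta/2}{n} = \tfrac{C}{n}$, giving $\mathcal{C}_2$ with the same $C$ (as already remarked, $\mathcal{C}_3$ implies $\mathcal{C}_2$).

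For $\mathcal{C}_1$ with $\delta = 1/2$ and $\eps = \eta/2$: for the prefix inequality take $1 \le k \le \delta n = n/2$, so $p_1, \dots, p_k$ all equal $\frac1n - \frac{\eta}{2n}$ and $\sum_{i=1}^{k} p_i = (1-\eta/2)\cdot \frac{k}{n} = (1-\eps)\cdot\frac{k}{n}$, with equality. For the suffix inequality take $\delta n + 1 \le k \le n$, so $p_k, \dots, p_n$ all equal $\frac1n + \frac{\eta}{2n}$ and $\sum_{i=k}^{n} p_i = (1+\eta/2)\cdot\frac{n-k+1}{n}$; since $\delta = 1/2$ gives $\frac{\delta}{1-\delta} = 1$, the required bound is $(1+\eps)\cdot\frac{n-k+1}{n} = (1+\eta/2)\cdot\frac{n-k+1}{n}$, so again equality holds.

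There is essentially no obstacle here; the only point to flag is the integrality of $\delta n = n/2$, which is harmless --- one may assume $n$ even, or replace $n/2$ by $\lfloor n/2\rfloor$ and lose only lower-order terms in the $\mathcal{C}_1$ constants (irrelevant downstream, as $\eps$ is allowed to be non-constant). One should also keep $\eta \in (0,1]$ so that $\eps = \eta/2 \in (0,1)$ as required in the statement of $\mathcal{C}_1$.
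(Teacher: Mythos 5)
Your verification is correct and is exactly the direct computation the paper intends (the observation is stated without proof, immediately after the explicit formula for $p_i$, and your argument mirrors the template of \cref{lem:one_plus_beta_c123}): both $\mathcal{C}_1$ inequalities hold with equality for $\delta = 1/2$, $\eps = \eta/2$, and $\mathcal{C}_3$ (hence $\mathcal{C}_2$) holds with $C = 1 + \eta/2$ since the two values of $p_i$ deviate from $1/n$ by exactly $\eta/(2n)$. Your remarks on the integrality of $n/2$ and on requiring $\eta \in (0,1]$ so that $\eps \in (0,1)$ are appropriate and harmless caveats.
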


\subsection{Weighted and Batched Settings}

As in \cite{LS22Batched}, we now extend the definitions of \cref{sec:basic_notation} and \cref{sec:processes} to \emph{weighted balls} and later to the \emph{batched setting}. To this end, let
 $w^t \geq 0$ be the weight of the $t$-th ball to be allocated for $t \geq 1$. By $W^{t}$ we denote the total weights of all balls allocated after the first $t \geq 0$ allocations, so $W^t := \sum_{i=1}^n x_i^{t} = \sum_{s=1}^t w^s$. The normalized loads are $\tilde{x}_i^{t} := x_i^t - \frac{W^t}{n}$, and with $y_i^t$ being again the decreasingly sorted, normalized load vector, we have $\Gap(t)=y_1^t$. 

The weight of each ball will be drawn  independently from a fixed distribution $\mathcal{W}$ over $[0,\infty)$. Following~\cite{PTW15}, we assume that the distribution $\mathcal{W}$ satisfies:
\begin{itemize}
  \item $\ex{\mathcal{W}} = 1$.
  \item $\ex{e^{\zeta \mathcal{W}} } < \infty $ for some $\zeta > 0$.
\end{itemize}
Specific examples of distributions satisfying above conditions (after scaling) are the geometric, exponential, binomial and Poisson distributions.

In the analysis we will be using the following property (see also \cite{PTW15}) and refer to these distributions as $\FiniteMgf(\zeta)$ (or $\FiniteMgf(S)$):
\begin{lem}[{\cite[Lemma 2.4]{LS22Batched}}] \label{lem:bounded_weight_moment}
There exists $S := S(\zeta) \geq \max\{1, 1/\zeta\}$, such that for any  $\gamma \in (0, \min\{\zeta/2, 1\})$ and any $\kappa \in [-1,1]$,
\[
\Ex{e^{\gamma \cdot \kappa \cdot \mathcal{W}}} \leq 1 + \gamma \cdot \kappa + S \gamma^2 \cdot \kappa^2.
\]
\end{lem}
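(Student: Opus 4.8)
The statement asserts existence of a constant $S = S(\zeta)$ controlling the MGF of $\gamma\kappa\mathcal{W}$ by a quadratic in $\gamma\kappa$, uniformly over $\kappa \in [-1,1]$ and small $\gamma$. This is a standard finite-MGF bound, and the plan is to reduce it to a one-variable Taylor/convexity estimate on the function $\phi(u) := \Ex{e^{u\mathcal{W}}}$, which is finite and smooth on $(-\infty, \zeta)$ since $\Ex{e^{\zeta\mathcal{W}}} < \infty$.

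First I would observe that it suffices to bound $\Ex{e^{u\mathcal{W}}}$ by $1 + u + S u^2$ for all $u \in [-\gamma, \gamma] \subseteq [-1,1]$ with $\gamma < \zeta/2$, since setting $u = \gamma\kappa$ recovers the claim (note $u^2 = \gamma^2\kappa^2$ and $u = \gamma\kappa$). For $u \le 0$ this is essentially immediate: $e^{u\mathcal{W}} \le 1 + u\mathcal{W} + \tfrac12 u^2 \mathcal{W}^2 e^{|u|\mathcal{W}}$ on $\mathcal{W} \ge 0$ is not quite clean, so instead I would use $e^{x} \le 1 + x + \tfrac{x^2}{2}e^{\max(x,0)}$ for all real $x$, apply it with $x = u\mathcal{W}$, take expectations, use $\Ex{\mathcal{W}} = 1$, and bound $\tfrac12 \Ex{\mathcal{W}^2 e^{\max(u\mathcal{W},0)}}$. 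For $u \in [-\gamma,\gamma]$ we have $\max(u\mathcal{W},0) \le \gamma\mathcal{W} \le (\zeta/2)\mathcal{W}$, so this expectation is at most $\tfrac12 \Ex{\mathcal{W}^2 e^{(\zeta/2)\mathcal{W}}}$, which is a finite constant depending only on $\zeta$ (finiteness follows because $\mathcal{W}^2 e^{(\zeta/2)\mathcal{W}} \le c_\zeta\, e^{\zeta\mathcal{W}}$ for a suitable $c_\zeta$, and $\Ex{e^{\zeta\mathcal{W}}} < \infty$). Call this constant $S_0 := \tfrac12 \Ex{\mathcal{W}^2 e^{(\zeta/2)\mathcal{W}}}$.

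Then I would set $S := \max\{S_0,\, 1,\, 1/\zeta\}$ so that both the required lower bound $S \ge \max\{1, 1/\zeta\}$ and the inequality $\Ex{e^{u\mathcal{W}}} \le 1 + u + S u^2$ hold for all $u \in [-\gamma,\gamma]$, hence in particular for $u = \gamma\kappa$ with $\gamma \in (0,\min\{\zeta/2,1\})$ and $\kappa \in [-1,1]$. This gives exactly $\Ex{e^{\gamma\kappa\mathcal{W}}} \le 1 + \gamma\kappa + S\gamma^2\kappa^2$.

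The only mild obstacle is making the elementary inequality $e^x \le 1 + x + \tfrac{x^2}{2}e^{\max(x,0)}$ (or an equivalent one-sided Taylor remainder bound) precise and then handling the integrability of $\mathcal{W}^2 e^{(\zeta/2)\mathcal{W}}$ cleanly; both are routine, so I expect no real difficulty. Since the lemma is quoted as \cite[Lemma 2.4]{LS22Batched}, the actual proof in the paper is presumably just a citation, and the above is the self-contained argument it abbreviates.
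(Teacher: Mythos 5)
Your proof is correct. The paper itself gives no proof of this lemma — it is imported verbatim from \cite{LS22Batched} — and your argument (the Taylor-remainder inequality $e^x \le 1 + x + \frac{x^2}{2}e^{\max(x,0)}$ applied with $x = \gamma\kappa\mathcal{W}$, combined with $\Ex{\mathcal{W}}=1$ and the finiteness of $S_0 = \frac{1}{2}\Ex{\mathcal{W}^2 e^{(\zeta/2)\mathcal{W}}}$, which follows from $w^2 e^{-(\zeta/2)w}$ being bounded on $[0,\infty)$ and $\Ex{e^{\zeta\mathcal{W}}}<\infty$) is exactly the standard derivation the citation abbreviates; taking $S=\max\{S_0,1,1/\zeta\}$ is harmless since enlarging $S$ only weakens the bound.
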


We will now describe the allocation of weighted balls into bins using a batch size of $ b \geq n$. For the sake of concreteness, let us first describe the \Batched setting if the allocation is done using \TwoChoice. For a given batch size consisting of $b$ consecutive balls, each ball of the batch performs the following. First, it samples two bins $i_1$ and $i_2$ independently and uniformly at random, and compares the load the two bins had at the beginning of the batch (let us denote the bin which has less load by $i_{\min}$). Secondly, a weight is sampled from the distribution $\mathcal{W}$. Then a weighted ball is added to bin $i_{\min}$. Recall that since the load information is only updated at the beginning of the batch, all allocations of the $b$ balls within the same batch can be performed in parallel.

In the following, we will use a more general framework, where the process of sampling (one or more) bins and then deciding where to allocate the ball to is described by a probability allocation vector $p$ over the $n$ bins (\cref{sec:basic_notation}). Also for the analysis, it will be convenient to focus on the normalized and sorted load vector $y$, which is why the definition below is based on $y$ rather than the actual load vector $x$.

\begin{samepage}
\begin{framed}
\vspace{-.45em} \noindent
\underline{\Batched Setting with Weights}\\
\textsf{Parameters:} Batch size $b \geq n$, probability allocation vector $p$, weight distribution $\mathcal{W}$.
\\
\textsf{Iteration:} For each $t = 0 \cdot b, 1 \cdot b, 2 \cdot b, \ldots$:
\begin{enumerate}\itemsep0pt
    \item Sample $b$ bins $i_1,i_2,\ldots,i_b$ from $[n]$ following $p$.
    \item Sample $b$ weights $w^{t+1},w^{t+2},\ldots,w^{t+b}$ from $\mathcal{W}$.
    \item Update for each bin $i \in [n]$, 
    \[
    z_{i}^{t+b}=y_{i}^{t} + \sum_{j=1}^b w^{t+j} \cdot \mathbf{1}_{i_j=i} - \frac{1}{n} \cdot \sum_{j=1}^b w^{t+j}.
    \]
    \item Let $y^{t+b}$ be the vector $z^{t+b}$, sorted decreasingly.
\end{enumerate}
\end{framed}
\end{samepage}

We also look at the version of the processes that perform random tie-breaking between bins of the same load. For $b = 1$, this makes no observable difference to the process, but for multiple steps, this effectively averages out the probability over (possibly) multiple bins that have the same load. This would, for instance, correspond to \TwoChoice, randomly deciding between the two bins if they have the same load. In particular, if $p$ is the original probability allocation vector, then the one with random tie-breaking is $\tilde{p}(y^t)$ (for $t$ being the beginning of the batch), where
\begin{equation} \label{eq:averaging_pi}
\tilde{p}_i(y^t) := \frac{1}{|\{ j \in [n] : y_j^t = y_i^t \}|} \cdot \sum_{j \in [n] : y_j^t = y_i^t} p_j, \quad \text{for all}i \in [n].
\end{equation}

\begin{samepage}
\begin{framed}
\vspace{-.45em} \noindent
\underline{\Batched Setting with Weights and Random Tie-Breaking}\\
\textsf{Parameters:} Batch size $b \geq n$, probability allocation vector $p$, weight distribution $\mathcal{W}$.
\\
\textsf{Iteration:} For each $t = 0 \cdot b, 1 \cdot b, 2 \cdot b, \ldots$:
\begin{enumerate}\itemsep0pt
    \item Let $\tilde{p} := \tilde{p}(y^t)$ be the probability allocation vector accounting for random tie-breaking.
    \item Sample $b$ bins $i_1,i_2,\ldots,i_b$ from $[n]$ following $\tilde{p}$.
    \item Sample $b$ weights $w^{t+1},w^{t+2},\ldots,w^{t+b}$ from $\mathcal{W}$.
    \item Update for each bin $i \in [n]$, 
    \[
    z_{i}^{t+b}=y_{i}^{t} + \sum_{j=1}^b w^{t+j} \cdot \mathbf{1}_{i_j=i} - \frac{1}{n} \cdot \sum_{j=1}^b w^{t+j}.
    \]
    \item Let $y^{t+b}$ be the vector $z^{t+b}$, sorted decreasingly.
\end{enumerate}
\end{framed}
\end{samepage}

\section{Warm-up: \texorpdfstring{$\Oh(\sqrt{b/n} \cdot \log n)$}{O(sqrt(b/n) log n} Gap} \label{sec:weak_gap}

In this section, we will refine the analysis of \cite[Section 4]{LS22Batched} to prove an $\Oh(\sqrt{b/n} \cdot \log n)$ bound on the gap for a family of processes. This will also be used as a starting point for the analysis in \cref{sec:strong_gap} to obtain the tighter bound. The main theorem that we prove is the following.

\begin{thm} \label{thm:herd_weak_gap_bound}
Consider any allocation process with probability allocation vector $p^t$ satisfying conditions $\mathcal{C}_1$ for constant $\delta \in (0, 1)$ and (not necessarily constant) $\eps \in (0,1)$ as well as condition $\mathcal{C}_3$ for some $C \in (1, 1.9)$, at every step $t \geq 0$. Further, consider the weighted \Batched setting with weights from a $\FiniteMgf(S)$ distribution with $S \geq 1$ and a batch size $b \geq \frac{2CS}{(C-1)^2} \cdot n$.
Then, there exists a constant $k := k(\delta) > 0$, such that for any step $m \geq 0$ being a multiple of $b$,
\[
\Pro{\max_{i \in [n]} |y_i^m| \leq k \cdot \frac{(C-1)^2}{\epsilon} \cdot \frac{b}{n} \cdot \log n } \geq 1 - n^{-2}.
\]
\end{thm}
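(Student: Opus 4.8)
The plan is to bound the gap via the hyperbolic cosine (exponential) potential $\Phi^t := \sum_{i \in [n]} \big( e^{\alpha y_i^t} + e^{-\alpha y_i^t} \big)$ for a smoothing parameter $\alpha$ that I would choose of order $\frac{\epsilon}{(C-1)^2} \cdot \frac{n}{b}$ (so that $\alpha \cdot \frac{b}{n}$ is a small constant times $\frac{\epsilon}{C-1}$, and in particular $\alpha b/n \le 1$ after using $b \ge \frac{2CS}{(C-1)^2} n$). The first step is a one-batch drop inequality: conditioning on $\mathfrak{F}^t$ at the start of a batch, I want to show that for a suitable constant $c>0$,
\[
\Ex{\Phi^{t+b} \mid \mathfrak{F}^t} \le \Phi^t \cdot \Big(1 - \tfrac{c \alpha \epsilon b}{n}\Big) + c' n,
\]
which by iterating over all batches up to step $m$ gives $\Ex{\Phi^m} \le c'' n / (\alpha \epsilon b/n) = O\!\big(\frac{(C-1)^2}{\epsilon}\cdot n\big)$ (absorbing constants), and then a union bound / Markov inequality on $\Phi^m \ge n^3$ yields $\max_i |y_i^m| \le \frac{1}{\alpha}\log(n^3) = O\!\big(\frac{(C-1)^2}{\epsilon}\cdot\frac{b}{n}\cdot\log n\big)$ with probability $\ge 1-n^{-2}$, which is exactly the claimed bound with $k=k(\delta)$.

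To prove the one-batch drop I would decompose the change in each coordinate over the batch into the ``random'' part $\sum_{j=1}^b w^{t+j}\mathbf{1}_{i_j=i}$ and the deterministic drift $-\frac{1}{n}\sum_j w^{t+j}$, and analyze the overloaded and underloaded bins separately exactly as in \cite[Lemma 4.1]{LS22Batched}. For an overloaded bin $i$ (among the top $\delta n$), Condition $\mathcal{C}_1$ gives that the expected load increment is at most $\frac{1-\epsilon}{n}\cdot(\text{batch weight})$ relative to the mean $\frac1n\cdot(\text{batch weight})$, so there is a genuine negative drift of order $-\frac{\epsilon b}{n}$ on $y_i$; the key point is to control the \emph{variance} contribution, i.e.\ the second-order term in the Taylor/MGF expansion of $\ex{e^{\alpha \Delta_i}}$. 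Here is where $\mathcal{C}_3$ (rather than just $\mathcal{C}_2$) and the refinement over \cite{LS22Batched} enter: because $\max_i |p_i - 1/n| \le (C-1)/n$, the increment $\Delta_i$ is the sum of $b$ nearly-fair Bernoulli$(\approx 1/n)$ weighted terms, so $\mathrm{Var}(\Delta_i) = O\big(\frac{b}{n}\cdot S\big)$ and higher moments are controlled by $\FiniteMgf(S)$ via \cref{lem:bounded_weight_moment}; the second-order term in the exponential is then $O(\alpha^2 \cdot \frac{bS}{n})$ per bin, which must be dominated by the first-order drift term $\alpha \cdot \frac{\epsilon b}{n}$ — this forces $\alpha = O\big(\frac{\epsilon}{S}\big)$, consistent with the choice above once one uses $C<1.9$ and $b \ge \frac{2CS}{(C-1)^2}n$ to keep all the relevant quantities ($\alpha b/n$, $\alpha (C-1)$, etc.) bounded. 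The underloaded bins are handled symmetrically using the suffix-sum part of $\mathcal{C}_1$, contributing to the $e^{-\alpha y_i}$ half of the potential; bins with $y_i$ near $0$ contribute only $O(1)$ each to $\Phi$, giving the additive $O(n)$ term.

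A technical wrinkle I would need to address is that the sorting step and the random tie-breaking vector $\tilde p(y^t)$ mean the relevant probability vector is $\tilde p$, not $p$; but averaging preserves the prefix/suffix-sum inequalities of $\mathcal{C}_1$ and the $\ell_\infty$ bound of $\mathcal{C}_3$, so the drop inequality still applies verbatim — and since $\Phi$ is symmetric, I can do the whole computation on the unsorted $z^{t+b}$ and only sort at the end. Another point is that the batch is large ($b \ge \Omega(n)$), so the per-coordinate increments are not small, and one cannot linearize $e^{\alpha \Delta_i}$ naively; instead I would split $\Delta_i = \Delta_i^{(1)} + \cdots$ into the per-ball contributions, use independence across balls to write $\ex{e^{\alpha\Delta_i}\mid \mathfrak F^t} = \prod_{j=1}^b \ex{e^{\alpha(w^{t+j}\mathbf{1}_{i_j=i} - w^{t+j}/n)}}$, bound each factor by $1 + \alpha(p_i - 1/n)\cdot 1 + O(\alpha^2 S/n)$ using \cref{lem:bounded_weight_moment} (legitimate since $\alpha \le 1$ and $\alpha \le \zeta/2$ after our choice), and then use $\prod(1+a_j) \le \exp(\sum a_j)$. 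This product form is really the crux, and it is exactly what makes the large-batch case tractable.

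The main obstacle is the second-order/variance bookkeeping: making sure the constants line up so that the negative drift $\alpha \epsilon b/n$ from $\mathcal{C}_1$ strictly beats the positive variance term $O(\alpha^2 b S/n)$ and the positive contribution from the $O(n)$ boundary term, \emph{uniformly} over the whole admissible range $C \in (1,1.9)$, $b \ge \frac{2CS}{(C-1)^2}n$, and non-constant $\epsilon$. In particular one has to be careful that the condition $b \ge \frac{2CS}{(C-1)^2}n$ is exactly what guarantees $\alpha$ can be taken so that $\alpha \cdot \frac{b}{n} \cdot S \le \frac{1}{2}\epsilon$ (say), which closes the loop; getting the dependence on $(C-1)$ right (it appears squared in the final bound) requires tracking that $\alpha \propto \frac{\epsilon}{(C-1)^2}\cdot\frac{n}{b}$ and not just $\alpha \propto \epsilon$.
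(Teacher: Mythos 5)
Your proposal follows essentially the same route as the paper's proof: the hyperbolic cosine potential with smoothing parameter $\Theta\big(\frac{\eps}{(C-1)^2}\cdot\frac{n}{b}\big)$, a per-batch drop inequality obtained by factoring the moment generating function over the $b$ independent balls and bounding each factor via \cref{lem:bounded_weight_moment} together with condition $\mathcal{C}_3$ (exactly how \cref{lem:herd_batching_pot_changes} refines \cite[Lemma 4.1]{LS22Batched} to replace $C^2S^2$ by $(C-1)^2$), followed by iterating the drop (packaged in the paper as \cref{thm:hyperbolic_cosine_expectation}) and applying Markov's inequality at an $n^{3}$ threshold. This is correct; the only quibble is that the per-ball variance constraint you actually need is $\alpha S = O(\eps)$ rather than $\alpha\cdot\frac{b}{n}\cdot S\leq\frac{1}{2}\eps$ (the latter fails for large $S$), and the former does follow from $b\geq\frac{2CS}{(C-1)^2}\cdot n$ with your stated choice of $\alpha$, as you correctly note earlier in the write-up.
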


Recall that by \cref{lem:one_plus_beta_c123}, the \OnePlusBeta-process satisfies condition $\mathcal{C}_1$ with $\eps = \frac{\beta}{2}$ and $\delta = \frac{1}{4}$, and conditions $\mathcal{C}_2$ and $\mathcal{C}_3$ with $C = 1 + \beta$.

In particular, by choosing $\beta = \Theta\big(\sqrt{n/b}\big)$ we get a process that is asymptotically better than \TwoChoice and which is within just a $\sqrt{\log n}$ multiplicative factor from the optimal gap bound proven for unit weights in \cref{sec:lower_bounds}.
\begin{cor}\label{cor:weak_bound}
Let $b \geq n \log n$ and consider the weighted \Batched setting with weights from a $\FiniteMgf(S)$ distribution with $S \in [1, b/4n]$. Then, there exists a constant $k > 0$ such that for the \OnePlusBeta-process with $\beta = \sqrt{4S \cdot \frac{n}{b}}$ and for any step $m \geq 0$ being a multiple of $b$,
\[
\Pro{\Gap(m) \leq k \cdot \sqrt{\frac{Sb}{n}} \cdot \log n} \geq 1 - n^{-2}.
\]
\end{cor}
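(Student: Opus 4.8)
The plan is to obtain \cref{cor:weak_bound} as a direct instantiation of \cref{thm:herd_weak_gap_bound} for the \OnePlusBeta-process, followed by a one-line simplification of the resulting gap bound under the choice $\beta=\sqrt{4S\cdot n/b}$.

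First I would feed the \OnePlusBeta-process into the theorem. By \cref{lem:one_plus_beta_c123}, its (time-invariant) probability allocation vector satisfies $\mathcal{C}_1$ with $\delta=\frac{1}{4}$ and $\eps=\frac{\beta}{2}$, and $\mathcal{C}_3$ with $C=1+\beta$; since the vector does not change over time, these hold at every step $t\geq 0$. For $\beta=\sqrt{4Sn/b}$, the assumption $S\leq b/(4n)$ gives $\beta\leq 1$, so $\eps=\beta/2\in(0,1)$ and $C=1+\beta\in(1,2)$; to be inside the window $C\in(1,1.9)$ demanded by \cref{thm:herd_weak_gap_bound}, it suffices to shrink the admissible upper bound on $S$ (equivalently, on $\beta$) by a fixed constant factor, which affects none of the asymptotics, so I would simply absorb this into the final constant. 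It then remains to check the batch-size requirement $b\geq \frac{2CS}{(C-1)^2}\,n$: here $C-1=\beta$ and $\beta^2=4Sn/b$, hence $\frac{2CS}{(C-1)^2}\,n=\frac{2(1+\beta)S}{4Sn/b}\,n=\frac{1+\beta}{2}\,b\leq b$ because $\beta\leq1$, so the requirement holds.

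With all hypotheses verified, \cref{thm:herd_weak_gap_bound} supplies a constant $k_0:=k_0(1/4)$ (hence an absolute constant) such that, for every $m\geq 0$ that is a multiple of $b$, with probability at least $1-n^{-2}$,
\[
\max_{i\in[n]}|y_i^m|\;\leq\; k_0\cdot\frac{(C-1)^2}{\eps}\cdot\frac{b}{n}\cdot\log n .
\]
I would then simplify the prefactor using $C-1=\beta$ and $\eps=\beta/2$, so $\frac{(C-1)^2}{\eps}=\frac{\beta^2}{\beta/2}=2\beta=4\sqrt{Sn/b}$, whence the right-hand side equals $4k_0\sqrt{Sn/b}\cdot\frac{b}{n}\cdot\log n=4k_0\sqrt{Sb/n}\cdot\log n$. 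Since $\Gap(m)=y_1^m\leq\max_{i\in[n]}|y_i^m|$, taking $k:=4k_0$ finishes the argument. There is no genuinely hard step here: the statement is a corollary obtained essentially by substitution, and the only care needed is in the constant-level bookkeeping — matching $\delta,\eps,C$ of \cref{thm:herd_weak_gap_bound} to the \OnePlusBeta-process through \cref{lem:one_plus_beta_c123}, reconciling the constraint $C<1.9$ with the stated range $S\in[1,b/(4n)]$, and verifying the batch-size inequality. The hypothesis $b\geq n\log n$ is not used beyond keeping the corollary in the intended regime (and ensuring $\beta=o(1)$ when $S$ is constant).
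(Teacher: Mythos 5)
Your proposal is correct and is exactly the derivation the paper intends: the corollary is obtained by plugging the \OnePlusBeta-process into \cref{thm:herd_weak_gap_bound} via \cref{lem:one_plus_beta_c123} (with $\delta=1/4$, $\eps=\beta/2$, $C=1+\beta$), checking $b\geq\frac{2CS}{(C-1)^2}n$ as you do, and simplifying $\frac{(C-1)^2}{\eps}\cdot\frac{b}{n}=2\beta\cdot\frac{b}{n}=4\sqrt{Sb/n}$. The only wrinkle is the one you already flag — at the extreme $S=b/(4n)$ one gets $\beta=1$ and $C=2$, just outside the theorem's window $C\in(1,1.9)$ — and this is a (harmless) looseness in the paper's own statements rather than in your argument, since in that constant-$\beta$ regime the claimed bound degenerates to the known $\Oh((b/n)\log n)$ guarantee anyway.
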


The analysis is based on the \textit{hyperbolic cosine potential} which is defined for smoothing parameter $\gamma > 0$ as
\begin{align}
\Gamma^t := \Gamma^t(\gamma) := \Phi^t + \Psi^t := \sum_{i = 1}^n e^{\gamma y_i^t} + \sum_{i = 1}^n e^{-\gamma y_i^t}. \label{eq:hyperbolic}
\end{align}
We also decompose $\Gamma^t$ by defining
\[
 \Gamma_i^t := \Phi_i^t + \Psi_i^t = e^{\gamma y_i^t} + e^{-\gamma y_i^t}, \quad \text{for any bin $i \in [n]$}.
\]
Further, we use the following shorthands to denote the changes in the potentials over one step $\Delta\Phi_i^{t+1} := \Phi_i^{t+1} - \Phi_i^t$, $\Delta\Psi_i^{t+1} := \Psi_i^{t+1} - \Psi_i^{t}$ and $\Delta\Gamma_i^{t+1} := \Gamma_i^{t+1} - \Gamma_i^{t}$.

We will make use of the following drift theorem shown in \cite{LS22Batched}. Note that in statement of the theorem, \textit{rounds} could consist of multiple single-step allocations and in that case $p^t$ is not necessarily the probability allocation vector, but it could be a probability vector giving an estimate for the ``average number of balls'' allocated to a bin.

\newcommand{\MainHyperbolicCosineExpectation}{
Consider any allocation process $\mathcal{P}$ and a probability vector $p^t$ satisfying condition $\mathcal{C}_1$ for some constant $\delta \in (0, 1)$ and some $\eps \in (0, 1)$ at every round $t \geq 0$. Further assume that there exist $K > 0$, $\gamma \in \big(0, \min\big\{1, \frac{\eps\delta}{8K}\big\} \big]$ and $R > 0$, such that for any round $t \geq 0$, process $\mathcal{P}$ satisfies for potentials $\Phi := \Phi(\gamma)$ and $\Psi := \Psi(\gamma)$ that,
\[
\sum_{i = 1}^n \Ex{\left. \Delta\Phi_i^{t+1} \,\right|\, \mathfrak{F}^t} \leq \sum_{i = 1}^n \Phi_i^t \cdot \left(\left(p_i^t - \frac{1}{n}\right) \cdot R \cdot \gamma + K \cdot R \cdot \frac{\gamma^2}{n}\right),
\]
and
\[
\sum_{i = 1}^n \Ex{\left.\Delta\Psi_i^{t+1} \,\right|\, \mathfrak{F}^t} \leq  \sum_{i = 1}^n \Psi_i^t \cdot \left(\left(\frac{1}{n} - p_i^t\right) \cdot R \cdot \gamma + K \cdot R \cdot \frac{\gamma^2}{n}\right).
\]
Then, there exists a constant $c := c(\delta) > 0$, such that for $\Gamma := \Gamma(\gamma)$ and any round $t \geq 0$,
\[
\Ex{\left. \Delta\Gamma^{t+1} \,\right|\, \mathfrak{F}^t} \leq - \Gamma^t \cdot R \cdot \frac{\gamma\eps\delta}{8n} + R \cdot c\gamma\eps,
\]
and
\[
\Ex{\Gamma^t} \leq \frac{8c}{\delta} \cdot n.
\]}

\begin{thm}[{cf.~\cite[Theorem 3.1]{LS22Batched}}] \label{thm:hyperbolic_cosine_expectation}
\MainHyperbolicCosineExpectation
\end{thm}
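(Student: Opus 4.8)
The plan is to reduce everything to a single scalar drift inequality for $\Gamma$ and then run the standard hyperbolic‑cosine drift argument. First I would add the two hypotheses: since $\Delta\Gamma_i^{t+1}=\Delta\Phi_i^{t+1}+\Delta\Psi_i^{t+1}$, summing over $i\in[n]$ and grouping the $\Phi_i^t$‑terms (which carry the coefficient $(p_i^t-\tfrac1n)R\gamma+KR\gamma^2/n$) with the $\Psi_i^t$‑terms (which carry $(\tfrac1n-p_i^t)R\gamma+KR\gamma^2/n$) gives
\[
\Ex{\Delta\Gamma^{t+1}\mid\mathfrak{F}^t}\;\le\;R\gamma\sum_{i=1}^n\Bigl(p_i^t-\tfrac1n\Bigr)\bigl(\Phi_i^t-\Psi_i^t\bigr)\;+\;KR\,\frac{\gamma^2}{n}\,\Gamma^t .
\]
Write $s_i:=\Phi_i^t-\Psi_i^t=2\sinh(\gamma y_i^t)$; since $y^t$ is sorted decreasingly and $\sinh$ is increasing, $(s_i)_{i\in[n]}$ is non‑increasing, non‑negative on the overloaded bins and non‑positive on the underloaded ones.

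The heart of the proof, and the step I expect to be the main obstacle, is to show that condition $\mathcal{C}_1$ forces
\[
\sum_{i=1}^n\Bigl(p_i^t-\tfrac1n\Bigr)s_i\;\le\;-\,\frac{\eps\delta}{4n}\,\Gamma^t+c_1\eps
\]
for some constant $c_1:=c_1(\delta)\ge\delta/4$. The tool is Abel summation: with $q_i:=p_i^t-\tfrac1n$ and $Q_k:=\sum_{i\le k}q_i$ one has $\sum_i q_is_i=\sum_{k=1}^{n-1}Q_k(s_k-s_{k+1})$ (using $Q_n=0$), and every factor $s_k-s_{k+1}$ is $\ge 0$. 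Condition $\mathcal{C}_1$ gives $Q_k\le-\eps k/n$ for $k\le\delta n$ and, rewriting the suffix bound via $Q_{k-1}=-\sum_{i\ge k}q_i$, also $Q_k\le-\eps\tfrac{\delta}{1-\delta}\cdot\tfrac{n-k}{n}$ for $k\ge\delta n$; in particular $Q_k\le 0$ throughout, so the sum is already nonpositive. For the quantitative bound one compares this against $\Gamma^t\le 2n+\sum_i|s_i|$ (from $\cosh(x)\le 1+|\sinh(x)|$) together with the dual Abel identity $\sum_i|s_i|\le\sum_{k=1}^{n-1}(s_k-s_{k+1})\,(n-|k-n^+|)$, where $n^+$ is the number of overloaded bins; the remaining work is an elementary but careful case analysis splitting on whether $n^+\le\delta n$, the additive $c_1\eps$ accounting for the $O(n)$ contribution to $\Gamma^t$ from the near‑balanced bins on which $\mathcal{C}_1$ gives no leverage. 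This is the batched analogue of the key drift lemma of~\cite{PTW15,LS22Batched}.

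Given this estimate the rest is routine. Substituting it into the first display and using $\gamma\le\eps\delta/(8K)$ to bound $KR\gamma^2\Gamma^t/n=R\gamma\Gamma^t\cdot(K\gamma/n)\le R\gamma\Gamma^t\cdot\eps\delta/(8n)$ — which cancels exactly half of the negative term — yields
\[
\Ex{\Delta\Gamma^{t+1}\mid\mathfrak{F}^t}\;\le\;-\,\Gamma^t\,R\,\frac{\gamma\eps\delta}{8n}\;+\;R\,c\,\gamma\eps
\]
with $c:=c_1$, which is the first claimed inequality. For the second, set $a:=R\gamma\eps\delta/(8n)$ and $\mathcal{G}^t:=\Gamma^t-\tfrac{8c}{\delta}n$; the displayed drift rewrites as $\Ex{\Gamma^{t+1}\mid\mathfrak{F}^t}\le(1-a)\Gamma^t+a\cdot\tfrac{8c}{\delta}n$, i.e.\ $\Ex{\mathcal{G}^{t+1}\mid\mathfrak{F}^t}\le(1-a)\mathcal{G}^t$. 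Since $\Gamma^0=2n\le\tfrac{8c}{\delta}n$ (using $c\ge\delta/4$) we have $\mathcal{G}^0\le 0$, and since $a\le 1$ (which follows from the hypotheses), induction on $t$ gives $\Ex{\mathcal{G}^t}\le 0$, i.e.\ $\Ex{\Gamma^t}\le\tfrac{8c}{\delta}n$, completing the proof.
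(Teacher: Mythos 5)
Your opening reduction and your closing induction are fine and match the standard route: summing the two hypotheses to get $\Ex{\Delta\Gamma^{t+1}\mid\mathfrak{F}^t}\le R\gamma\sum_i(p_i^t-\tfrac1n)(\Phi_i^t-\Psi_i^t)+KR\tfrac{\gamma^2}{n}\Gamma^t$, absorbing the $K\gamma^2$ term via $\gamma\le\tfrac{\eps\delta}{8K}$, and then the geometric-drift argument for $\Ex{\Gamma^t}\le\tfrac{8c}{\delta}n$ are all correct. But the entire content of the theorem sits in the step you defer, namely the deterministic inequality $\sum_i(p_i^t-\tfrac1n)\,s_i\le-\tfrac{\eps\delta}{4n}\Gamma^t+c_1\eps$ with $s_i=2\sinh(\gamma y_i^t)$, and your sketch of it does not close. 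The two estimates you propose to combine — the Abel bound $\sum_i q_is_i\le-\tfrac{\eps}{n}\sum_k\min\{k,\tfrac{\delta}{1-\delta}(n-k)\}(s_k-s_{k+1})$ and the bound $\Gamma^t\le 2n+\sum_k\bigl(n-|k-n^+|\bigr)(s_k-s_{k+1})$ — cannot be played off against each other by a case split on $n^+$ alone: the weights do not compare (at $k=n^+$ small the first weight is $O(n^+)$ while the second is $n$), and, more fundamentally, the target inequality is \emph{false} if one does not use $\sum_i y_i^t=0$, which your case analysis never invokes. Concretely, drop the normalization and take $y_1=a>0$ small, $y_2=\cdots=y_n=-A$ with $A$ huge, together with the extremal $\mathcal{C}_1$ vector ($p_i=\tfrac{1-\eps}{n}$ for $i\le\delta n$ and $p_i=\tfrac1n+\tfrac{\eps\delta}{(1-\delta)n}$ otherwise): the left-hand side equals $-\tfrac{\eps}{n}\bigl(s_1+2\sinh(\gamma A)\bigr)\approx-\tfrac{\eps}{n}e^{\gamma A}$, while the right-hand side is $\approx-\tfrac{\eps\delta}{4}e^{\gamma A}$, so the claimed bound fails by a factor $\Theta(n)$. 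Any correct proof must therefore exploit that the $y_i^t$ sum to zero (e.g.\ a few hugely underloaded bins force a correspondingly huge overload elsewhere, via convexity of $\sinh$), and this is precisely where the cited proof of \cite[Theorem 3.1]{LS22Batched} (following \cite{PTW15}) spends its multi-case analysis, splitting on whether $y_{\delta n}^t\ge 0$ and on which of $\Phi^t,\Psi^t$ dominates. As written, the heart of the theorem is asserted, not proved.

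Two smaller points. First, your step ``$a:=R\gamma\eps\delta/(8n)\le 1$ follows from the hypotheses'' is not immediate: $R$ and $K$ are arbitrary positive parameters in the statement, and nothing stated forces $R\gamma/n$ to be small; you either need to argue this (it does hold in the paper's applications, where $R=b$ and $\gamma=\Theta(\eps\delta n/(Kb))$) or restructure the induction so it does not need $1-a\ge 0$. Second, your intended constant bookkeeping ($c:=c_1\ge\delta/4$ so that $\Gamma^0=2n\le\tfrac{8c}{\delta}n$) is fine, but of course it only becomes meaningful once the missing key inequality is actually established with an explicit $c_1(\delta)$.
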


Now we will show that any process satisfying condition $\mathcal{C}_3$, also satisfies the preconditions of \cref{thm:hyperbolic_cosine_expectation} for the expected change of the potential functions $\Phi$ and $\Psi$ over one batch.

\begin{lem} \label{lem:herd_batching_pot_changes}
Consider any allocation process with probability allocation vector $p^t$ satisfying condition $\mathcal{C}_3$ for some $C \in (1, 1.9)$ at every step $t \geq 0$. Further, consider the weighted \Batched setting with weights from a $\FiniteMgf(S)$ distribution with constant $S \geq 1$ and a batch size $b \geq \frac{2CS}{(C-1)^2} \cdot n$. Then for $\Phi := \Phi(\gamma)$ and $\Psi := \Psi(\gamma)$ with any smoothing parameter $\gamma \in (0, \frac{n}{2(C-1) \cdot b}]$ and any step $t \geq 0$ being a multiple of $b$,
\begin{align} \label{eq:phi_expected_precondition}
\Ex{\left. \Phi^{t+b} \,\right|\, \mathfrak{F}^t} \leq \sum_{i = 1}^n \Phi_i^t \cdot \left(1 + \Big(p_i^t -\frac{1}{n}\Big) \cdot b \cdot \gamma + \frac{5(C-1)^2b}{n} \cdot b \cdot \frac{\gamma^2}{n} \right),
\end{align}
and 
\begin{align} \label{eq:psi_expected_precondition}
\Ex{\left. \Psi^{t+b} \,\right|\, \mathfrak{F}^t} \leq \sum_{i = 1}^n \Psi_i^t \cdot \left(1 + \Big(\frac{1}{n} - p_i^t \Big) \cdot b \cdot \gamma + \frac{5(C-1)^2b}{n} \cdot b \cdot \frac{\gamma^2}{n} \right).
\end{align}
\end{lem}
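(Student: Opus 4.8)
The plan is to exploit that $\Phi^t = \sum_{i=1}^n e^{\gamma y_i^t}$ depends only on the multiset of normalized loads, so the re-sorting step in the definition of the \Batched setting can be ignored. Fix a step $t$ that is a multiple of $b$. By the update rule, $z_i^{t+b} = y_i^t + \sum_{j=1}^b w^{t+j}\big(\mathbf{1}_{i_j=i}-\tfrac1n\big)$, hence
\[
\Ex{\Phi^{t+b} \mid \mathfrak{F}^t} \;=\; \sum_{i=1}^n \Phi_i^t \cdot \Ex{\exp\!\Big(\gamma\sum_{j=1}^b w^{t+j}\big(\mathbf{1}_{i_j=i}-\tfrac1n\big)\Big)\;\Big|\;\mathfrak{F}^t}.
\]
Conditionally on $\mathfrak{F}^t$ the pairs $(i_j,w^{t+j})_{j=1}^b$ are mutually independent, and within each pair the sample $i_j$ is independent of the weight $w^{t+j}$; therefore the inner expectation factorizes into a product of $b$ identical terms, each equal to $p_i^t\cdot\Ex{e^{\gamma(1-1/n)\mathcal W}} + (1-p_i^t)\cdot\Ex{e^{-(\gamma/n)\mathcal W}}$.

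I would estimate this per-ball factor using \cref{lem:bounded_weight_moment}. The hypotheses $\gamma\le\frac{n}{2(C-1)b}$ and $b\ge\frac{2CS}{(C-1)^2}n$ combine to give $\gamma\le\frac{C-1}{4CS}<\min\{\zeta/2,1\}$ (using $S\ge 1/\zeta$ and $C>1$), so \cref{lem:bounded_weight_moment} applies with $\kappa=1-\tfrac1n$ and with $\kappa=-\tfrac1n$. Substituting the resulting quadratic bounds and collecting terms, the linear parts telescope via $p_i^t(1-\tfrac1n)-(1-p_i^t)\tfrac1n=p_i^t-\tfrac1n$, and condition $\mathcal{C}_3$ (which gives $p_i^t\le C/n$) controls the remaining quadratic coefficient:
\[
p_i^t\cdot\Ex{e^{\gamma(1-1/n)\mathcal W}}+(1-p_i^t)\cdot\Ex{e^{-(\gamma/n)\mathcal W}} \;\le\; 1+\gamma\Big(p_i^t-\tfrac1n\Big)+S\gamma^2 c_i, \qquad c_i:=p_i^t\Big(1-\tfrac2n\Big)+\tfrac1{n^2}\le\tfrac{C+1}{n}.
\]

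Raising this to the $b$-th power, I would bound it by $e^{z}$ with $z:=b\gamma(p_i^t-\tfrac1n)+bS\gamma^2 c_i$, and then use $e^z\le 1+z+z^2$. This is legitimate since $|b\gamma(p_i^t-\tfrac1n)|\le b\gamma\cdot\frac{C-1}{n}\le\tfrac12$ by the bound on $\gamma$ and $\mathcal{C}_3$, and $bS\gamma^2 c_i\le bS\gamma^2\cdot\frac{C+1}{n}\le\tfrac14$ (applying $\gamma\le\frac{n}{2(C-1)b}$ once and $\gamma\le\frac{C-1}{4CS}$ once), so $|z|\le\tfrac34<1$. The $1+z$ part supplies exactly $1+b\gamma(p_i^t-\tfrac1n)$, and it then remains to absorb $bS\gamma^2 c_i+z^2$ into $\frac{5(C-1)^2 b}{n}\cdot b\cdot\frac{\gamma^2}{n}$; using $z^2\le 2b^2\gamma^2(p_i^t-\tfrac1n)^2+2(bS\gamma^2 c_i)^2\le\frac{2(C-1)^2 b^2\gamma^2}{n^2}+\tfrac12 bS\gamma^2 c_i$, the entire quadratic contribution is at most $\tfrac32 bS\gamma^2 c_i+\frac{2(C-1)^2 b^2\gamma^2}{n^2}$, and the batch-size hypothesis in the form $\frac bn\ge\frac{2CS}{(C-1)^2}\ge\frac{S(C+1)}{2(C-1)^2}$ turns $\tfrac32 bS\gamma^2\cdot\frac{C+1}{n}$ into at most $\frac{3(C-1)^2 b^2\gamma^2}{n^2}$, yielding the required $\frac{5(C-1)^2 b^2\gamma^2}{n^2}$. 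Multiplying through by $\Phi_i^t\ge0$ and summing over $i$ gives \eqref{eq:phi_expected_precondition}, and \eqref{eq:psi_expected_precondition} follows identically: replacing $\gamma$ by $-\gamma$ in the per-ball exponent only flips the linear term to $\gamma(\tfrac1n-p_i^t)$ and leaves $c_i$ and every smallness estimate unchanged.

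The conceptual content is light; the main obstacle is the constant bookkeeping. One has to carry the second-order term $S\gamma^2 c_i$, which is of order $S\gamma^2/n$ per ball and hence roughly $bS\gamma^2/n$ over a whole batch, and verify that the lower bound $b\ge\frac{2CS}{(C-1)^2}n$ is exactly what recasts $bS\gamma^2/n$ into the ``quadratic'' shape $(C-1)^2 b^2\gamma^2/n^2$ required by \cref{thm:hyperbolic_cosine_expectation}, while simultaneously keeping every intermediate quantity small enough for the elementary inequalities ($1+x\le e^x$, $e^z\le 1+z+z^2$, and \cref{lem:bounded_weight_moment}) to remain valid.
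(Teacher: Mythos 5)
Your proposal is correct and follows essentially the same route as the paper's proof: factorize the per-batch expectation into per-ball moment-generating-function terms, apply \cref{lem:bounded_weight_moment} with $\kappa = 1-\tfrac1n$ and $\kappa=-\tfrac1n$, pass through $e^{z}\le 1+z+z^2$ after checking $|z|\le 1$, and use the batch-size hypothesis to absorb the $bS\gamma^2/n$ term into $(C-1)^2b^2\gamma^2/n^2$. The only differences are cosmetic bookkeeping (you carry $c_i\le (C+1)/n$ where the paper uses $2p_i\le 2C/n$, and you split $z^2$ via $(a+b)^2\le 2a^2+2b^2$ rather than bounding $|z|$ directly), and both land on the same constant $5$.
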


The proof proceeds in a similar manner to \cite[Lemma 4.1]{LS22Batched}, but we bound the terms in \cref{eq:u_definition} and \cref{eq:tile_u_definition} more tightly using the new condition $\mathcal{C}_3$. Compared to the statement of \cite[Lemma 4.1]{LS22Batched}, the coefficients of the term $\frac{\gamma^2}{n}$ change from $5C^2 S^2 \frac{b^2}{n}$ to $5 (C-1)^2 \frac{b^2}{n}$. Note that $C$ is replaced by $C-1$, which makes a difference when $C = 1 + o(1)$, and that $S$ does not appear as we have assumed that $b \geq \frac{2CS}{(C-1)^2} \cdot n$.

\begin{proof}
Consider an arbitrary step $t \geq 0$ being a multiple of $b$ and for convenience let $p = p^t$. First note that the given assumptions $\gamma \leq \frac{n}{2(C-1) \cdot b}$ and $b \geq \frac{2CS}{(C-1)^2} \cdot n$ imply that 
\begin{align} \label{eq:alpha_second_bound}
\gamma \leq \frac{n}{2(C-1) \cdot b} \leq \frac{C-1}{4CS}.
\end{align}

Consider an arbitrary bin $i \in [n]$. Define the binary vector $Z \in \{0,1 \}^b$, where $Z_j$ indicates whether the $j$-th ball was allocated to bin $i$. The expected change for the overload potential $\Phi_i^t$ of the bin is given by,
\begin{align}
\Ex{\left. \Phi_i^{t+b} \,\right|\, \mathfrak{F}^t} 
& = \Phi_i^t \cdot \sum_{z \in \{0,1 \}^b} \Pro{Z = z} \cdot \Ex{\left. e^{\gamma \sum_{j = 1}^b \left(z_j w^{t+j} - \frac{w^{t+j}}{n} \right)} \, \right\vert \, \mathfrak{F}^t, Z = z} \notag .
 \end{align}
In the following, let us upper bound the factor of $\Phi_i^t$:
 \begin{align}
 & \sum_{z \in \{0,1 \}^b} \Pro{Z = z} \cdot \Ex{\left. e^{\gamma \sum_{j = 1}^b \left(z_j w^{t+j} - \frac{w^{t+j}}{n}\right)} \, \right\vert \, \mathfrak{F}^t, Z = z} \notag \\
 & \qquad \stackrel{(a)}{=} \!\!\! \sum_{z \in \{0,1 \}^b} \prod_{j = 1}^b (p_i)^{z_j}  (1 - p_i)^{1 - z_j}  (\ex{e^{\gamma W (1 - \frac{1}{n})}})^{z_j}  (\ex{e^{-\gamma W/n}})^{1- z_j} \notag \\
 & \qquad \stackrel{(b)}{\leq}\!\!\! \sum_{z \in \{0,1 \}^b} \prod_{j = 1}^b \left(p_i \cdot \left(1 + \gamma \cdot \left(1 - \frac{1}{n}\right) + S\gamma^2 \right)\right)^{z_j}  %
 \cdot \left((1 - p_i) \cdot \left(1 - \frac{\gamma}{n} + \frac{S\gamma^2}{n^2}\right) \right)^{1 - z_j} \notag \\
 & \qquad \stackrel{(c)}{=} \left( p_i \cdot \left(1 + \gamma \cdot \left(1 - \frac{1}{n}\right) + S\gamma^2 \right) + (1 - p_i) \cdot \left(1 - \frac{\gamma}{n} + \frac{S\gamma^2}{n^2}\right) \right)^b \notag \\
 & \qquad = \left( 1 + \gamma \cdot \left(p_i - \frac{1}{n}\right) + p_i \cdot S\gamma^2 + (1-p_i) \cdot \frac{S\gamma^2}{n^2} \right)^b \notag \\
 & \qquad \stackrel{(d)}{\leq} \left( 1 + \gamma \cdot \left(p_i - \frac{1}{n}\right) + 2 \cdot p_i \cdot S\gamma^2 \right)^b, \label{eq:phi_batched_i}
\end{align}
using in $(a)$ that the weights are independent given $\mathfrak{F}^t$, in $(b)$ \cref{lem:bounded_weight_moment} twice with $\kappa = 1 - \frac{1}{n}$ and with $\kappa = -\frac{1}{n}$ respectively (and that $(1 - 1/n)^2 \leq 1$), in $(c)$ the binomial theorem and in $(d)$ that $p_i \geq \frac{1}{n^2}$ by condition $\mathcal{C}_3$ for $C \in (1, 1.9)$. 
Let us define 
\begin{align} \label{eq:u_definition}
u_i := \left(p_i - \frac{1}{n}\right) \cdot \gamma + 2 \cdot p_i \cdot S\gamma^2.    
\end{align}
We will now show that $|u_i \cdot b| \leq 2(C-1) \cdot b \cdot \frac{\gamma}{n} \leq 1$, which holds indeed since
\begin{align}
|u_i \cdot b| &= \left\lvert \left(p_i -\frac{1}{n}\right) \cdot b \cdot \gamma + 2 \cdot p_i \cdot b \cdot S\gamma^2 \right\vert \notag \\
    & \leq \left\lvert \left(p_i -\frac{1}{n}\right) \cdot b \cdot \gamma\right\vert + 2 \cdot p_i \cdot b \cdot S\gamma^2 \notag \\
    & \stackrel{(a)}{\leq} \frac{C-1}{n} \cdot  b \cdot \gamma + 2 \cdot \frac{C}{n} \cdot b \cdot S \gamma^2 \notag \\
    & = ( C-1 + 2CS\gamma) \cdot b \cdot \frac{\gamma}{n} \notag \\
    & \stackrel{(b)}{\leq} 2(C-1) \cdot b \cdot \frac{\gamma}{n} \label{eq:yb_bounded_1} \\
    & \stackrel{(c)}{\leq} 1, \label{eq:yb_bounded_2}
\end{align}
using in $(a)$ that $\big|p_i - \frac{1}{n}\big| \leq \frac{C-1}{n}$ by condition $\mathcal{C}_3$, in $(b)$ that $\gamma \leq \frac{C-1}{2CS}$ by \cref{eq:alpha_second_bound} and in $(c)$ that $\gamma \leq \frac{n}{2(C-1) \cdot b}$. 

Then,
\begin{align*}
\Ex{\left. \Phi_i^{t+b} \,\right|\, \mathfrak{F}^t} 
 & \stackrel{(a)}{\leq} \Phi_i^t \cdot e^{u_i \cdot b} \\
 & \stackrel{(b)}{\leq} \Phi_i^t \cdot \left( 1 + u_i \cdot b + (u_i \cdot b)^2 \right) \\
 & \!\! \stackrel{(\ref{eq:u_definition})}{=} \Phi_i^t \cdot \left( 1 + \left(p_i - \frac{1}{n}\right) \cdot b \cdot \gamma + 2 \cdot p_i \cdot b \cdot S\gamma^2 + (u_i \cdot b)^2 \right) \\
 & \!\! \stackrel{(\ref{eq:yb_bounded_1})}{\leq} \Phi_i^t \cdot \left(1 + \left(p_i -\frac{1}{n}\right) \cdot b \cdot \gamma + 2 \cdot p_i \cdot b \cdot S\gamma^2 + \left(2(C-1) \cdot b \cdot \frac{\gamma}{n} \right)^2 \right) \\
 & \stackrel{(c)}{\leq} \Phi_i^t \cdot \left(1 + \left(p_i -\frac{1}{n}\right) \cdot b \cdot \gamma + \frac{5(C-1)^2b}{n} \cdot b \cdot \frac{\gamma^2}{n} \right),
 \end{align*}
 using in $(a)$ that $1 + v \leq e^v$ for any $v$, in $(b)$ that $e^v \leq 1 + v + v^2$ for $v \leq 1.75$ and \cref{eq:yb_bounded_2}, and in $(c)$ that $\frac{(C-1)^2b}{n} \cdot b \cdot \frac{\gamma^2}{n} \geq 2 \cdot \frac{C}{n} \cdot b \cdot S \gamma^2 \geq 2 \cdot p_i \cdot b \cdot S \gamma^2$, since $b \geq \frac{2CS}{(C-1)^2} \cdot n$.
 
Similarly, for the underloaded potential $\Psi^t$, for any bin $i \in [n]$,
\begin{align*}
\Ex{\left. \Psi_i^{t+b} \,\right|\, \mathfrak{F}^t} = \Psi_i^t \cdot \sum_{z \in \{0,1 \}^b} \Pro{Z = z} \cdot \Ex{\left. e^{-\gamma \sum_{j = 1}^b \left(z_j w^{t+j} - \frac{w^{t+j}}{n} \right)} \, \right\vert \, \mathfrak{F}^t, Z = z}. 
 \end{align*}
 As before, we will upper bound the factor of $\Psi_i^t$:
 \begin{align}
& \sum_{z \in \{0,1 \}^b} \Pro{Z = z} \cdot \Ex{\left. e^{-\gamma \sum_{j = 1}^b \left(z_j w^{t+j} - \frac{w^{t+j}}{n} \right)} \, \right\vert \, \mathfrak{F}^t, Z = z} \notag \\
 & \qquad \stackrel{(a)}{=} \!\!\! \sum_{z \in \{0,1 \}^b} \prod_{j = 1}^b (p_i)^{z_j}  (1 - p_i)^{1 - z_j}  (\ex{e^{-\gamma W \cdot (1 - \frac{1}{n})}})^{z_j}  (\ex{e^{\gamma W/n}})^{1- z_j} \notag \\
& \qquad \stackrel{(b)}{\leq}  \!\!\! \sum_{z \in \{0,1 \}^b} \prod_{j = 1}^b  \left(p_i \cdot \left(1 - \gamma \cdot \left(1 - \frac{1}{n}\right) + S\gamma^2 \right)\right)^{z_j} \notag %
\cdot \left((1 - p_i) \cdot \left(1 + \frac{\gamma}{n} + \frac{S\gamma^2}{n^2}\right) \right)^{1 - z_j} \notag \\
 & \qquad \stackrel{(c)}{=} \left( p_i \cdot \left(1 - \gamma \cdot \left(1 - \frac{1}{n}\right) + S\gamma^2 \right) + (1 - p_i) \cdot \left(1 + \frac{\gamma}{n} + \frac{S\gamma^2}{n^2}\right) \right)^b \notag \\
 & \qquad = \left( 1 + \left(\frac{1}{n}- p_i\right) \cdot \gamma + p_i \cdot S\gamma^2 + (1-p_i) \cdot \frac{S\gamma^2}{n^2} \right)^b \notag \\
 & \qquad \stackrel{(d)}{\leq} \left( 1 + \left( \frac{1}{n} - p_i\right) \cdot \gamma + 2 \cdot p_i \cdot S\gamma^2 \right)^b \label{eq:psi_batched_i},
\end{align}
using in $(a)$ that the weights are independent given $\mathfrak{F}^t$, in $(b)$ \cref{lem:bounded_weight_moment} twice with $\kappa = -\big(1 - \frac{1}{n}\big)$ and with $\kappa = \frac{1}{n}$ respectively, in $(c)$ the binomial theorem and in $(d)$ that $p_i \geq \frac{1}{n^2}$ by condition $\mathcal{C}_3$ for $C \in (1, 1.9)$.
Let us define
\begin{align} \label{eq:tile_u_definition}
\tilde{u}_i := \left(\frac{1}{n} - p_i\right) \cdot \gamma + 2 \cdot p_i \cdot S\gamma^2.
\end{align}
Similarly, to \cref{eq:yb_bounded_2}, we get that 
\begin{align}
|\tilde{u}_i \cdot b| 
 & \leq \left\lvert \left(\frac{1}{n} - p_i\right) \cdot b \cdot \gamma\right\vert + 2 \cdot p_i \cdot b \cdot S\gamma^2 \leq 2(C-1) \cdot b \cdot \frac{\gamma}{n} \label{eq:tilde_u_b_1} \\
 & \leq 1. \label{eq:tilde_u_b_2}
\end{align}
So,
\begin{align*}
 \Ex{\left. \Psi_i^{t+b} \,\right|\, \mathfrak{F}^t} 
 & \stackrel{(a)}{\leq} \Psi_i^t \cdot e^{\tilde{u}_i \cdot b} \\
 & \stackrel{(b)}{\leq} \Psi_i^t \cdot \left( 1 + \tilde{u}_i \cdot b + (\tilde{u}_i \cdot b)^2 \right) \\
 & \!\!\stackrel{(\ref{eq:tile_u_definition})}{=} \Psi_i^t \cdot \left(1 + \left(\frac{1}{n} - p_i\right) \cdot b \cdot \gamma + 2 \cdot p_i \cdot S \gamma^2 \cdot b + (\tilde{u}_i \cdot b)^2 \right) \\
 & \!\!\stackrel{(\ref{eq:tilde_u_b_1})}{\leq} \Psi_i^t \cdot \left(1 + \left(\frac{1}{n} - p_i\right) \cdot b \cdot \gamma + 2 \cdot p_i \cdot b \cdot \gamma^2 + \left(2(C-1) \cdot b \cdot \frac{\gamma}{n} \right)^2 \right) \\
 & \stackrel{(c)}{\leq} \Psi_i^t \cdot \left(1 + \left(\frac{1}{n} - p_i\right) \cdot b \cdot \gamma + \frac{5 (C-1)^2 b}{n} \cdot b \cdot \frac{\gamma^2}{n} \right),
\end{align*}
using in $(a)$ that $1 + v \leq e^v$ for any $v$, in $(b)$ that $e^v \leq 1 + v + v^2$ for $v \leq 1.75$ and \cref{eq:tilde_u_b_2}, and in $(c)$ that $\frac{(C-1)^2b}{n} \cdot b \cdot \frac{\gamma^2}{n} \geq 2 \cdot \frac{C}{n} \cdot b \cdot S \gamma^2 \geq 2 \cdot p_i \cdot b \cdot S \gamma^2$, since $b \geq \frac{2CS}{(C-1)^2} \cdot n$.
\end{proof}

Having verified the preconditions for \cref{thm:hyperbolic_cosine_expectation}, we are now ready to prove the bound on the gap for this family of processes.

\begin{rem}
The same upper bound in \cref{thm:herd_weak_gap_bound} also holds for processes with random tie breaking. The reason for this is that $(i)$ averaging probabilities in \cref{eq:averaging_pi} can only reduce the maximum entry (and increase the minimum) in the allocation vector $\tilde{p}^t$, i.e. $\max_{i \in [n]} \tilde{p}_i^t(x^t) \leq \max_{i \in [n]} p_i$, so it still satisfies condition $\mathcal{C}_3$ and $(ii)$ moving probability between bins $i, j$ with $x_i^t = x_j^t$ (and thus $\Phi_i^t = \Phi_j^t$ and $\Psi_i^t = \Psi_j^t$), implies that the aggregate upper bounds \cref{eq:phi_expected_precondition} and \cref{eq:psi_expected_precondition} in \cref{lem:herd_batching_pot_changes} remain the same.
\end{rem}

\begin{proof}[Proof of \cref{thm:herd_weak_gap_bound}]
Consider the \Batched setting at steps that are a multiple of $b$ and rounds consisting of $b$ consecutive allocations. By \cref{lem:herd_batching_pot_changes}, the preconditions of \cref{thm:hyperbolic_cosine_expectation} are satisfied for $K := 5 \cdot (C-1)^2 \cdot \frac{b}{n}$, $R := b$ and $\gamma := \frac{\eps\delta}{8K} = \frac{\eps\delta}{40 \cdot (C-1)^2 \cdot \frac{b}{n}} \leq \frac{n}{2(C-1) \cdot b}$, since $\eps \leq C - 1$ and also $\gamma \leq 1$ since $b \geq \frac{2CS}{(C-1)^2} \cdot n$, $C > 1$ and $S \geq 1$ (as in \cref{eq:alpha_second_bound}),
\[
\gamma \leq \frac{n}{2(C-1) \cdot b} \leq \frac{C-1}{4CS} \leq 1.
\]
Hence, by \cref{thm:hyperbolic_cosine_expectation}, there exists a constant $c := c(\delta) > 0$ such that for any step $m \geq 0$ which is a multiple of $b$,
\[
\Ex{\Gamma^m} \leq \frac{8c}{\delta} \cdot n.
\]
Therefore, by Markov's inequality
\[
\Pro{\Gamma^m \leq \frac{8c}{\delta} \cdot n^3} \geq 1 - n^{-2}.
\]
To conclude the claim, note that when $\big\{ \Gamma^m \leq \frac{8c}{\delta} \cdot n^3 \big\}$ holds, then also,
\[
\max_{i \in [n]} |y_i^m| \leq \frac{1}{\gamma} \cdot \left( \log \left( \frac{8c}{\delta}\right) + 3 \cdot \log n \right) \leq 4 \cdot \frac{\log n}{\alpha} \leq 4 \cdot \frac{8 \cdot 5 \cdot (C-1)^2}{\eps \delta} \cdot \frac{b}{n} \cdot \log n. \qedhere
\]
\end{proof}

\section{Tight Bound: \texorpdfstring{$\Oh(\sqrt{(b/n) \cdot \log n})$}{O(sqrt((b/n) log n))} Gap} \label{sec:strong_gap}

In this section, we will prove the stronger $\Oh\big(\sqrt{(b/n) \cdot \log n}\big)$ bound on the gap for a family of processes in the weighted \Batched setting (with $b \in [2n \log n, n^3]$). More specifically, these processes are a subset of the ones analyzed in \cref{sec:weak_gap} and include the \OnePlusBeta-process with $\beta = \sqrt{(n/b) \log n}$, as well as $\Quantile(1/2)$ mixed with \OneChoice. As we will show in \cref{sec:lower_bounds}, these processes achieve the asymptotically optimal bound.

\newcommand{\BatchingStrongGapBound}{
Consider the weighted \Batched setting with any $b \in [2n \log n, n^3]$ and weights from a $\FiniteMgf(S)$ distribution with constant $S \geq 1$. Further let $\eps = \sqrt{(n/b)  \cdot \log n}$.
Consider any process with probability allocation vector $p^t$ satisfying at every step $t \geq 0$, condition $\mathcal{C}_1$ for constant $\delta \in (0, 1)$ and $\eps$, as well as condition $\mathcal{C}_3$ for $C = 1 + \eps$. 
Then, there exists a constant $\kappa := \kappa(\delta, S) > 0$, such that for any step $m \geq 0$ being a multiple of $b$,
\[
\Pro{ \max_{i \in [n]} y_i^m \leq \kappa \cdot \sqrt{\frac{b}{n} \cdot \log n} } \geq 1 - n^{-2}.
\]
}

\begin{thm}\label{thm:batching_strong_gap_bound}
\BatchingStrongGapBound
\end{thm}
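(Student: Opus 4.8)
The plan is to combine the concentration result already available from \cref{sec:weak_gap} with a fresh potential-function argument using a larger smoothing parameter. Set $\eps = \sqrt{(n/b)\log n}$ and $C = 1+\eps$. First, I would invoke \cref{thm:herd_weak_gap_bound} (note its hypotheses are met, since $C \in (1,1.9)$ for $b \geq 2n\log n$, and $b \geq \frac{2CS}{(C-1)^2} n = \Theta((b/n)\log n)\cdot n$ holds up to constants for $b \le n^3$) to obtain that with probability $\geq 1 - n^{-2}$,
\[
\max_{i \in [n]} |y_i^m| \leq k \cdot \frac{(C-1)^2}{\eps} \cdot \frac{b}{n} \cdot \log n = k \cdot \eps \cdot \frac{b}{n} \cdot \log n = k \cdot \sqrt{\frac{b}{n}\log n} \cdot \log n,
\]
i.e. a gap of $\Oh(\sqrt{(b/n)\log n}\cdot \log n)$. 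This is the weak bound, off by a $\log n$ factor; the refinement must shave it.

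Next, I would run a second hyperbolic cosine potential $\Gamma(\gamma')$ with a \emph{larger} smoothing parameter $\gamma' = \Theta(\sqrt{(n/b)/\log n}\cdot \text{(const)})$ — chosen so that $\gamma' \cdot \sqrt{(b/n)\log n} = \Theta(1)$, or more precisely so that the target gap $\kappa\sqrt{(b/n)\log n}$ satisfies $\gamma' \cdot (\text{target}) = \Theta(\log n)$. The point of \cref{lem:herd_batching_pot_changes} is that it applies for \emph{any} $\gamma \in (0, \frac{n}{2(C-1)b}]$, so with $\gamma'$ in this range the per-batch drift inequalities \cref{eq:phi_expected_precondition} and \cref{eq:psi_expected_precondition} still hold, and hence \cref{thm:hyperbolic_cosine_expectation} gives $\Ex{\Gamma^m(\gamma')} \le \frac{8c}{\delta} n$. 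The subtlety is that $\frac{n}{2(C-1)b} = \frac{n}{2\eps b} = \frac{1}{2}\sqrt{\frac{n}{b\log n}}$, and one wants $\gamma'$ as large as possible within this constraint; with $\gamma' = \Theta(\sqrt{n/(b\log n)})$ a naive Markov bound on $\Gamma^m(\gamma')$ only yields $\max_i |y_i^m| \le \frac{1}{\gamma'}\cdot \Oh(\log n) = \Oh(\sqrt{b/n}\cdot \log n)$ again — no improvement. So Markov alone on the full potential is not enough; the actual gain must come from using the concentration of $\max_i|y_i^m|$ from the first step to restrict attention to the regime where the potential behaves more favourably, or from an exponential potential with smoothing parameter as large as $\Theta(\sqrt{n/b})$ applied only over the event that the load vector is already known to be "flat" to within $\Oh(\sqrt{(b/n)\log n}\cdot\log n)$.

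Concretely, the approach I expect to work is a two-stage "boosting" argument in the spirit of \cite[Section 5]{LS22Batched}: condition on the good event $\mathcal{E} = \{\max_i |y_i^m| \le k\,\eps(b/n)\log n\}$ from \cref{thm:herd_weak_gap_bound}; on $\mathcal{E}$, re-examine the drift of an exponential potential $\Phi(\gamma')$ with $\gamma'$ chosen so that $\gamma' \cdot (\text{weak bound}) = \Oh(\log n)$, so that the potential is polynomially bounded deterministically on $\mathcal{E}$, and then show that over one further batch the expected potential contracts multiplicatively by a factor $(1 - \Omega(\gamma'\eps\delta b/n))$ plus an additive term of order $\gamma'\eps b$, whose fixed point corresponds to a bin load of $\Oh(\eps^{-1}\cdot(b/n)\cdot\gamma'^{-1}\cdot \ldots)$ — the bookkeeping here is exactly what must be tuned so the $\sqrt{\log n}$ appears rather than $\log n$. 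The key identity driving this is that the "noise" term contributed per batch scales like $(C-1)^2 (b/n) \cdot \gamma'^2/n \cdot b = \eps^2 (b/n)^2 \gamma'^2/n$, while the "drift" term scales like $\eps\gamma' (b/n)$; the ratio gives an equilibrium value $\sim \eps (b/n) \cdot \gamma' \cdot (b/n) / (\text{something})$, and with $\gamma' \sim \sqrt{n/b}$ and $\eps \sim \sqrt{(n/b)\log n}$ one checks this equals $\Theta(\sqrt{(b/n)\log n})$ as desired.

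The main obstacle I anticipate is exactly this tuning step: making the interplay between the two potentials rigorous. The first (small-$\gamma$) potential gives concentration of the bulk of the load vector but a weak tail bound; the second (large-$\gamma'$) potential would give the sharp tail, but its drift analysis is only valid when the load vector is already known to be reasonably flat, since otherwise the $e^{v} \le 1+v+v^2$ linearization in \cref{lem:herd_batching_pot_changes} (which needs $|u_i b| \le 1$) fails for the larger smoothing parameter. So one must carefully stitch together: use $\mathcal{E}$ to guarantee the linearization remains valid for $\gamma'$, then derive the contraction, then apply Markov to $\Gamma^m(\gamma')$ and translate back via $\max_i y_i^m \le \frac{1}{\gamma'}\log(\text{poly}(n))$. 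Getting the constant $\kappa$ to depend only on $\delta$ and $S$ (and not blow up) requires tracking that $\gamma'$ can be taken $\Theta(\sqrt{n/b})$ independently of the $\log n$ factors, and that the additive error in the drift recursion, when divided by the contraction rate, produces precisely $\Theta(\log n)$ inside the exponential.
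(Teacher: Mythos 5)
Your overall architecture (a layered argument: first a coarse bound, then a second exponential potential with a larger smoothing parameter) is in the right spirit, but the concrete mechanism you propose for the second stage does not work, and it misses the two ingredients that actually make the paper's proof go through. First, conditioning on the weak-gap event $\mathcal{E}$ does \emph{not} rescue the linearization in \cref{lem:herd_batching_pot_changes} for a larger $\gamma'$: the constraint there is $|u_i b| \leq 2(C-1)b\gamma/n \leq 1$, which depends only on $\gamma, b, n, C$ and not on the load vector, and with $C-1=\eps=\sqrt{(n/b)\log n}$ and $\gamma'=\Theta(\sqrt{(n/b)\log n})$ one gets $|u_i b| = \Theta(\log n) \gg 1$. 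Relatedly, the full hyperbolic cosine potential genuinely fails to contract at that $\gamma'$: the precondition $\gamma \leq \eps\delta/(8K)$ of \cref{thm:hyperbolic_cosine_expectation} with $K = 5(C-1)^2 b/n = 5\log n$ caps $\gamma$ at $\Theta(\sqrt{n/(b\log n)})$, a full $\log n$ factor below what you need, and this is not an artifact of the bookkeeping --- an individual bin of rank $\leq \delta n$ may have $p_i = (1+\eps)/n$ under $\mathcal{C}_1$/$\mathcal{C}_3$, so its contribution to an untruncated exponential potential with smoothing $\Theta(\sqrt{(n/b)\log n})$ can grow by $e^{\eps\gamma' b/n} = n^{\Theta(1)}$ in one batch.

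The paper escapes this with two ideas absent from your proposal. (1) It proves that $\Gamma_2(\gamma_2)$ with $\gamma_2 = \Theta(\sqrt{n/(b\log n)})$ is $\Oh(n)$ \emph{with high probability} (not just in expectation), via an interplay of two hyperbolic cosine potentials and a bounded-differences inequality with bad events (\cref{lem:batching_gamma_linear_whp}); this is what bounds the \emph{number} of bins with $y_i \geq z := \gamma_2^{-1}\log(\tilde c/\delta) = \Theta(\sqrt{(b/n)\log n})$ by $\delta n$. Your event $\mathcal{E}$ only bounds $\max_i|y_i|$ and gives no control on how many bins sit above $z$; a Markov bound giving $\Gamma_2 \le \poly(n)$ would push $z$ up to $\Theta(\sqrt{b/n}\cdot\log n)$ and lose the improvement. (2) It then uses a \emph{truncated} potential $\Lambda^t = \sum_{i: y_i^t \geq z} e^{\lambda(y_i^t - z)}$ with $\lambda = \eps/(4CS) = \Theta(\sqrt{(n/b)\log n})$: since the contributing bins are at most $\delta n$ in number, majorization (\cref{lem:quasilem2}) lets one assume $p_i \le (1-\eps)/n$ for all of them, and --- crucially --- the per-batch factor is kept in product form $(1 - \eps\lambda/n + 2CS\lambda^2/n)^b \le e^{-\eps\lambda b/(2n)}$ rather than linearized, which is what permits the large $\lambda$. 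The final gap is $z + \Oh(\lambda^{-1}\log n)$, both terms being $\Theta(\sqrt{(b/n)\log n})$. Without the high-probability $\Oh(n)$ bound on $\Gamma_2$ and without the truncation-plus-product-form device, your second stage cannot be made rigorous as stated.
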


\noindent There are two key steps in the proof:

\textbf{Step 1:} Similarly to the analysis in \cite{LS22Queries}, we will use two instances of the hyperbolic cosine potential (defined in \cref{eq:hyperbolic}), in order to show that it is concentrated at $\Oh(n)$. More specifically, we will be using $\Gamma_1 := \Gamma_1(\gamma_1)$ with the smoothing parameter $\gamma_1 := \frac{\delta}{40S} \cdot \sqrt{n/(b \log n)}$ and $\Gamma_2 := \Gamma_2(\gamma_2)$ with $\gamma_2 := \frac{\gamma_1}{8 \cdot 30}$, i.e., with a smoothing parameter which is a large constant factor smaller than $\gamma_1$. So, in particular $\Gamma_2^t \leq \Gamma_1^t$ at any step $t \geq 0$. 

In the following lemma, proven in \cref{sec:batching_gamma_linear_whp}, we show that \Whp~$\Gamma_2 = \Oh(n)$ for any $\log^3 n$ consecutive batches. 

\newcommand{\BatchingGammaLinearWhp}{
Consider any process satisfying the conditions in \cref{thm:batching_strong_gap_bound}. Let $\tilde{c} := 2 \cdot \frac{8c}{\delta}$ where $c := c(\delta) > 0$ is the constant from \cref{thm:hyperbolic_cosine_expectation}. Then, for any step $t \geq 0$ being a multiple of $b$,
\[
\Pro{ \bigcap_{j \in [0, \log^3 n]} \left\lbrace \Gamma_2^{t + j \cdot b} \leq \tilde{c} n \right\rbrace } \geq 1 - n^{-3}.
\]
}

\begin{lem} \label{lem:batching_gamma_linear_whp}
\BatchingGammaLinearWhp
\end{lem}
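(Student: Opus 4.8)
The plan is to show that $\Gamma_2^{t+j\cdot b}$ stays $\Oh(n)$ over a polylogarithmic window of batches by combining the supermartingale-type drift inequality from Theorem~\ref{thm:hyperbolic_cosine_expectation} with a union bound. First I would verify that the preconditions of Theorem~\ref{thm:hyperbolic_cosine_expectation} are met for $\Gamma_2 := \Gamma_2(\gamma_2)$ with the stated choice $\gamma_2 = \gamma_1/(8\cdot 30)$: by Lemma~\ref{lem:herd_batching_pot_changes} applied with $C = 1+\eps$ (so $(C-1)^2 = \eps^2 = (n/b)\log n$) we get the per-batch one-step inequalities \eqref{eq:phi_expected_precondition}, \eqref{eq:psi_expected_precondition} with $K = 5(C-1)^2 b/n = 5\log n$ and $R = b$; one checks $\gamma_2 \le \min\{1, \eps\delta/(8K)\}$ using $\eps = \sqrt{(n/b)\log n}$ and $b \le n^3$. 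Theorem~\ref{thm:hyperbolic_cosine_expectation} then yields a constant $c = c(\delta)$ with
\[
\Ex{\left.\Delta\Gamma_2^{t+b}\,\right|\,\mathfrak{F}^t} \le -\Gamma_2^t \cdot b \cdot \frac{\gamma_2\eps\delta}{8n} + b\cdot c\gamma_2\eps
\]
and the stationary-type bound $\Ex{\Gamma_2^t} \le \frac{8c}{\delta}\cdot n$ for every $t$ that is a multiple of $b$.

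Next I would set $\tilde c := 2\cdot\frac{8c}{\delta}$ and argue that whenever $\Gamma_2^t \ge \tilde c n = 2\cdot\frac{8c}{\delta}n$, the negative drift term dominates: $-\Gamma_2^t\cdot b\cdot\frac{\gamma_2\eps\delta}{8n} + bc\gamma_2\eps \le -\frac12\Gamma_2^t\cdot b\cdot\frac{\gamma_2\eps\delta}{8n}$, so the process has a multiplicative drift back toward $\tilde c n$. The cleanest route is: define the ``bad'' event $\mathcal{B} = \bigcup_{j\in[0,\log^3 n]}\{\Gamma_2^{t+j\cdot b} > \tilde c n\}$ and bound $\Pro{\mathcal{B}}$. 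By Markov's inequality and $\Ex{\Gamma_2^{t+j\cdot b}}\le\frac{8c}{\delta}n$, for each fixed $j$ we have $\Pro{\Gamma_2^{t+j\cdot b} > \tilde c n^2} \le n^{-1}\cdot\frac{1}{2}$ — but that gives $n^{-1}$ not $n^{-3}$ per term, which is too weak for a union bound over $\log^3 n$ terms if we want $n^{-3}$. Instead I would use the exponential-potential trick: since $\Gamma_2^t = \sum_i(e^{\gamma_2 y_i^t}+e^{-\gamma_2 y_i^t})$ is itself an exponential object, the bound $\Ex{\Gamma_2^{t}}\le\frac{8c}{\delta}n$ already encodes a tail, and more importantly one can apply Theorem~\ref{thm:hyperbolic_cosine_expectation} to the smaller smoothing parameter to get a better concentration. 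Concretely, because $\gamma_2$ is a constant factor below $\gamma_1$, one has $\Gamma_2^t \le (\Gamma_1^t)^{\gamma_2/\gamma_1}\cdot n^{1-\gamma_2/\gamma_1}$ by a power-mean/Jensen argument, so $\Gamma_2^t = \Oh(n)$ follows from $\Gamma_1^t = \Oh(n^{C'})$ for a large constant $C'$, and $\Pro{\Gamma_1^{t+j b} \le \frac{8c}{\delta}n\cdot n^{5}}\ge 1-n^{-5}$ by Markov on $\Gamma_1$; a union bound over the $\log^3 n + 1$ values of $j$ then gives failure probability at most $(\log^3 n + 1)\cdot n^{-5} \le n^{-3}$ for $n$ large.

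So the detailed steps are: (1) check preconditions of Theorem~\ref{thm:hyperbolic_cosine_expectation} for both $\Gamma_1$ and $\Gamma_2$ via Lemma~\ref{lem:herd_batching_pot_changes} with $C=1+\eps$; (2) obtain $\Ex{\Gamma_1^{t+jb}}\le\frac{8c}{\delta}n$ for every $j$; (3) apply Markov's inequality to each $\Gamma_1^{t+jb}$ with threshold $\frac{8c}{\delta}n^{6}$ to get per-step failure probability $\le n^{-5}$; (4) union-bound over $j\in[0,\log^3 n]$ to bound the probability that some $\Gamma_1^{t+jb}$ exceeds $\frac{8c}{\delta}n^{6}$ by $n^{-3}$; (5) translate the bound on $\Gamma_1$ into the bound $\Gamma_2^{t+jb}\le\tilde c n$ using the relation between the two potentials (the power-mean inequality, valid since $\gamma_2/\gamma_1$ is a small constant and $\tilde c$ is chosen with the appropriate slack). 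The main obstacle is step (5): getting the constant right so that the polynomial blow-up of $\Gamma_1$ (from the crude Markov bound) is damped by the exponent $\gamma_2/\gamma_1$ down to a genuine $\Oh(n)$ bound for $\Gamma_2$ with the specific constant $\tilde c = 2\cdot\frac{8c}{\delta}$ — this requires either choosing $\gamma_2/\gamma_1$ small enough relative to the Markov exponent, or (alternatively) a more careful martingale concentration argument showing $\Gamma_2$ itself does not exceed $\tilde c n$ except with probability $n^{-\omega(1)}$, exploiting that once $\Gamma_2^s \le \tilde c n$ holds, the multiplicative drift keeps it there with overwhelming probability. I would favour the power-mean route for brevity, making sure the numeric constant $8\cdot 30$ in the definition of $\gamma_2$ is exactly what is needed to absorb the Markov exponent.
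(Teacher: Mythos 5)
Your preliminary steps (verifying the preconditions of \cref{thm:hyperbolic_cosine_expectation} for $\Gamma_2$, obtaining $\ex{\Gamma_2^t} \leq \frac{8c}{\delta} n$, and identifying the multiplicative drift once $\Gamma_2^t \geq \tilde{c} n$) are correct, but the route you favour for the crucial step fails quantitatively. The power-mean/Jensen inequality gives, for $\theta := \gamma_2/\gamma_1 = 1/240$,
\[
\Gamma_2^t \;\leq\; (2n)^{1-\theta} \cdot \big(\Gamma_1^t\big)^{\theta},
\]
so from a Markov bound $\Gamma_1^{t+jb} \leq \frac{8c}{\delta} n^{6}$ you only obtain $\Gamma_2^{t+jb} = \Oh\big(n^{1+5\theta}\big) = \Oh\big(n^{1+1/48}\big)$, which is polynomially larger than $n$. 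This is not a matter of tuning constants: for any \emph{fixed} ratio $\theta>0$ and any Markov exponent $C'>1$ the exponent $1+(C'-1)\theta$ exceeds $1$ by a constant, and you cannot take $\theta \to 0$ without changing $\gamma_2$, which would destroy the $\Theta(\sqrt{(b/n)\log n})$ threshold $z = \gamma_2^{-1}\log(\tilde c/\delta)$ used in Step~2 of the main theorem. So no Markov-plus-Jensen argument can deliver the required $\Gamma_2 \leq \tilde{c} n$ with the specific constant $\tilde{c} = 2 \cdot \frac{8c}{\delta}$.

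The fallback you mention in your last sentence is indeed the correct idea, but it is the entire content of the paper's proof and requires machinery you have not supplied. Two things are missing. First, a \emph{recovery} argument: the drift toward $\tilde c n$ only applies when $\Gamma_2$ is already large, so one must show (via a killed-potential induction over $\log^3 n$ batches starting from the Markov bound $\Gamma_2^t \leq \tilde c n^9$) that $\Gamma_2 \leq \tilde c n$ actually occurs at \emph{some} batch boundary in the preceding window; this is \cref{lem:batched_gamma_1_poly_n_implies_gamma_2_linear_whp}. Second, a \emph{stabilization} argument: knowing $\ex{\Gamma_2^{t+b} \mid \Gamma_2^t \leq \tilde c n} \leq \tilde c n - n/\log^2 n$ does not keep the random variable below $\tilde c n$; one needs a concentration inequality for $\Gamma_2^{t+b}$ over the $b$ independent allocations of the batch. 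The differences of $\Gamma_2$ under changing one allocation are only small ($n^{1/4}\sqrt{(n/b)\log n}$) on the event that $\Gamma_1 = \poly(n)$ throughout the batch and all weights are $\Oh(\log n)$, so plain Azuma does not apply and one must use the bounded-differences inequality with bad events (\cref{lem:kutlin_3_3}), together with \cref{lem:gamma_continuous} and \cref{lem:batched_gamma_1_poly_implies} to control the good/bad events and both difference parameters. As written, your proposal does not close either of these gaps.
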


The proof follows the interplay between the two hyperbolic cosine potentials, in that conditioning on $\Gamma_1^t = \poly(n)$ (which follows \Whp~by the analysis in \cref{sec:weak_gap}) implies that $\big|\Delta\Gamma_2^{t+1}\big| \leq n^{1/4} \cdot \sqrt{(n/b) \cdot \log n}$ (\cref{lem:batched_gamma_1_poly_implies}~$(ii)$). This in turn allows us to apply a bounded difference inequality to prove concentration for $\Gamma_2$. In contrast to \cite{LS22Queries} and \cite{LS22Noise}, here we need a slightly different concentration inequality \cref{lem:kutlin_3_3} (also used in \cite{LS22Batched}), as in a single batch the load of a bin may change by a large amount (with small probability).
The complete proof is given in \cref{sec:batching_gamma_linear_whp}.

\textbf{Step 2:} Consider an arbitrary step $s = t + j \cdot b$ where $\{ \Gamma_2^{s} \leq \tilde{c} n \}$ holds. Then, the number of bins $i$ with load $y_i^s$ at least $z := \frac{1}{\gamma_2} \cdot \log(\tilde{c} / \delta) = \Theta(\sqrt{(b/n) \cdot \log n})$ is at most $\tilde{c}n \cdot e^{-\gamma_2 z} = \delta n$. With this in mind, we define the following potential function for any step $t \geq 0$, which only takes into account bins that are overloaded by at least $z$ balls:
\[
\Lambda^t := \Lambda^t(\lambda, z) := \sum_{i : y_i^t \geq z} \Lambda_i^t := \sum_{i : y_i^t \geq z} e^{\lambda \cdot (y_i^t - z)},
\]
where $\lambda := \frac{\eps}{4CS} = \Theta(\sqrt{(n/b) \cdot \log n})$ and we define $\Lambda_i^t = 0$ for the rest of the bins $i$. This means that when $\{ \Gamma_2^{s} \leq \tilde{c} n \}$ holds, the probability of allocating to one of these bins is $p_i^{s} \leq \frac{1-\eps}{n}$, because of the condition $\mathcal{C}_1$. Hence, the potential drops in expectation over one batch (\cref{lem:lambda_drops}) and this means that \Whp~$\Lambda^m = \poly(n)$, which implies that $\Gap(m) = \Oh(z + \lambda^{-1} \cdot \log n) = \Oh(\sqrt{(b/n) \cdot \log n})$ gap.

\subsection{Step 1: Concentration of the \texorpdfstring{$\Gamma$}{Γ} Potential} \label{sec:batching_gamma_linear_whp}

Recall that in \cref{thm:batching_strong_gap_bound}, we considered the weighted \Batched setting with any $b \in [2n \log n, n^3]$ and weights sampled independently from a $\FiniteMgf(S)$ distribution with constant $S \geq 1$, for any allocation process with probability allocation vector $p^t$ satisfying condition $\mathcal{C}_1$ for constant $\delta \in (0, 1)$ and $\eps \in (0,1)$ as well as condition $\mathcal{C}_3$ for some $C > 1$, at every step $t \geq 0$. 

{\renewcommand{\thelem}{\ref{lem:batching_gamma_linear_whp}}
	\begin{lem}[Restated, page~\pageref{lem:batching_gamma_linear_whp}]
\BatchingGammaLinearWhp
	\end{lem}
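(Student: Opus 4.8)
The plan is to establish the concentration statement for $\Gamma_2$ over $\log^3 n$ consecutive batches by combining a martingale-type argument with the drift bounds from \cref{thm:hyperbolic_cosine_expectation}. The starting point is the observation that, by the warm-up analysis (\cref{thm:herd_weak_gap_bound} applied with the parameters $C = 1+\eps$, $\eps = \sqrt{(n/b)\log n}$), we have $\Gamma_1^t = \poly(n)$ with high probability; since $\gamma_2 \le \gamma_1$ this also controls $\Gamma_2^t$. I would first record an auxiliary lemma (the ``\cref{lem:batched_gamma_1_poly_implies}'' referenced in the text) stating that conditioned on $\Gamma_1^t \le \poly(n)$, two things hold: (i) the expected one-batch drop of $\Gamma_2$ satisfies $\Ex{\Gamma_2^{t+b}\mid\mathfrak F^t} \le \Gamma_2^t(1 - c_1\gamma_2\eps\delta b/n) + c_2 b\gamma_2\eps$ for constants $c_1,c_2$, which follows from \cref{lem:herd_batching_pot_changes} (verifying the preconditions of \cref{thm:hyperbolic_cosine_expectation} with $R=b$, $K=5(C-1)^2 b/n$, and noting $\gamma_2$ is a small enough constant multiple of the threshold), and (ii) a ``smoothness'' bound: within a single batch, $|\Delta\Gamma_2^{t+1}| \le n^{1/4}\cdot\sqrt{(n/b)\log n}$ (or a similar $\poly$-controlled quantity), obtained because a bin's load can change by at most $O(b/n \cdot \log n)$ worth of weight typically, and $\gamma_2$ is chosen small enough to make the exponential factors $1+o(1)$.

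Next I would run the following scheme. Fix the starting step $t$ (a multiple of $b$) and condition on the event $E = \{\Gamma_1^{t+jb} \le \poly(n)\text{ for all }j\in[0,\log^3 n]\}$, which holds with probability $\ge 1 - n^{-4}$ (say) by the warm-up bound and a union bound over the $\log^3 n$ batches. On $E$, the drift inequality from (i) shows that $\Ex{\Gamma_2^{(j+1)b}\mid\mathfrak F^{jb}}$ has a stationary value of order $n$; iterating and using that $\Gamma_2^t \le \poly(n)$ initially, after $O(\log^2 n)$ batches the expectation is already $O(n)$, and one can take $\tilde c n := 2\cdot\frac{8c}{\delta}n$ to absorb constants (this matches the definition in the statement). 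Then I would apply the bounded-difference concentration inequality \cref{lem:kutlin_3_3} — the version tolerating a rare large jump — to the martingale obtained by tracking $\Gamma_2$ over the batches: using the per-batch smoothness bound from (ii) as the bounded-difference parameter and the expected-drop bound to control the drift, this yields $\Pro{\Gamma_2^{t+jb} \ge \tilde c n} \le n^{-4}$ for each fixed $j$. A final union bound over $j\in[0,\log^3 n]$ and over the conditioning event $E$ gives the claimed $1 - n^{-3}$.

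One technical subtlety I would be careful about is the interplay between the two potentials: the smoothness bound (ii) only holds \emph{while} $\Gamma_1$ stays polynomial, so strictly the concentration argument must be carried out on the event $E$ and the excluded probability mass of $E^c$ folded into the final failure probability; this is exactly why the lemma is phrased as a high-probability statement rather than an expectation bound, and why we need $\Gamma_1$ (with the larger smoothing parameter, giving a sharper per-step increment control via $\gamma_1$ vs.\ $\gamma_2$) rather than working with $\Gamma_2$ alone. I would also need to check that the constant $\tilde c = 2\cdot\frac{8c}{\delta}$ is genuinely large enough: the factor $2$ gives the slack needed so that, after the concentration deviation is subtracted, the expectation bound $\frac{8c}{\delta}n$ from \cref{thm:hyperbolic_cosine_expectation} still implies $\Gamma_2 \le \tilde c n$ with the stated probability.

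I expect the main obstacle to be step (ii) — getting a clean, usable bound on the one-batch change $|\Delta\Gamma_2^{t+1}|$. The difficulty is that in a single batch a bin may receive up to $b$ balls (an event of tiny but nonzero probability), so a naive worst-case bound on the increment is useless for Azuma-type inequalities; this is precisely why the text invokes the specialized inequality \cref{lem:kutlin_3_3} that allows a rare bad event. Making the bookkeeping precise — identifying the ``good'' event on which increments are genuinely $n^{1/4}\cdot\sqrt{(n/b)\log n}$, bounding the probability and contribution of the complementary ``bad'' event, and threading this through the concentration inequality's hypotheses — will be the most delicate part of the argument. The drift computation (i), by contrast, is essentially a re-packaging of \cref{lem:herd_batching_pot_changes} and \cref{thm:hyperbolic_cosine_expectation} with $\gamma_2$ in place of $\gamma$, and should go through routinely.
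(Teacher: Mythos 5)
Your high-level ingredients are the right ones (two potentials with different smoothing parameters, a Lipschitz bound on $\Gamma_2$ conditional on $\Gamma_1 = \poly(n)$, and \cref{lem:kutlin_3_3} to tolerate the rare bad event), but the central concentration step as you describe it has a genuine gap. You propose to iterate the drift to conclude that $\Ex{\Gamma_2^{t+jb}} = \Oh(n)$ and then apply the bounded-differences inequality ``to the martingale obtained by tracking $\Gamma_2$ over the batches'' to get $\Pro{\Gamma_2^{t+jb} \ge \tilde{c} n} \le n^{-4}$ for each fixed $j$. This does not go through: \cref{lem:kutlin_3_3} requires independent coordinates, which holds only \emph{within} a single batch (conditioned on the filtration at the batch's start); across batches the allocations are not independent, because the probability allocation vector is re-applied to the re-sorted load vector, so altering one early ball can cascade through every subsequent allocation. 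Your smoothness bound (ii) controls the one-step change $|\Gamma_2^{s+1}-\Gamma_2^{s}|$, which is not the quantity a multi-batch bounded-differences argument needs (namely, the change in the \emph{final} value $\Gamma_2^{t+jb}$ under alteration of one early coordinate). Conversely, if you apply concentration only within the last batch, it concentrates $\Gamma_2^{t+jb}$ around $\Ex{\Gamma_2^{t+jb} \mid \mathfrak{F}^{t+(j-1)b}}$, a random quantity that is only $\Oh(n)$ \emph{in expectation}; Markov then loses the polynomially small failure probability you need.

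The paper's proof resolves this with a two-phase structure that your write-up replaces by the (insufficient) expectation iteration. First, a \emph{recovery} phase (\cref{lem:batched_gamma_1_poly_n_implies_gamma_2_linear_whp}): starting from $\Gamma_2 \le \tilde{c} n^{9}$ (Markov), the multiplicative drop $\Ex{\Gamma_2^{r+b} \mid \mathfrak{F}^r, \Gamma_2^r > \tilde{c}n} \le \Gamma_2^r (1-1/\log n)$ from \cref{lem:large_gamma_exponential_drop}~$(iii)$, applied to a killed potential over $\log^3 n$ batches, shows that \Whp\ $\Gamma_2 \le \tilde{c} n$ at \emph{some} batch-start $t+\rho b$ with $\rho \in [-\log^3 n, 0]$. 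Second, a \emph{stabilization} phase chained batch by batch: conditioned on $\Gamma_2^r \le \tilde{c} n$, \cref{lem:large_gamma_exponential_drop}~$(iv)$ gives $\Ex{\Gamma_2^{r+b} \mid \mathfrak{F}^r} \le \tilde{c} n - n/\log^2 n$, and only then is \cref{lem:kutlin_3_3} applied to the $b$ independent allocations of that single batch with deviation $T = n/\log^2 n$, yielding $\Pro{\Gamma_2^{r+b} > \tilde{c} n \mid \mathfrak{F}^r, \Gamma_2^r \le \tilde{c} n} \le 3n^{-4}$; a union bound over the $\Oh(\log^3 n)$ batches and over $\rho$ finishes. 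To repair your argument you would need to add both the recovery step (a high-probability statement, not an expectation bound) and the batch-by-batch conditioning on $\{\Gamma_2^r \le \tilde{c} n\}$, which is precisely where the margin between $\tilde{c} n - n/\log^2 n$ and $\tilde{c} n$ is consumed by the concentration deviation.
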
 }
	\addtocounter{lem}{-1}

\begin{rem} \label{rem:load_vector_shape}
The number of bins $i \in [n]$ with normalized load value $y_i^t$ at least $z$ is \Whp~at most $\tilde{c} n \cdot e^{-\gamma_2 z}$, where $\gamma_2 = \Theta(\sqrt{(n/b) \cdot \log n})$.
\end{rem}

The proof of this lemma is similar to the proofs in \cite[Section 5]{LS22Batched} and \cite[Section 5]{LS22Queries}, in that we use the interplay between two instances of the hyperbolic cosine potential $\Gamma_1 := \Gamma_1 (\gamma_1)$ and $\Gamma_2 := \Gamma_2(\gamma_2)$ with smoothing parameter $\gamma_2$ being a large constant factor smaller than $\gamma_1$. More specifically, we will be working with $\gamma_1 := \frac{\delta}{40S} \cdot \sqrt{n/(b \log n)}$ and $\gamma_2 := \frac{\gamma_1}{8 \cdot 30}$.

The rest of this section is organized as follows. In \cref{sec:step_1_preliminaries}, we establish some basic properties for the potentials $\Gamma_1$ and $\Gamma_2$ and in \cref{sec:gamma_linear_whp_complete} we use these to show that \Whp~$\Gamma_2^t = \Oh(n)$ for at least $\log^3 n$ batches, and complete the proof of \cref{lem:batching_gamma_linear_whp}. Then, in \cref{sec:step_two}, we complete the proof of \cref{thm:batching_strong_gap_bound}.

\subsubsection{Preliminaries} \label{sec:step_1_preliminaries}

We define the following event, for any step $t \geq 0$\[
\mathcal{H}^t := \left\lbrace  w^t \leq \frac{15}{\zeta} \cdot \log n \right\rbrace,
\]
which means that the weight of the ball sampled in step $t$ is $\Oh(\log n)$ (since by assumption $\zeta > 0$ is constant). By a simple Chernoff bound and a union bound, we can deduce that this holds for a $\poly(n)$-long interval.

\begin{lem}[{cf.~\cite[Lemma 5.4]{LS22Batched}}] \label{lem:many_h_i}
Consider any $\FiniteMgf(\zeta)$ distribution $\mathcal{W}$ with constant $\zeta > 0$. Then, for any steps $t_0 \geq 0$ and $t_1 \in [t_0, t_0 + n^3 \log^3 n]$, we have that
\[
\Pro{\bigcap_{s \in [t_0, t_1]} \mathcal{H}^s} \geq 1 - n^{-10}
\]
\end{lem}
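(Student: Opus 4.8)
The plan is the standard route for such statements: prove an exponential (Markov/Chernoff) tail bound for the weight at a single step, and then apply a union bound over the at most $n^3 \log^3 n + 1$ steps in the interval $[t_0, t_1]$. The crucial point is that the threshold $\frac{15}{\zeta}\log n$ in the definition of $\mathcal{H}^t$ is chosen so that the single-step failure probability is $n^{-15}$, which leaves ample room to absorb both the interval length and the polylogarithmic factor.

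Concretely, I would first fix a single step $s \in [t_0, t_1]$. Since $\mathcal{W}$ is a $\FiniteMgf(\zeta)$ distribution, $M := \Ex{e^{\zeta \mathcal{W}}}$ is a finite constant, and applying Markov's inequality to the nonnegative random variable $e^{\zeta w^s}$ gives
\[
\Pro{w^s > \frac{15}{\zeta}\log n} = \Pro{e^{\zeta w^s} > n^{15}} \leq \frac{\Ex{e^{\zeta w^s}}}{n^{15}} = \frac{M}{n^{15}}.
\]
Then I would union bound over all $s \in [t_0, t_1]$, of which there are at most $n^3 \log^3 n + 1 \leq n^4$ for $n$ large enough, obtaining
\[
\Pro{\bigcup_{s \in [t_0, t_1]} \left(\mathcal{H}^s\right)^{c}} \leq (n^3 \log^3 n + 1) \cdot \frac{M}{n^{15}} \leq \frac{M}{n^{11}} \leq n^{-10},
\]
where the last step uses that $M$ is constant and $n$ is sufficiently large. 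Taking complements yields the claimed bound $\Pro{\bigcap_{s \in [t_0,t_1]} \mathcal{H}^s} \geq 1 - n^{-10}$.

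There is no genuine obstacle here; the argument is essentially bookkeeping. The only points requiring mild care are: using precisely the value of $\zeta$ for which $\mathcal{W}$ has a finite moment generating function (so that $M < \infty$ is a legitimate constant); checking that $n^{15}$ dominates the product of the interval length $n^3 \log^3 n$ and the constant $M$ with a polynomial margin; and noting, as is implicit in all such $\Whp$-statements, that the inequality $M \cdot n^{-11} \leq n^{-10}$ only needs to hold for $n$ above some constant threshold.
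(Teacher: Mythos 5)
Your proof is correct and matches the paper's approach: the paper defers to \cite[Lemma 5.4]{LS22Batched} and describes the argument as ``a simple Chernoff bound and a union bound,'' which is exactly your exponential-Markov bound on $e^{\zeta w^s}$ followed by a union bound over the at most $n^3\log^3 n + 1$ steps. The bookkeeping ($n^{15}$ versus $n^4 \cdot M$) is handled correctly.
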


We will now show that when $\Gamma_1^t = \poly(n)$ and $\mathcal{H}^t$ holds, then $\Delta\Gamma_2^{t+1}$ is small.

\begin{lem} \label{lem:batched_gamma_1_poly_implies}
Consider any process satisfying the conditions in \cref{lem:batching_gamma_linear_whp} and any step $t \geq 0$, such that $\Gamma_1^{t} \leq 2\tilde{c} n^{26}$ and $\mathcal{H}^t$ holds. Then, we have that
\begin{align*}
(i) & \qquad \Gamma_2^t \leq n^{5/4}, \\
(ii) & \qquad |\Gamma_2^{t+1} - \Gamma_2^{t} | \leq n^{1/4} \cdot \sqrt{\frac{n}{b} \cdot \log n}.
\intertext{Further, let $\widehat{x}^t$ be the load vector obtained by moving any ball of the load vector $x^t$ to some other bin, then}
(iii) & \qquad \Gamma_1^t(\widehat{x}^t) \leq 2 \cdot \Gamma_1^t(x^t).
\end{align*}
\end{lem}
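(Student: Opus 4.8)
The plan is to dispatch the three parts in the order $(i)\to(ii)\to(iii)$, since $(i)$ supplies a \emph{per‑bin} load bound that is exactly what makes $(ii)$ go through, whereas $(iii)$ is self‑contained. \textbf{Part $(i)$:} I would use only the hypothesis $\Gamma_1^t\le 2\tilde c n^{26}$. Since $e^{\gamma_1|y_i^t|}\le e^{\gamma_1 y_i^t}+e^{-\gamma_1 y_i^t}=\Gamma_{1,i}^t\le\Gamma_1^t$, taking logarithms gives $|y_i^t|\le\gamma_1^{-1}\big(\log(2\tilde c)+26\log n\big)\le 27\gamma_1^{-1}\log n$ for every bin $i$, once $n$ exceeds an absolute constant. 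Plugging this into $\Gamma_{2,i}^t\le 2e^{\gamma_2|y_i^t|}$ and using $\gamma_2/\gamma_1=\tfrac1{240}$ gives the per‑bin bound $\Gamma_{2,i}^t\le 2e^{27\log n/240}=2n^{9/80}$; summing over the $n$ bins yields $\Gamma_2^t\le 2n^{89/80}\le n^{5/4}$, again for $n$ large (as $89/80<5/4$).

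\textbf{Part $(ii)$:} Allocating the ball of step $t+1$ (of weight $w\le\tfrac{15}{\zeta}\log n$ by $\mathcal H$) to a bin $i^\ast$ raises $y_{i^\ast}^t$ by $w(1-1/n)$ and lowers every other $y_j^t$ by $w/n$. I would first record, using $b\ge 2n\log n$, $\delta<1$ and $S\zeta\ge 1$, that $\theta:=\gamma_2 w\le\tfrac{15\delta}{9600\,S\zeta}\sqrt{(n/b)\log n}$ is at most a tiny absolute constant (in particular $\theta\le\ln2$). Then I bound the change coordinate by coordinate via a first‑order expansion of the exponentials: the $i^\ast$ contribution is at most $2\theta\,\Gamma_{2,i^\ast}^t\le 2\theta\cdot 2n^{9/80}$ by the per‑bin bound from $(i)$, and the aggregate contribution of the remaining bins is at most $(2\theta/n)\,\Gamma_2^t\le(2\theta/n)\,n^{5/4}=2\theta n^{1/4}$, again by $(i)$. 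Adding these gives $|\Gamma_2^{t+1}-\Gamma_2^t|\le 8\theta n^{1/4}$, and since $8\theta\le\tfrac{1}{30}\gamma_1 w\le\tfrac{15\delta}{1600\,S\zeta}\sqrt{(n/b)\log n}$, the leading constant collapses well below $1$, yielding the claimed bound. The crucial point here is that one must feed in the per‑bin estimate $|y_i^t|\le 27\gamma_1^{-1}\log n$ for the $i^\ast$ term; the global bound $\Gamma_2^t\le n^{5/4}$ alone would be too weak there.

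\textbf{Part $(iii)$:} Moving a ball of weight $w\le\tfrac{15}{\zeta}\log n$ from its current bin $a$ to another bin $b$ leaves $W^t$ (hence the average) unchanged and alters only coordinates $a,b$, with $y_a^t\mapsto y_a^t-w$ and $y_b^t\mapsto y_b^t+w$. On bin $b$ only the $\Phi$‑part grows, on bin $a$ only the $\Psi$‑part grows, so
\[
\Gamma_1^t(\widehat x^t)-\Gamma_1^t(x^t)\;\le\;(e^{\gamma_1 w}-1)\big(e^{\gamma_1 y_b^t}+e^{-\gamma_1 y_a^t}\big)\;\le\;(e^{\gamma_1 w}-1)\,\Gamma_1^t(x^t).
\]
It then remains only to check $\gamma_1 w<\ln 2$: indeed $\gamma_1 w\le\tfrac{15\delta}{40\sqrt2\,S\zeta}\le\tfrac{15}{40\sqrt2}<\ln 2$, using $b\ge 2n\log n$, $\delta<1$ and $S\ge 1/\zeta$. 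Hence $e^{\gamma_1 w}<2$ and $\Gamma_1^t(\widehat x^t)\le 2\,\Gamma_1^t(x^t)$.

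\textbf{Expected main obstacle.} None of the three parts needs a clever idea beyond routine Taylor estimates of $e^x-1$; the only real work is the bookkeeping of constants so that they collapse to exactly $1$ in $(ii)$ and $2$ in $(iii)$. This succeeds precisely because $\gamma_1,\gamma_2$ are taken proportional to $\sqrt{n/(b\log n)}$ and because the combination $b\ge 2n\log n$ together with $S\ge 1/\zeta$ pushes the products $\gamma_1 w$ and $\gamma_2 w$ below the relevant thresholds, and because $(i)$ provides a per‑bin load bound rather than merely a global potential bound (the latter being insufficient for the $i^\ast$ term in $(ii)$). A minor technical point to state carefully is the assumption $n\ge n_0$ for an absolute constant $n_0$, needed wherever I absorb constants into polynomial slack such as $2\le n^{11/80}$.
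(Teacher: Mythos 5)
Your proposal is correct and follows essentially the same route as the paper: part $(i)$ extracts the per-bin bound $|y_i^t|\le 27\gamma_1^{-1}\log n$ from $\Gamma_1^t\le 2\tilde c n^{26}$ and exploits $\gamma_2\ll\gamma_1$; part $(ii)$ splits into the allocated bin (using the per-bin bound) versus the remaining bins (using the aggregate bound), exactly as in the paper's four cases; and part $(iii)$ is the same $e^{\gamma_1 w^t}\le 2$ multiplicative estimate on the two affected bins. The only blemish is a harmless arithmetic slip ($1600$ should be $1200$ in the constant of part $(ii)$), which does not affect the conclusion.
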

\begin{proof}
Recall that $\gamma_1 := \frac{\delta}{40S} \cdot \sqrt{n/(b \log n)}$ and $\gamma_2 := \frac{\gamma_1}{8 \cdot 30}$. Consider any step $t \geq 0$, such that $\Gamma_1^{t} \leq 2\tilde{c} n^{26}$ and $\mathcal{H}^t$ holds. We start by bounding the load of any bin $i \in [n]$,
\begin{align} \label{eq:batching_gamma2_load_bound}
\Gamma_1^{t} \leq 2\tilde{c} n^{26} 
 & \Rightarrow e^{\gamma_1\cdot y_i^{t}} + e^{-\gamma_1 \cdot y_i^{t}} \leq \tilde{c} n^{26}
 \Rightarrow 
y_i^t \leq \frac{27}{\gamma_1} \log n \, \wedge \,
 -y_i^t \leq \frac{27}{\gamma_1} \log n,
\end{align}
where in the second implication we used $\log (2\tilde{c}) + \frac{26}{\gamma_1} \log n \leq \frac{27}{\gamma_1} \log n$, for sufficiently large $n$.

\textit{First statement.} Using \cref{eq:batching_gamma2_load_bound}, we bound the contribution of any bin $i \in [n]$ to $\Gamma_2^t$ as follows,
\begin{equation} \label{eq:gamma_i_bound}
\Gamma_{2i}^t = e^{\gamma_2 y_i^t} + e^{-\gamma_2 y_i^t} \leq 2 \cdot e^{\gamma_2 \cdot \frac{27}{\gamma_1} \log n} \leq 2 n^{1/8},
\end{equation}
using that $\gamma_2 := \frac{\gamma_1}{8 \cdot 30}$. By aggregating, we get the first claim $\Gamma_1^t = \sum_{i = 1}^n \Gamma_{1i}^t \leq 2 \cdot n \cdot n^{1/8} \leq n^{5/4}$.

\textit{Second statement}. Let bin $j \in [n]$ be the bin where the $j$-th ball was allocated. We consider the following cases for the contribution of a bin $i$ to $\Gamma_{2i}^t$:

\textbf{Case 1 [$i = j$ and $y_j^t \geq 0$]:} Since $j \in [n]$ is overloaded, we have that
\begin{align*}
\left|\Delta\Gamma_{2j}^{t+1} \right| & \leq \Gamma_{2j}^t \cdot e^{\gamma_2 \cdot \frac{15}{\zeta} \cdot \log n}- \Gamma_{2j}^t 
  \stackrel{(a)}{\leq} \Gamma_{2j}^t \cdot \Big( 1 + \gamma_2 \cdot \frac{30}{\zeta} \cdot \log n \Big)- \Gamma_{2j}^t \\
& = \Gamma_{2j}^t \cdot \gamma_2 \cdot \frac{30}{\zeta} \cdot \log n 
 \stackrel{(b)}{\leq} n^{1/8} \cdot \sqrt{\frac{n}{b} \cdot \log n},
\end{align*}
using in $(a)$ the Taylor estimate $e^v \leq 1 + 2v$ (for $v \leq 1$) and that $\gamma_2 \cdot \frac{15}{\zeta} \cdot \log n \leq 1$, since $\gamma_2 \leq \frac{\delta}{60S} \cdot \sqrt{\frac{n}{b\log n}}$ and $S \geq \frac{1}{\zeta}$ and in $(b)$ that  \cref{eq:gamma_i_bound}, $\gamma_2 \leq \frac{\delta}{60S} \cdot \sqrt{\frac{n}{b\log n}}$ and $S \geq \frac{1}{\zeta}$.

\textbf{Case 2 [$i = j$ and $y_j^t < 0$]:} Similarly, if $j$ is underloaded, we have that
\begin{align*}
\left|\Delta\Gamma_{2j}^{t+1}\right| & \leq \Gamma_{2j}^t - \Gamma_{2j}^t \cdot e^{-\gamma_2 \cdot \frac{15}{\zeta} \cdot \log n} \leq \Gamma_{2j}^t - \Gamma_{2j}^t \cdot \Big( 1 - \gamma_2 \cdot \frac{30}{\zeta} \cdot \log n\Big) \\
 & = \Gamma_{2j}^t \cdot \gamma_2 \cdot \frac{30}{\zeta} \cdot \log n 
 \leq n^{1/8} \cdot \sqrt{\frac{n}{b} \cdot \log n}.
\end{align*}

\textbf{Case 3 [$i \neq j$ and $y_i^t \geq 0$]:} The contribution of the rest of the bins is due to the change in the average load. More specifically, for any overloaded bin $i \in [n] \setminus \{ j \}$,
\begin{align*}
\left|\Delta\Gamma_{2i}^{t+1}\right| 
 & \leq \Gamma_{2i}^t \cdot e^{ \gamma_2 \cdot \frac{15}{\zeta} \cdot \frac{\log n}{n}}- \Gamma_{2i}^t \leq \Gamma_{2i}^t \cdot \left( 1 + 2 \cdot \gamma_2 \cdot \frac{15}{\zeta} \cdot \frac{\log n}{n}\right)- \Gamma_{2i}^t \\
 & = \Gamma_{2i}^t \cdot \gamma_2 \cdot \frac{30}{\zeta} \cdot \frac{\log n}{n} 
 \leq \sqrt{\frac{\log n}{bn}} \cdot n^{1/8}.
\end{align*}

\textbf{Case 4 [$i \neq j$ and $y_i^t < 0$]:} Similarly, for any underloaded bin $i \in [n] \setminus \{ j \}$,
\begin{align*}
\left|\Delta\Gamma_{2i}^{t+1}\right| 
 & \leq \Gamma_{2i}^t - \Gamma_{2i}^t \cdot e^{-\gamma_2 \cdot \frac{15}{\zeta} \cdot \frac{\log n}{n}} \\
 & \leq \Gamma_{2i}^t - \Gamma_{2i}^t \cdot \left( 1 - 2 \cdot \gamma_2 \cdot \frac{15}{\zeta} \cdot \frac{\log n}{n}\right) \\
 & = \Gamma_{2i}^t \cdot \gamma_2 \cdot \frac{30}{\zeta} \cdot \frac{\log n}{n} 
 \leq \sqrt{\frac{\log n}{bn}} \cdot n^{1/8}.
\end{align*}
Hence, aggregating over all bins\begin{align*}
\left|\Delta\Gamma_2^{t+1}\right| 
 & \leq \left|\Delta\Gamma_{2j}^{t+1} \right| + \sum_{i \in [n] \setminus \{ j \}} \left|\Delta\Gamma_{2i}^{t+1} \right|   \leq n^{1/8} \cdot \sqrt{\frac{n}{b} \cdot \log n} + n \cdot \sqrt{\frac{\log n}{bn}} \cdot n^{1/8} 
  \leq n^{1/4} \cdot \sqrt{\frac{n}{b} \cdot \log n},
\end{align*}
for sufficiently large $n$.

\textit{Third statement.} Let $i, j \in [n]$ be the differing bins between $x^t$ and $\widehat{x}^t$. Then since $\mathcal{H}^t$ holds, it follows that $w^t \leq \frac{15}{\zeta} \cdot \log n$, so for bin $i$,
\[
\Gamma_{1i}^t(\widehat{x}^t) \leq e^{\gamma_1 w^t} \cdot \Gamma_{1i}^t(x^t) \leq 2 \cdot \Gamma_{1i}^t(x^t),
\]
since $\gamma_1 \leq \frac{1}{40 \cdot S \cdot \log n}$ and $S > 1/\zeta$. Similarly, for bin $j$.
\[
\Gamma_{1j}^t(\widehat{x}^t) \leq e^{\gamma_1 w^t} \cdot \Gamma_{1j}^t(x^t) \leq 2 \cdot \Gamma_{1j}^t(x^t),
\]
Hence, 
\[
\Gamma_1^t(\widehat{x}^t) = \sum_{k = 1}^n \Gamma_{1k}^t(\widehat{x}^t) \leq \sum_{k = 1}^n 2 \cdot \Gamma_{1k}^t(x^t) = 2 \cdot \Gamma_1^t(x^t). \qedhere
\]
\end{proof}

Next, we will show that $\ex{\Gamma_2} = \Oh(n)$ and that when $\Gamma_2$ is sufficiently large, it drops in expectation over the next batch.

\begin{lem}
\label{lem:large_gamma_exponential_drop}
Consider any process satisfying the conditions in \cref{lem:batching_gamma_linear_whp}. Then, there exists a constant $\tilde{c} := \tilde{c}(\delta)$ such that for any step $t\geq 0$ being a multiple of $b$, \[
(i) \quad \ex{\Gamma_1^t} \leq \frac{\tilde{c}}{2} \cdot n,
\quad \text{ and } \quad (ii) \quad \ex{\Gamma_2^t} \leq \frac{\tilde{c}}{2} \cdot n.
\]
Further, \[
(iii) \quad \Ex{\Gamma_2^{t+b} \,\, \Big\vert\,\, \mathfrak{F}^{t},\Gamma_2^t \geq \tilde{c} n} \leq 
\Gamma_2^{t} \cdot \Big(1-\frac{1}{\log n}\Big),
\]
and 
\[
(iv) \quad \Ex{\Gamma_2^{t+b} \,\,\Big\vert\,\, \mathfrak{F}^{t},\Gamma_2^t \leq \tilde{c} n} \leq 
\tilde{c} n - \frac{n}{\log^2 n}.
\]
\end{lem}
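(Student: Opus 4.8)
The plan is to deduce all four parts from the drift theorem \cref{thm:hyperbolic_cosine_expectation}, taking a ``round'' to be one batch of $b$ consecutive allocations. The first task is to verify its preconditions for both instances of the hyperbolic cosine potential, $\Gamma_1 := \Gamma_1(\gamma_1)$ and $\Gamma_2 := \Gamma_2(\gamma_2)$. \cref{lem:herd_batching_pot_changes} supplies exactly the required bounds on $\Ex{\Phi^{t+b}\mid\mathfrak{F}^t}$ and $\Ex{\Psi^{t+b}\mid\mathfrak{F}^t}$ with $R := b$ and $K := 5(C-1)^2 b/n$, and its own hypotheses hold here: since $b \geq 2n\log n$, the value $C - 1 = \eps = \sqrt{(n/b)\log n}$ lies in $(0, 1/\sqrt2]$, so $C \in (1,1.9)$; the bound $b \geq \frac{2CS}{(C-1)^2}n$ holds because $\frac{2CS}{(C-1)^2}n = \frac{2(1+\eps)S}{\log n}\cdot b \leq b$ for large $n$; and $\gamma_1 \leq \frac{n}{2(C-1)b} = \tfrac12\sqrt{n/(b\log n)}$ since $\gamma_1 = \frac{\delta}{40S}\sqrt{n/(b\log n)}$ and $\delta \leq 20S$. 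Because $(C-1)^2 = (n/b)\log n$, we get $K = 5\log n$ and $\frac{\eps\delta}{8K} = \frac{\delta}{40}\sqrt{n/(b\log n)}$, so $\gamma_1 \leq \min\{1, \tfrac{\eps\delta}{8K}\}$ (using $S \geq 1$, and again $b \geq 2n\log n$ for $\gamma_1 \leq 1$), and $\gamma_2 = \gamma_1/240$ is only smaller; hence \cref{thm:hyperbolic_cosine_expectation} applies to both potentials. Parts $(i)$ and $(ii)$ are then immediate, since that theorem gives $\Ex{\Gamma_1^t}, \Ex{\Gamma_2^t} \leq \frac{8c}{\delta}n$ for the constant $c = c(\delta)$ from the theorem, and $\frac{8c}{\delta}n = \frac{\tilde c}{2}n$ by the definition $\tilde c := 2\cdot\frac{8c}{\delta}$ fixed in \cref{lem:batching_gamma_linear_whp}.

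For $(iii)$ and $(iv)$ I would use the one-round drift inequality of \cref{thm:hyperbolic_cosine_expectation}, which here reads $\Ex{\Gamma_2^{t+b}\mid\mathfrak{F}^t} \leq \Gamma_2^t(1-B) + A$ with $B := b\cdot\frac{\gamma_2\eps\delta}{8n}$ and $A := b\cdot c\gamma_2\eps$. The key computation is $\gamma_2\eps = \frac{\delta}{9600S}\cdot\frac{n}{b}$ (from $\gamma_2 = \frac{\delta}{9600S}\sqrt{n/(b\log n)}$ and $\eps = \sqrt{(n\log n)/b}$), so $B = \frac{\delta^2}{76800S}$ is a positive constant, $A = \frac{c\delta}{9600S}n$, and in particular $A = B\cdot\frac{\tilde c}{2}n$. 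On the event $\{\Gamma_2^t \geq \tilde c n\}$ this yields $\Ex{\Gamma_2^{t+b}\mid\mathfrak{F}^t} \leq \Gamma_2^t - B\big(\Gamma_2^t - \tfrac{\tilde c}{2}n\big) \leq \Gamma_2^t\big(1 - \tfrac{B}{2}\big)$, using $\Gamma_2^t - \tfrac{\tilde c}{2}n \geq \tfrac{\tilde c}{2}n \geq \tfrac12\Gamma_2^t$; since $\tfrac{B}{2}$ is a constant exceeding $\tfrac{1}{\log n}$ for large $n$, this is at most $\Gamma_2^t(1 - \tfrac{1}{\log n})$, giving $(iii)$. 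On the event $\{\Gamma_2^t \leq \tilde c n\}$, using $1 - B \geq 0$ we bound instead $\Ex{\Gamma_2^{t+b}\mid\mathfrak{F}^t} \leq \tilde c n(1-B) + B\tfrac{\tilde c}{2}n = \tilde c n - \tfrac{B\tilde c}{2}n$, and $\tfrac{B\tilde c}{2}$ is a constant exceeding $\tfrac{1}{\log^2 n}$ for large $n$, giving $(iv)$.

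The whole argument is thus essentially bookkeeping. The only point demanding care is tracking the numeric constants so that the preconditions of \cref{thm:hyperbolic_cosine_expectation} and of \cref{lem:herd_batching_pot_changes} are met by the specific $\gamma_1, \gamma_2$, and so that the drift recursion's ``equilibrium level'' $\frac{8c}{\delta}n$ sits a factor $2$ below $\tilde c n$ --- it is precisely this factor-$2$ slack, built into the definition of $\tilde c$, that turns into the constant multiplicative drop of $(iii)$ and the $\Theta(n/\log^2 n)$ additive drop of $(iv)$. I do not expect any genuine obstacle beyond this.
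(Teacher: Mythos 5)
Your proposal is correct and follows essentially the same route as the paper: verify the preconditions of \cref{lem:herd_batching_pot_changes} and \cref{thm:hyperbolic_cosine_expectation} for $\gamma_1$ (hence $\gamma_2$), read off $(i)$--$(ii)$ from the expectation bound, and derive $(iii)$--$(iv)$ from the one-batch drift inequality, where your constant $B$ equals twice the paper's $\tilde c_1$ and your identity $A = B\cdot\tfrac{\tilde c}{2}n$ performs the same cancellation. The only blemish is the chain ``$\Gamma_2^t-\tfrac{\tilde c}{2}n\geq\tfrac{\tilde c}{2}n\geq\tfrac12\Gamma_2^t$'' in $(iii)$, whose second inequality is reversed on the event $\{\Gamma_2^t\geq\tilde c n\}$; the fact you actually need, $\Gamma_2^t-\tfrac{\tilde c}{2}n\geq\tfrac12\Gamma_2^t$, follows directly from $\Gamma_2^t\geq\tilde c n$, so the conclusion stands.
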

\begin{proof}
\textit{First/Second statement.} Recall that $C = 1+\eps$ and $\eps = \sqrt{\frac{n}{b} \cdot \log n}$. By \cref{lem:herd_batching_pot_changes} with $K := 5 \cdot (C - 1)^2 \cdot \frac{b}{n}$, $R := b$ and $\gamma_1 := \frac{\delta}{40S} \cdot \sqrt{\frac{n}{b \log n}}$, the preconditions for $\Phi := \Phi(\gamma_1)$ and $\Psi := \Psi(\gamma_1)$ in \cref{thm:hyperbolic_cosine_expectation} are satisfied. To apply \cref{thm:hyperbolic_cosine_expectation} we just need to verify $(i)$ that $\gamma_1 \leq \frac{\eps\delta}{8K} \leq 1$, which holds since
\[
\frac{\delta}{40S} \cdot \sqrt{\frac{n}{b \log n}}
 \leq \frac{\delta}{40} \cdot \sqrt{\frac{n}{b \log n}}
 = \frac{\eps\delta}{8K}
 \leq 1
\]
and $(ii)$ that $b \geq \frac{2CS}{(C-1)^2} \cdot n$, which follows since\[
\frac{2CS}{(C-1)^2} \cdot n = 2CS \cdot \frac{b}{\log n} \leq b.
\]
Hence, by \cref{thm:hyperbolic_cosine_expectation} we get the conclusion by setting $\tilde{c} := 16 c/\delta$, for some constant $c := c(\delta) > 0$.

Similarly for the potential $\Gamma_2 := \Gamma_2(\gamma_2)$ since $\gamma_2 \leq \gamma_1$. 

\textit{Third statement.} Furthermore, by \cref{lem:herd_batching_pot_changes} and \cref{thm:hyperbolic_cosine_expectation}, we also get that for any $t \geq 0$,
\begin{equation} \label{eq:tilde_gamma_drop}
\Ex{\left. \Gamma_2^{t+b} \,\,\right|\,\, \mathfrak{F}^t} \leq \Gamma_2^t \cdot \Big(1 - b \cdot \frac{\eps\delta}{8n} \cdot \gamma_2\Big) + b \cdot c\gamma_2\eps.
\end{equation}
We define the constant
\begin{align*}
\tilde{c}_1 
 & := \frac{1}{2} \cdot b \cdot \frac{\eps\delta}{8n} \cdot \gamma_2 
 = b \cdot \frac{\delta^2}{16 \cdot 8 \cdot 30 \cdot 40S \cdot n} \cdot \sqrt{\frac{n}{b \log n}} \cdot \sqrt{\frac{n \log n}{b}} \\
 & = \frac{\delta^2}{16 \cdot 8 \cdot 30 \cdot 40 \cdot S}.
\end{align*}
When $\big\{ \Gamma_2^{t} \geq \tilde{c} n\big \}$ holds, then \cref{eq:tilde_gamma_drop} yields,
\begin{align*}
\Ex{\Gamma_2^{t+b} \,\Big\vert\, \mathfrak{F}^t, \Gamma_2^{t} \geq \tilde{c} n} 
 & \leq \Gamma_2^{t} \cdot \Big(1 - 2 \cdot \tilde{c}_1 \Big) +  b \cdot c\gamma_2\eps \\
 & = \Gamma_2^{t} - \tilde{c}_1 \cdot \Gamma_2^{t} + \Big(b \cdot c\gamma_2\eps - \tilde{c}_1 \cdot \Gamma_2^{t}\Big)
 \\  
 & \leq \Gamma_2^{t} - \tilde{c}_1 \cdot \Gamma_2^{t} + \Big(b \cdot c\gamma_2\eps- \frac{1}{2} \cdot b \cdot \frac{\eps\delta}{8n} \cdot \gamma_2 \cdot \frac{16 c}{\delta} \cdot n \Big) \\ 
 & \leq \Gamma_2^{t} \cdot \Big(1-\frac{1}{\log n} \Big).
\end{align*}
\textit{Fourth statement.} Similarly, when $\Gamma_1^t < \tilde{c} n$, \cref{eq:tilde_gamma_drop} yields,
\begin{align*}
\Ex{\Gamma_2^{t+b} \,\Big\vert\, \mathfrak{F}^t, \Gamma_2^{t} < \tilde{c} n} 
 & \leq \tilde{c} n \cdot \Big(1 - 2 \cdot \tilde{c}_1\Big) + b \cdot c\gamma_2\eps \\
 & = \tilde{c} n - \tilde{c} \cdot \tilde{c}_1 \cdot n + \Big(b \cdot c\gamma_2\eps- \tilde{c} \cdot \tilde{c}_1 \cdot n \Big)
 \\ 
 &= \tilde{c} n - \frac{\tilde{c}}{\log n} \cdot n \leq \tilde{c} n - \frac{n}{\log^2 n}. \qedhere
\end{align*}
\end{proof}

In the next lemma, we show that \Whp~$\Gamma_1$ is $\poly(n)$ for every step in an interval of length $2 b \log^3 n$.

\begin{lem} \label{lem:gamma_continuous}
Let $\tilde{c} := 2 \cdot \frac{8c}{\delta}$ be the constant defined in \cref{lem:large_gamma_exponential_drop}. For any $b \in [2n \log n, n^3]$ and for any step $t \geq 0$ being a multiple of $b$,
\[
\Pro{ \bigcap_{s \in [t, t + 2b \log^ 3 n]} \left\{ \Gamma_1^{s} \leq \tilde{c} n^{26} \right\} } \geq 1 - n^{-10}.
\]
\end{lem}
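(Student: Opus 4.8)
The plan is to combine the expected-value bound $\ex{\Gamma_1^s} \leq \frac{\tilde c}{2} n$ from \cref{lem:large_gamma_exponential_drop}~$(i)$ with a union bound over the $\Oh(\log^3 n)$ batch-endpoints in the interval, and then to ``fill in'' the steps within each batch using the third statement of \cref{lem:batched_gamma_1_poly_implies}. First I would note that $\Gamma_1$ is evaluated only at the $L := 2\log^3 n + 1$ steps of the form $t + j b$ inside $[t, t + 2b\log^3 n]$; at each such step Markov's inequality gives $\Pro{\Gamma_1^{t + jb} > \tilde c n^{13}} \leq \frac{1}{2} n^{-12}$, so a union bound over all $L = \poly(\log n) \leq n$ of them yields $\Pro{\bigcap_j \{\Gamma_1^{t+jb} \leq \tilde c n^{13}\}} \geq 1 - n^{-11}$, say, for $n$ large. (I have some slack in the exponent; I would pick it so that the within-batch blow-up below still leaves me below $\tilde c n^{26}$.)

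Next I would control $\Gamma_1$ at the intermediate steps $s$ that are \emph{not} multiples of $b$. Fix a batch starting at a multiple $t' = t + jb$ of $b$ on which $\{\Gamma_1^{t'} \leq \tilde c n^{13}\}$ holds. Within that batch, each of the $b$ single-ball allocations moves one ball (of weight $w^s$) from the ``average'' to one bin; conditioned on the event $\bigcap_{s} \mathcal{H}^s$ from \cref{lem:many_h_i} (which holds with probability $\geq 1 - n^{-10}$ over the whole interval, since $2b\log^3 n \leq n^3 \log^3 n$), each weight satisfies $w^s \leq \frac{15}{\zeta}\log n$, and \cref{lem:batched_gamma_1_poly_implies}~$(iii)$ tells us that a single such move can at most double $\Gamma_1$. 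Strictly, \cref{lem:batched_gamma_1_poly_implies}~$(iii)$ is stated for a move between two \emph{existing} bins and requires $\Gamma_1^{t}\leq 2\tilde c n^{26}$ to invoke its hypothesis-checking, so I would argue inductively across the $\leq b$ steps of the batch: as long as the running value stays below $2\tilde c n^{26}$ and $\mathcal{H}^s$ holds, each step multiplies $\Gamma_1$ by at most $e^{\gamma_1 w^s} \leq 2$ (this is exactly the computation in the proof of $(iii)$, applied to the single bin receiving the ball, plus the uniform decrease of all other $\Gamma_{1i}$ from the rising average, which only helps). Hence after $j'\leq b \leq n^3$ steps the value is at most $\tilde c n^{13}\cdot 2^{n^3}$ — that is far too crude; instead I would observe that only the \emph{one} bin receiving ball $s$ grows, so over a whole batch at most $b$ bins are touched and the total multiplicative growth of $\Gamma_1$ is at most $2^{?}$...

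Let me restate the within-batch bound more carefully, because this is the main obstacle: I need that $\Gamma_1$ does not blow up by more than a $\poly(n)$ factor over one batch. The clean way is to bound $\Gamma_1^{t'+b}$ directly from $\Gamma_1^{t'}$: for each bin $i$, $y_i^{t'+b} - y_i^{t'}$ equals (its intra-batch weight gain) minus $\frac{1}{n}\sum_{s} w^{t'+s}$; conditioned on $\bigcap_s \mathcal{H}^s$, the subtracted term is at most $\frac{b}{n}\cdot\frac{15}{\zeta}\log n$, and the additive weight gain of bin $i$ is at most $b \cdot \frac{15}{\zeta}\log n$ deterministically. Therefore $e^{\gamma_1 |y_i^{t'+b}|} \leq e^{\gamma_1 |y_i^{t'}|} \cdot e^{\gamma_1 b \cdot \frac{15}{\zeta}\log n}$, and since $\gamma_1 = \frac{\delta}{40S}\sqrt{n/(b\log n)}$ we get $\gamma_1 b \cdot \frac{15}{\zeta}\log n = \Oh(\sqrt{bn\log n}) = \Oh(n^{5/2}\sqrt{\log n})$ — again too large. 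The resolution is that the subtracted average term is the same for all bins, so it shifts the whole vector uniformly; I really only need to track the \emph{sorted} vector, and a uniform shift of all coordinates by a common amount $a$ multiplies $\Phi$ by $e^{\gamma_1 a}$ and $\Psi$ by $e^{-\gamma_1 a}$, so $\Gamma_1$ grows by at most $e^{\gamma_1 |a|}$; and the per-bin additive gain is nonzero for only $b$ bins, for each of which \cref{lem:batched_gamma_1_poly_implies}~$(iii)$'s computation gives a factor $e^{\gamma_1 w^s}\leq 2$. So I would: (a) peel off the uniform shift $a = -\frac{1}{n}\sum_s w^{t'+s}$, costing a factor $e^{\gamma_1|a|} = e^{\Oh(\gamma_1 (b/n)\log n)} = e^{\Oh(\sqrt{(b/n)\log n}\cdot \frac{1}{n})}\cdot\ldots$ — hmm, $\gamma_1 \cdot (b/n)\log n = \frac{\delta}{40S}\sqrt{b\log n / n} = \Oh(n\sqrt{\log n})$, still bad.

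Given the difficulty of getting a tight within-batch bound cleanly, the honest plan is: invoke \cref{lem:batched_gamma_1_poly_implies}~$(iii)$ \emph{iteratively} exactly $b$ times but starting from the \emph{much smaller} bound guaranteed at batch-endpoints — namely use a sharper Markov bound $\Pro{\Gamma_1^{t+jb} > 2\tilde c n^{2}} \leq n^{-1}$ is false too. The real mechanism, which I now believe is intended, is: condition on $\{\Gamma_1^{t+jb}\le 2\tilde c n^{26}\}$ at the batch start (true \whp{} by Markov with exponent $26$ against $\ex{\Gamma_1}=\Oh(n)$, losing only $n^{-25}$), then observe that within the batch \cref{lem:batched_gamma_1_poly_implies}~$(iii)$ applies at \emph{each} step as long as $\Gamma_1$ stays below $2\tilde c n^{26}$, and each step only \emph{decreases} $\Gamma_1$ in expectation (by \cref{eq:tilde_gamma_drop}-type drift applied to the single-ball sub-steps, since $\Gamma_1$ is a supermartingale once large) while changing it by at most a bounded additive amount on the scale $n^{1/4}\sqrt{(n/b)\log n}\ll n^{26}$ per step (the analogue of $(ii)$ for $\gamma_1$, which one re-derives identically); hence by a one-sided Azuma/optional-stopping argument $\Gamma_1$ cannot climb from $\Oh(n^{26})$ past, say, $\tilde c n^{26}\cdot 2$ within $b\le n^3$ steps except with probability $\exp(-\Omega(n^{?}))\le n^{-10}$. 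So the skeleton is: \textbf{(1)} union bound + Markov at the $\Oh(\log^3 n)$ batch endpoints; \textbf{(2)} condition on $\bigcap\mathcal{H}^s$ via \cref{lem:many_h_i}; \textbf{(3)} within each good batch, use a supermartingale-plus-bounded-differences argument (the drift from \cref{thm:hyperbolic_cosine_expectation}/\cref{lem:herd_batching_pot_changes} at the sub-step level, and the single-step change bound re-derived as in \cref{lem:batched_gamma_1_poly_implies}~$(ii)$ but with $\gamma_1$) to show $\Gamma_1$ stays $\le \tilde c n^{26}$ for all intra-batch steps; \textbf{(4)} take a final union bound over everything. The main obstacle is precisely step~(3): getting the within-batch control right, i.e.\ showing that between the (sparse, well-controlled) batch boundaries the potential cannot spike above the $n^{26}$ threshold — this needs a martingale concentration argument over $b$ steps rather than a naive deterministic bound, and that is where the bulk of the work lies.
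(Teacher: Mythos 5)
You have the first half of the paper's argument (Markov plus a union bound at the $\Oh(\log^3 n)$ batch endpoints, using $\ex{\Gamma_1}\leq \tfrac{\tilde c}{2}n$ from \cref{lem:large_gamma_exponential_drop}~$(i)$), but you never find the mechanism that controls the intermediate steps, and you say so yourself. The missing idea is simple: the product-form computation in \cref{lem:herd_batching_pot_changes} (inequalities \cref{eq:phi_batched_i} and \cref{eq:psi_batched_i}) is valid for \emph{any} number $r\in[0,b)$ of allocations into the batch, not just for $r=b$. Since every exponent $r\leq b$ and $\gamma_1(C-1)\tfrac{r}{n}=\Oh(1)$, this gives directly
\[
\Ex{\left.\Gamma_1^{t+s\cdot b+r}\,\right|\,\mathfrak{F}^{t+s\cdot b}}\;\leq\; 2\cdot \Gamma_1^{t+s\cdot b}
\]
for every intermediate step. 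Markov's inequality with the generous slack factor $n^{14}$ then gives failure probability $2n^{-14}$ per step, which survives a union bound over all $2b\log^3 n\leq 2n^3\log^3 n$ intermediate steps; combined with the endpoint bound $\tilde c n^{12}$ this yields the threshold $\tilde c n^{26}$. No conditioning on $\mathcal{H}^s$, no martingale argument, and no bounded-difference inequality is needed for this lemma.

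Your fallback plan (a supermartingale/Azuma argument within each batch) does not close the gap as sketched. First, the single-step change bound you want to borrow is \cref{lem:batched_gamma_1_poly_implies}~$(ii)$, which is a statement about $\Gamma_2$; its proof crucially uses $\gamma_2=\gamma_1/(8\cdot 30)$ so that each per-bin contribution is at most $2n^{1/8}$. ``Re-deriving it identically with $\gamma_1$'' starting from $\Gamma_1\leq 2\tilde c n^{26}$ gives per-bin contributions up to $\Theta(n^{27})$ and hence per-step changes of order $n^{27}\gamma_1\log n$, far too large for concentration at scale $n^{26}$ over $b$ steps. Second, your claim that $\Gamma_1$ ``only decreases in expectation at the sub-step level once large'' is not established: within a batch the allocation probabilities are fixed relative to the sorted order at the batch start, so condition $\mathcal{C}_1$ gives no per-sub-step drift relative to the current configuration. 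The within-batch control genuinely has to come from the expectation bound above, not from concentration.
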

\begin{proof}
We will start by bounding $\Gamma_1^s$ at steps $s$ being a multiple of $b$. Using \cref{lem:large_gamma_exponential_drop}~$(i)$, Markov's inequality and the union bound over $2 \log^3 n + 1$ steps, we have for any $t \geq 0$,
\begin{equation} \label{eq:base_union_bound}
\Pro{ \bigcap_{s \in [0, 2 \log^ 3 n]} \left\{ \Gamma_1^{t + s \cdot b} \leq \tilde{c} n^{12} \right\} } \geq 1 - (4\log^3 n) \cdot n^{-11}.
\end{equation}
Now, assuming that $\Gamma_1^{t + s \cdot b} \leq \tilde{c} n^{12}$, we will upper bound $\Gamma_1$ for the steps in between, i.e., for $\Gamma_1^{t + s \cdot b + r}$ for any $r \in [0, b)$. To this end, recalling that $\Gamma_{1i}^{t + s \cdot b + r} := \Phi_{1i}^{t + s \cdot b + r} + \Psi_{1i}^{t + s \cdot b + r}$, we will upper bound for each bin $i \in [n]$ the terms $\Phi_{1i}^{t + s \cdot b + r}$ and $\Psi_{1i}^{t + s \cdot b + r}$ separately. Proceeding using \cref{eq:phi_batched_i} in \cref{lem:herd_batching_pot_changes} (since $\gamma_1 \leq 1$ and $p$ satisfies condition $\mathcal{C}_3$),
\begin{align*}
\Ex{\left. \Phi_{1i}^{t + s \cdot b + r} \,\right|\, \mathfrak{F}^{t + s \cdot b}, \Phi_{1i}^{t + s \cdot b}}
 & \leq  \Phi_{1i}^{t + s \cdot b} \cdot \left( 1 + \left(p_i - \frac{1}{n}\right) \cdot \gamma_1 + 2 \cdot p_i \cdot S\gamma_1^2 \right)^r \\ 
 & \stackrel{(a)}{\leq} \Phi_{1i}^{t + s \cdot b} \cdot \left( 1 + \frac{C-1}{n} \cdot \gamma_1 + 2 \cdot \frac{C}{n} \cdot S\gamma_1^2 \right)^r \\ 
 & \stackrel{(b)}{\leq} \Phi_{1i}^{t + s \cdot b} \cdot \left( 1 + 2 \cdot \frac{C-1}{n} \cdot \gamma_1 \right)^r \\ 
 & \leq \Phi_{1i}^{t + s \cdot b} \cdot e^{2\gamma_1 (C-1) \cdot \frac{r}{n}} \\
 & \stackrel{(c)}{\leq} 2 \cdot \Phi_{1i}^{t + s \cdot b},
\end{align*}
using in $(a)$ that $p_i - \frac{1}{n} \leq \frac{C - 1}{n}$ by condition $\mathcal{C}_3$, $(b)$ that $\gamma_1 \leq \frac{C-1}{2CS}$ (as in \cref{eq:alpha_second_bound}) and in $(c)$~that $\gamma_1 \leq \frac{\delta}{40} \cdot \sqrt{\frac{n}{b \log n}}$ and $C - 1 = \sqrt{\frac{n \log n}{b}}$. Similarly, using \cref{eq:psi_batched_i} in \cref{lem:herd_batching_pot_changes},
\begin{align*}
\Ex{\left. \Psi_{1i}^{t + s \cdot b + r} \,\right|\, \mathfrak{F}^{t + s \cdot b}, \Psi_{1i}^{t + s \cdot b}}
 & \leq \Psi_{1i}^{t + s \cdot b} \cdot \left( 1 + \left( \frac{1}{n} - p_i\right) \cdot \gamma_1 + 2 \cdot p_i \cdot S\gamma_1^2 \right)^r \\ 
 & \stackrel{(a)}{\leq} \Psi_{1i}^{t + s \cdot b} \cdot \left( 1 + \frac{C-1}{n} \cdot \gamma_1 + 2 \cdot \frac{C}{\gamma_1} \cdot S\gamma_1^2 \right)^r \\ 
 & \stackrel{(b)}{\leq} \Psi_{1i}^{t + s \cdot b} \cdot \left( 1 + 2 \cdot \frac{C - 1}{n} \cdot \gamma_1 \right)^r \\ 
 & \leq \Psi_{1i}^{t + s \cdot b} \cdot e^{2\gamma (C-1) \cdot \frac{r}{n}} \\
 & \stackrel{(c)}{\leq} 2 \cdot \Psi_{1i}^{t + s \cdot b},
\end{align*}
using in $(a)$ that $\frac{1}{n} - p_i \leq \frac{C - 1}{n}$ by condition $\mathcal{C}_3$, $(b)$ that $\gamma_1 \leq \frac{C - 1}{2CS}$ and in $(c)$ that $\gamma_1 \leq \frac{\delta}{40} \cdot \sqrt{\frac{n}{b \log n}}$ and $C - 1 = \sqrt{\frac{n \log n}{b}}$. Hence, combining and aggregating over the bins,
\[
\Ex{\left. \Gamma_1^{t + s \cdot b + r} \, \right| \, \mathfrak{F}^{t + s \cdot b}, \Gamma_1^{t + s \cdot b}} \leq 2 \cdot \Gamma_1^{t + s \cdot b}.
\]
Applying Markov's inequality, for any $r \in [0, b)$,
\[
\Pro{\Gamma_1^{t + s \cdot b + r} \leq n^{14} \cdot \Gamma_1^{t + s \cdot b}} \geq 1 - 2 n^{-14}.
\]
Hence, by a union bound over the $2b \log^3 n \leq 2 n^3 \log^3 n$ possible steps (since $b \leq n^3$) for $s \in [0, 2\log^3 n]$ and $r \in [0, b)$,%
\begin{align}
\Pro{ \bigcap_{r \in [0, b]}\bigcap_{s\in [0, 2\log^3 n]} \left\{ \Gamma_1^{t + s \cdot b + r} \leq n^{14} \cdot \Gamma_1^{t + s \cdot b} \right\} } 
 \geq 1 - 2n^{-14} \cdot 2 b \log^3 n \geq 1 - \frac{1}{2} n^{-10}. \label{eq:double_intersection_lb}
\end{align}
Finally, taking the union bound of \cref{eq:base_union_bound} and \cref{eq:double_intersection_lb}, we conclude
\begin{align*}
\Pro{ \bigcap_{s \in [t, t + 2b \log^ 3 n]} \left\{ \Gamma_1^{s} \leq \tilde{c} n^{26} \right\} }
& \geq \Pr\Bigg[ \bigcap_{r \in [0, b]}\bigcap_{s\in [0, 2\log^3 n]} \left\{ \Gamma_1^{t + s \cdot b + r} \leq n^{14} \cdot \Gamma_1^{t + s \cdot b} \right\} \\
& \qquad \qquad \cap \bigcap_{s \in [0, 2 \log^ 3 n]} \left\{ \Gamma_1^{t + s \cdot b} \leq \tilde{c} n^{12} \right\} \Bigg]\\
 & \geq 1 - \frac{1}{2} n^{-10} - (4\log^3 n) \cdot n^{-11} \geq 1 - n^{-10}. \qedhere
\end{align*}
\end{proof}

We will now show that \Whp~there is a step every $b\log^3 n$ steps, such that the exponential potential $\Gamma_2$ becomes $\Oh(n)$. We call this the \textit{recovery} phase.

\begin{lem}[Recovery] \label{lem:batched_gamma_1_poly_n_implies_gamma_2_linear_whp}
Consider any process satisfying the conditions in \cref{lem:batching_gamma_linear_whp}. Let $\tilde{c} := 2 \cdot \frac{8c}{\delta}$ be the constant defined in \cref{lem:large_gamma_exponential_drop}. For any step $t \geq 0$ being a multiple of $b$,
\[
\Pro{\bigcup_{s \in [0, \log^3 n]} \left\lbrace \Gamma_2^{t + s \cdot b} \leq \tilde{c} n \right\rbrace} \geq 1 - 2 n^{-8}.
\]
\end{lem}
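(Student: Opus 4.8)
The plan is to bound the probability of the complementary ``never recovers'' event $\mathcal{E} := \bigcap_{s \in [0,\log^3 n]} \{ \Gamma_2^{t+sb} > \tilde c n \}$ and to show $\Pro{\mathcal{E}} \leq n^{-8}$, which gives the claim with room to spare (the extra factor of two in the statement can be spent on a union bound with \cref{lem:gamma_continuous}, ensuring $\Gamma_1$ stays polynomial over the window, or simply left as slack for large $n$). The engine is \cref{lem:large_gamma_exponential_drop}: part $(iii)$ states that whenever $\Gamma_2^s \geq \tilde c n$ the potential contracts in expectation over one batch, $\ex{\Gamma_2^{s+b} \mid \mathfrak{F}^s} \leq \Gamma_2^s (1-1/\log n)$, and part $(ii)$ provides the base bound $\ex{\Gamma_2^t} \leq \tfrac{\tilde c}{2} n$. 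Iterating the contraction over $\log^3 n$ batches would shrink the expectation by a factor $(1-1/\log n)^{\log^3 n} \leq e^{-\log^2 n}$, which beats every polynomial, so a single application of Markov's inequality at the end of the window finishes. The only real care is that the contraction in $(iii)$ is conditional on $\Gamma_2 \geq \tilde c n$ — the additive term $bc\gamma_2\eps$ of \cref{eq:tilde_gamma_drop} was absorbed into the multiplicative drop only under that hypothesis — so one cannot simply multiply one-step bounds that hold ``on an event''.

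To make the iteration rigorous I would pass to the coarse filtration $(\mathfrak{F}^{t+sb})_{s \geq 0}$, define the first recovery time $\tau := \min\{ s \geq 0 : \Gamma_2^{t+sb} \leq \tilde c n \}$, and consider the stopped, re-scaled process $M_s := \Gamma_2^{t+(s\wedge\tau)b} \cdot (1-1/\log n)^{-(s\wedge\tau)}$. For $s \geq \tau$ this is frozen, and for $s < \tau$ the event $\{\tau > s\}$ is $\mathfrak{F}^{t+sb}$-measurable and forces $\Gamma_2^{t+sb} > \tilde c n$, so $(iii)$ applies and gives $\ex{M_{s+1} \mid \mathfrak{F}^{t+sb}} \leq M_s$; hence $(M_s)$ is a nonnegative supermartingale. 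Then $\ex{M_{\log^3 n}} \leq \ex{M_0} = \ex{\Gamma_2^t} \leq \tfrac{\tilde c}{2} n$ by $(ii)$. On $\mathcal{E} = \{\tau > \log^3 n\}$ we have $M_{\log^3 n} = \Gamma_2^{t+\log^3 n \cdot b}(1-1/\log n)^{-\log^3 n} > \tilde c n \cdot (1-1/\log n)^{-\log^3 n} \geq \tilde c n \cdot e^{\log^2 n}$, using $(1-x)^{-1} \geq e^x$, so Markov's inequality yields $\Pro{\mathcal{E}} \leq \ex{M_{\log^3 n}}/(\tilde c n e^{\log^2 n}) \leq \tfrac{1}{2} e^{-\log^2 n} \leq n^{-8}$ for $n$ large. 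Equivalently, one can avoid stopping times and run the identical bookkeeping with the indicator-weighted potential $V_s := \Gamma_2^{t+sb} \cdot \mathbf{1}_{\bigcap_{j<s}\{\Gamma_2^{t+jb} > \tilde c n\}}$, which satisfies $\ex{V_{s+1}\mid\mathfrak{F}^{t+sb}} \leq (1-1/\log n) V_s$ by the same reasoning.

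I expect the only obstacle to be this conditioning/measurability bookkeeping; everything else is a direct chaining of the two parts of \cref{lem:large_gamma_exponential_drop} with Markov. In particular, no concentration inequality and no bounded-weight event ($\mathcal{H}^t$) is needed here, since $(iii)$ is an expectation statement that already integrates over all weight realizations via the moment-generating-function bounds behind \cref{lem:herd_batching_pot_changes}; the bounded-difference machinery of \cref{lem:batched_gamma_1_poly_implies} enters only in the companion lemma that keeps $\Gamma_2$ small once it has recovered, not in this recovery step itself.
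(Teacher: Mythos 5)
Your proof is correct, and its second formulation (the indicator-weighted potential $V_s$) is literally the paper's ``killed'' potential $\widehat{\Gamma}_2$, so the core mechanism --- kill the potential upon recovery, iterate the conditional contraction from \cref{lem:large_gamma_exponential_drop}~$(iii)$ over $\log^3 n$ batches, and finish with Markov --- is the same. The one genuine difference is in the bookkeeping at the start: the paper first applies Markov to $\ex{\Gamma_2^t} \leq \tfrac{\tilde c}{2} n$ to get $\Gamma_2^t \leq \tilde c n^9$ with probability $1 - n^{-8}$, conditions on that event, iterates the drop from the deterministic starting value $\tilde c n^9$, and then applies Markov a second time, paying two $n^{-8}$ terms. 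You instead take total expectations directly, $\ex{M_{\log^3 n}} \leq \ex{\Gamma_2^t} \leq \tfrac{\tilde c}{2} n$, and apply Markov once at the end; this removes the intermediate conditioning entirely and yields the stronger failure probability $\tfrac12 e^{-\log^2 n} = \tfrac12 n^{-\log n}$ rather than $2n^{-8}$. Your closing remarks are also accurate: neither \cref{lem:gamma_continuous} nor the event $\mathcal{H}^t$ is needed here (the paper does not use them in this lemma either), so the factor of $2$ in the stated bound is pure slack under your argument.
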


\begin{proof}
By \cref{lem:large_gamma_exponential_drop}~$(ii)$, using Markov's inequality at step $t$ being a multiple of $b$, we have
\begin{equation} \label{eq:basic_markov_tilde_gamma}
\Pro{\Gamma_2^{t} \leq \tilde{c} n^9} \geq 1 - n^{-8}.
\end{equation}
We will be assuming $\Gamma_2^{t} \leq \tilde{c} n^9$. By \cref{lem:large_gamma_exponential_drop}~$(iii)$, for any step $r \geq 0$, then \[
\Ex{\left. \Gamma_2^{r+1} \,\right|\, \mathfrak{F}^{r}, \Gamma_2^{r} > \tilde{c} n} \leq 
\Gamma_2^{r} \cdot \Big(1-\frac{1}{\log n}\Big).
\]
In order to prove that $\Gamma_2^{t + s \cdot b}$ is small for some $s \in [0, b \log^3 n]$, we define the ``killed'' potential function for any $r \in [0, \log^3 n]$,
\[
\widehat{\Gamma}_2^{t + r\cdot b} := \Gamma_2^{t + r \cdot b} \cdot \mathbf{1}_{\bigcap_{ s \in [0, r]} \{ \Gamma_2^{t + s \cdot b} > \tilde{c} n\} }.
\]
Note that $\widehat{\Gamma}_2^{t + r \cdot b} \leq \Gamma_2^{t + r \cdot b}$ and that $\big\{ \widehat{\Gamma}_2^{t + r \cdot b} = 0 \big\}$ implies that $\big\{ \widehat{\Gamma}_2^{t + r \cdot b + 1} = 0 \big\}$. Hence, the $\widehat{\Gamma}$ potential satisfies unconditionally the drop inequality of \cref{lem:large_gamma_exponential_drop}~$(iii)$, that is,
\[
\Ex{\left. \widehat{\Gamma}_2^{t + (r+1) \cdot b} \,\right|\, \mathfrak{F}^{t + r \cdot b}, \widehat{\Gamma}_2^{t + r \cdot b}} 
 \leq \widehat{\Gamma}_2^{t + r \cdot b} \cdot \Big(1-\frac{1}{\log n}\Big).
\]
Inductively applying this for $\log^3 n$ batches and using that $\widehat{\Gamma}_2^t \leq \Gamma_2^t \leq \tilde{c} n^9$, %
\[
\Ex{\left. \widehat{\Gamma}_2^{t + (\log^3 n) \cdot b} \,\, \right\vert \,\, \mathfrak{F}^{t}, \Gamma_2^{t} \leq \tilde{c} n^9} 
 \leq \Gamma_2^{t} \cdot \Big(1-\frac{1}{\log n}\Big)^{\log^3 n} 
  \leq \tilde{c} n^{9} \cdot e^{-\log^2 n} < n^{-7}. 
\]
So by Markov's inequality, \[
 \Pro{\left. \widehat{\Gamma}_2^{t + (\log^3 n) \cdot b} < n  \,\,\right|\,\, \Gamma_2^{t} \leq \tilde{c} n^9} \geq 1 - n^{-{8}}
\]
By combining with \cref{eq:basic_markov_tilde_gamma},
\begin{align*}
\Pro{\widehat{\Gamma}_2^{t + (\log^3 n) \cdot b} < n}
 & \geq \left( 1 - n^{-8}\right) \cdot \left( 1 - n^{-8}\right) \geq 1 - 2n^{-8}.
\end{align*}
Due to the definition of $\Gamma_2$, at any step $t \geq 0$, deterministically $\Gamma_2^t \geq 2n$. So,
we conclude that~w.p.~at least $1 - 2 n^{-8}$, we have that $\widehat{\Gamma}_2^{t + (\log^3 n) \cdot b} = 0$ or equivalently the event 
\[
\neg\bigcap_{ s \in [0, \log^3 n]} \left\{ \Gamma_2^{t + s \cdot b} > \tilde{c} n \right\},
\]
holds, which implies the conclusion.
\end{proof}

\subsubsection{Completing the Proof of Lemma~\ref{lem:batching_gamma_linear_whp}}\label{sec:gamma_linear_whp_complete}

We are now ready to prove \cref{lem:batching_gamma_linear_whp}, using a method of bounded differences with a bad event \cref{lem:kutlin_3_3} (\cite[Theorem 3.3]{K02}).

\begin{proof}[Proof of \cref{lem:batching_gamma_linear_whp}]

Our starting point is to apply
 \cref{lem:batched_gamma_1_poly_n_implies_gamma_2_linear_whp}, which proves that there is at least one step $t + \rho \cdot b \in [t - b\log^3 n, t]$ with $\rho \in [-\log^3 n, 0]$ such that the potential $\Gamma_2$ is small,
 \begin{align} \label{eq:starting_point}
\Pro{\bigcup_{\rho \in [- \log^3 n, 0]} \left\lbrace \Gamma_2^{t + \rho \cdot b} \leq \tilde{c} n \right\rbrace } &\geq 1 - 2 n^{-8}.
 \end{align}
Note that if $t < b \log^3 n$, then deterministically $\Gamma_2^0 = 2n \leq \tilde{c} n$ (which corresponds to $\rho = -t/b$).

We are now going to apply the concentration inequality \cref{lem:kutlin_3_3} to each of the batches starting at $t + \rho \cdot b, \ldots, t + (\log^3 n) \cdot b$ and show that the potential remains $ \leq \tilde{c} n$ at the last step of each batch. More specifically, we will show that for any $\tilde{r} \in [\rho, \log^3 n]$, for $r = t + b \cdot \tilde{r}$,
\[
\Pro{\left. \Gamma_2^{r+b} > \tilde{c} n \,\right|\, \mathfrak{F}^r, \Gamma_2^r \leq \tilde{c} n} \leq 3n^{-4}.
\]

We will show this by applying \cref{lem:kutlin_3_3} for all steps of the batch $[r, r + b]$. We define the good event \[
\mathcal{G}_r := \mathcal{G}_r^{r+b} := \bigcap_{s \in [r, r + b]} \left( \left\{ \Gamma_1^s \leq \tilde{c} n^{26} \right\} \cap \mathcal{H}^s \right),
\] 
and the bad event $\mathcal{B}_r := (\mathcal{G}_r)^c$. Using a union bound over \cref{lem:many_h_i} and \cref{lem:gamma_continuous},
\begin{align} \label{eq:bad_event_union_bound}
\Pro{ \bigcap_{s \in [t-b \log^3 n, t+ b \log^ 3 n]} \left( \left\{ \Gamma_1^{s} \leq \tilde{c} n^{26} \right\} \cap \mathcal{H}^s \right) } \geq 1 - 2n^{-10}.
\end{align}

Consider any $u \in [r, r + b]$. Further, we define the slightly weaker good event, $\tilde{\mathcal{G}}_r^u := \bigcap_{s \in [r, u]} \left( \left\{ \Gamma_1^s \leq 2\tilde{c} n^{26} \right\} \cap \mathcal{H}^s \right)$ and the ``killed'' potential, %
\[
\widehat{\Gamma}_r^u := \Gamma_2^u \cdot \mathbf{1}_{\tilde{\mathcal{G}}_r^u}.
\]
We will show that the sequence $\widehat{\Gamma}_r^r, \ldots , \widehat{\Gamma}_r^{r + b}$ is strongly difference-bounded by $(n^{5/4}, n^{1/4} \cdot \sqrt{(n/b) \cdot \log n}, 2n^{-10})$ (\cref{def:strongly_dif_bounded}).

Let $\omega \in [n]^{b}$ be an allocation vector encoding the allocations made in $[r, r + b]$. Let $\omega'$ be an allocating vector resulting from $\omega$ by changing one arbitrary allocation. It follows that,
\begin{align*}
\left| \widehat{\Gamma}_r^{r+b}(\omega) - \widehat{\Gamma}_r^{r+b}(\omega') \right| &\leq \max_{\tilde{\omega}} \widehat{\Gamma}_r^{r+b}(\tilde{\omega}) - \min_{\tilde{\omega}} \widehat{\Gamma}_r^{r+b}(\tilde{\omega}) \\
&\leq \max_{\tilde{\omega} \in \tilde{\mathcal{G}}_{r}^{r+b}} \Gamma_{2}^{r+b}(\tilde{\omega}) -0 
\\ & \leq n^{5/4},
\end{align*}
where in the last inequality we used \cref{lem:batched_gamma_1_poly_implies}~$(i)$ that for any $\tilde{\omega} \in \tilde{\mathcal{G}}_r^{r+b}$, we have $\widehat{\Gamma}_r^{r+b}(\tilde{\omega}) \leq \Gamma_{2}^{r+b}(\tilde{\omega}) \leq n^{5/4}$.

We will now derive a refined bound by additionally assuming that $\omega \in \mathcal{G}_r$. Then, for any $u \in [r, r + b]$,
\[
\Gamma_1^{u}(\omega') \leq 2 \cdot \Gamma_1^{u}(\omega) \leq 2 \tilde{c} n^{26},
\]
where the first inequality is by \cref{lem:batched_gamma_1_poly_implies}~$(iii)$. 
Hence $\omega' \in \tilde{\mathcal{G}}_r^{r+b}$, so $\mathbf{1}_{\tilde{\mathcal{G}}_r^{r+b}(\omega')} = 1$ and  $\widehat{\Gamma}_r^{r+b}(\omega') 
= \Gamma_{2}^{r+b}(\omega')$. Similarly, for $\omega \in \mathcal{G}_r \subseteq \tilde{\mathcal{G}}_r^{r+b}$, we have $\widehat{\Gamma}_r^{r+b}(\omega) 
= \Gamma_{2}^{r+b}(\omega)$ and by \cref{lem:batched_gamma_1_poly_implies}~$(ii)$,
\[ 
\left|\widehat{\Gamma}_r^{r+b}(\omega) - \widehat{\Gamma}_r^{r+b}(\omega')\right| = \left| \Gamma_2^{r+b}(\omega) - \Gamma_2^{r+b}(\omega') \right| \leq n^{1/4} \cdot \sqrt{\frac{n}{b} \cdot \log n}.
\]

Within a single batch all allocations are independent, so we apply \cref{lem:kutlin_3_3}, choosing $\gamma_k := \frac{1}{b}$ and $N := b$, which states that for any $T > 0$ and $\mu := \Ex{\left. \widehat{\Gamma}_r^{r + b} > \mu + T \,\right|\, \mathfrak{F}^r, \Gamma_2^r \leq \tilde{c} n}$,
\begin{align*}
& \Pro{\left. \widehat{\Gamma}_r^{r + b} > \mu + T \, \right|\, \mathfrak{F}^r, \Gamma_2^r \leq \tilde{c} n} \\
 & \quad  \leq \exp\left( - \frac{T^2}{2 \cdot \sum_{k = 1}^{b} (n^{1/4} \cdot \sqrt{\frac{n}{b} \cdot \log n} + n^{5/4} \cdot \frac{1}{b})^2 }\right) + 2 n^{-10} \cdot \sum_{k = 1}^b b. 
\end{align*}
By \cref{lem:large_gamma_exponential_drop}~$(iv)$, we have $\mu \leq \ex{\widehat{\Gamma}_r^{r+b} \mid \Gamma_2^r < \tilde{c} n}  \leq \ex{\Gamma_{2}^{r+b} \mid \Gamma_2^r < \tilde{c} n} \leq \tilde{c} n - n/\log^2 n$. Hence, for $T := n / \log^2 n$, since $2n \log n \leq b \leq n^3$, we have
\begin{align*}
\Pro{\left. \widehat{\Gamma}_r^{r+b} > \tilde{c} n \, \right|\, \mathfrak{F}^r, \Gamma_2^r \leq  \tilde{c} n}
& \leq \exp\left( - \frac{n^2/\log^4 n}{2 \cdot b \cdot (2 \cdot n^{1/4} \cdot \sqrt{\frac{n}{b} \cdot \log n})^2 }\right) + 2n^{-10} \cdot b^2 \\
& \leq \exp\left( - \frac{n^{1/2}}{8 \cdot \log^5 n} \right) + 2n^{-10} \cdot n^6 \leq 3n^{-4}. 
\end{align*}
Let $\mathcal{K}_{\rho}^{\tilde{r}} := \mathcal{G}_{\rho}^{t + \tilde{r} \cdot b} \cap \{ \Gamma_2^{t + \rho \cdot b} \leq \tilde{c} n \}$ for $\tilde{r} \in [\rho, \log^3 n]$. For any $\tilde{r} \geq \rho$, since $\mathcal{K}_\rho^{\tilde{r}+1} \subseteq \mathcal{K}_\rho^{\tilde{r}}$, we have%
\begin{align} \label{eq:k_killed}
\Pro{\left. \widehat{\Gamma}_r^{t + (\tilde{r}+1) \cdot b} \cdot \mathbf{1}_{\mathcal{K}_\rho^{\tilde{r}+1}} >  \tilde{c} n \,\right|\, \mathfrak{F}^r, \widehat{\Gamma}_r^{t + \tilde{r} \cdot b} \cdot \mathbf{1}_{\mathcal{K}_\rho^{\tilde{r}}} \leq  \tilde{c} n} \leq 3 n^{-4}.
\end{align}
By union bound of \cref{eq:starting_point} and \cref{eq:bad_event_union_bound}, 
\begin{align} 
\Pro{\bigcup_{\rho \in [-\log^3 n]} \mathcal{K}_{\rho}^{\log^3 n}} 
 & \geq \Pro{\mathcal{G}_{-\log^3 n}^{\log^3 n} \cap \bigcup_{\rho \in [- \log^3 n, 0]} \left\lbrace \Gamma_2^{t + \rho \cdot b} \leq \tilde{c} n \right\rbrace} \notag \\
 & \geq 1 - 2n^{-8} - 2n^{-10} \geq 1 - 3n^{-8}.\label{eq:exists_k_event}
\end{align}
Let \[
\mathcal{A} := \bigcap_{\tilde{r} \in [0, \log^3 n]} \left\lbrace \Gamma_2^{t + \tilde{r} \cdot b} \leq \tilde{c} n \right\rbrace,
\] and \[\mathcal{A}_{\rho} := \bigcap_{\tilde{r} \in [\rho, \log^3 n]} \left\lbrace \widehat{\Gamma}_r^{t + \tilde{r} \cdot b} \cdot \mathbf{1}_{\mathcal{K}_\rho^{\tilde{r}}} \leq \tilde{c} n \right\rbrace.\] Then, 
\begin{align*}
\Pro{\mathcal{A}_\rho \,\,\left|\,\, \Gamma_2^{t + \rho \cdot b} \leq \tilde{c} n \right.} 
  & \geq \prod_{\tilde{r} \in [\rho, \log^3 n - 1]} \mathbf{Pr} \left[ \bigcap_{\tilde{s} \in [\rho+1, \tilde{r}+1]} \left\lbrace \widehat{\Gamma}_r^{t + \tilde{s} \cdot b} \cdot \mathbf{1}_{\mathcal{K}_\rho^{\tilde{s}}} \leq \tilde{c} n \right\rbrace \right. \\
  &  \qquad \qquad  \left. \bigg\vert \, \bigcap_{\tilde{s} \in [\rho+1, \tilde{r} - 1]} \left\lbrace \widehat{\Gamma}_r^{t + \tilde{s} \cdot b} \cdot \mathbf{1}_{\mathcal{K}_\rho^{\tilde{s}}} \leq \tilde{c} n \right\rbrace, \widehat{\Gamma}_r^{t + \tilde{r} \cdot b} \cdot \mathbf{1}_{\mathcal{K}_\rho^{\tilde{s}}} \leq \tilde{c} n\right] \\
 & \geq \prod_{\tilde{r} \in [\rho, \log^3 n - 1]} \Pro{\left. \widehat{\Gamma}_r^{\tilde{r}+b} \cdot \mathbf{1}_{\mathcal{K}_\rho^{\tilde{r}+1}} > \tilde{c} n \,\right|\, \mathfrak{F}^{t + \tilde{r} \cdot b}, \widehat{\Gamma}_r^{t + \tilde{r} \cdot b} \cdot \mathbf{1}_{\mathcal{K}_\rho^{\tilde{r}}} \leq  \tilde{c} n} \\
 & \geq (1 - 3n^{-4})^{2\log^3 n} \geq 1 - 6n^{-4} \cdot \log^3 n,
\end{align*}
where in the last inequality we have used \cref{eq:k_killed} and the fact $\rho \geq -\log^3 n$. So,
\begin{align}
\Pro{\mathcal{A}_\rho} 
 & = \Pro{\mathcal{A}_\rho \,\left|\,\, \Gamma_2^{t + \rho \cdot b} \leq \tilde{c} n \right.} \cdot \Pro{\Gamma_2^{t + \rho \cdot b} \leq \tilde{c} n} + 1 \cdot \Pro{\neg \left\{ \Gamma_2^{t + \rho \cdot b} \leq \tilde{c} n \right\}} \notag \\
 & \geq 1 - 6n^{-4} \cdot \log^3 n. \label{eq:event_a_rho} 
\end{align}
Note that for any $\rho \in [-\log^3 n, 0]$, we have that $\mathcal{A}_\rho \cap \mathcal{K}_\rho^{\log^3 n} \subseteq \mathcal{A}$. Hence we conclude by the union bound of \cref{eq:exists_k_event} and \cref{eq:event_a_rho}, that
\begin{align*}
\Pro{\mathcal{A}} 
 & \geq \Pro{\bigcup_{\rho \in [-\log^3 n, 0]} \mathcal{K}_{\rho}^{\log^3 n} \cap \bigcap_{\rho \in [-\log^3 n, 0]} \mathcal{A}_\rho}  \geq 1  - 3n^{-8} - 6n^{-4} \cdot \log^6 n \geq 1- n^{-3}. \qedhere
\end{align*}
\end{proof}

\subsection{Step 2: Completing the Proof of Theorem~\ref{thm:batching_strong_gap_bound}}\label{sec:step_two}

We will now show that when $\Gamma_2^t = \Oh(n)$, the stronger potential function $\Lambda^t$ drops in expectation over the next batch. This will allow us to prove that $\Lambda^m = \poly(n)$ and deduce that \Whp~$\Gap(m) = \Oh\big(\sqrt{(b/n) \cdot \log n}\big)$.

\begin{lem} \label{lem:lambda_drops}
Consider any process satisfying the conditions in \cref{thm:batching_strong_gap_bound}. Let $\tilde{c} := 2 \cdot \frac{8c}{\delta}$ where $c := c(\delta) > 0$ is the constant from \cref{thm:hyperbolic_cosine_expectation}. For any step $t \geq 0$ being a multiple of $b$,
\[
\Ex{\left. \Lambda^{t+b} \,\right|\, \mathfrak{F}^t, \Gamma_2^t \leq \tilde{c} n } \leq \Lambda^t \cdot e^{-\frac{\lambda\eps}{2n} \cdot b} + n^2.
\]
\end{lem}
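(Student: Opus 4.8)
\textbf{Proof strategy for Lemma~\ref{lem:lambda_drops}.}
The plan is to condition on the event $\{\Gamma_2^t\le\tilde c n\}$ and first extract the only structural consequence we need: that few bins lie above the threshold $z$. Since $\Gamma_2^t\ge\Phi_2^t\ge k^\ast\cdot e^{\gamma_2 z}$ where $k^\ast:=|\{i:y_i^t\ge z\}|$, and $e^{\gamma_2 z}=\tilde c/\delta$ by the choice $z=\tfrac{1}{\gamma_2}\log(\tilde c/\delta)$, the event forces $k^\ast\le\tilde c n/(\tilde c/\delta)=\delta n$. Hence every bin above $z$ is among the $\delta n$ heaviest at the start of the batch, so condition $\mathcal C_1$ applies to it; combined with the lower bound $p_i^t\ge \tfrac1n-\tfrac{C-1}{n}=\tfrac{1-\eps}{n}$ coming from $\mathcal C_3$ with $C=1+\eps$, this pins down $p_i^t\le\tfrac{1-\eps}{n}$ for each such bin. (Under random tie-breaking the same bound survives, as the tie-class of any bin above $z$ is contained in $\{1,\dots,k^\ast\}$.)

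Next I would write $\Lambda^{t+b}=\sum_{i=1}^n g(y_i^t+\Xi_i)$, where $g(u):=e^{\lambda(u-z)}\cdot\mathbf 1_{\{u\ge z\}}$ and $\Xi_i:=\sum_{j=1}^b w^{t+j}\mathbf 1_{\{i_j=i\}}-\tfrac1n\sum_{j=1}^b w^{t+j}$, the index $i$ running over the time-$t$ sorted bins; this is valid since $\Lambda$ is a symmetric function of the load multiset. Splitting into heavy bins ($i\le k^\ast$) and light bins ($i>k^\ast$) and using $g(u)\le e^{\lambda(u-z)}$ for all $u$, one gets for heavy bins $\Ex{g(y_i^t+\Xi_i)\mid\mathfrak F^t}\le e^{\lambda(y_i^t-z)}\cdot\Ex{e^{\lambda\Xi_i}\mid\mathfrak F^t}$, and for light bins (where $y_i^t-z<0$) simply $\Ex{g(y_i^t+\Xi_i)\mid\mathfrak F^t}\le\Ex{e^{\lambda\Xi_i}\mid\mathfrak F^t}$. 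The per-bin moment generating factor is controlled exactly as in \cref{eq:phi_batched_i} of \cref{lem:herd_batching_pot_changes}: since $\lambda\le\min\{1,\zeta/2\}$ and $p_i^t\ge 1/n^2$ by $\mathcal C_3$,
\[
\Ex{e^{\lambda\Xi_i}\mid\mathfrak F^t}\le\big(1+\lambda(p_i^t-\tfrac1n)+2p_i^t S\lambda^2\big)^b\le\exp\!\Big(\tfrac{b\lambda}{n}\big((np_i^t-1)+2np_i^t S\lambda\big)\Big).
\]

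For a heavy bin, $np_i^t\le 1-\eps$ and $2S\lambda=\tfrac{\eps}{2C}\le\tfrac\eps2$ give $\Ex{e^{\lambda\Xi_i}\mid\mathfrak F^t}\le\exp(-\tfrac{b\lambda}{n}(\eps-2S\lambda))\le\exp(-\tfrac{\lambda\eps}{2n}b)$, so $\sum_{i\le k^\ast}\Ex{g(y_i^t+\Xi_i)\mid\mathfrak F^t}\le e^{-\frac{\lambda\eps}{2n}b}\sum_{i\le k^\ast}e^{\lambda(y_i^t-z)}=e^{-\frac{\lambda\eps}{2n}b}\cdot\Lambda^t$. For a light bin, $np_i^t\le C$ and $2CS\lambda=\tfrac\eps2$ give $\Ex{e^{\lambda\Xi_i}\mid\mathfrak F^t}\le\exp(\tfrac32\cdot\tfrac{b\lambda\eps}{n})$, and since $\tfrac{b\lambda\eps}{n}=\tfrac{b\eps^2}{4CSn}=\tfrac{\log n}{4CS}$ (using $\eps^2=(n/b)\log n$) this is at most $n^{3/(8CS)}\le n^{3/8}$; summing over the at most $n$ light bins gives at most $n^{11/8}\le n^2$. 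Adding the two contributions yields $\Ex{\Lambda^{t+b}\mid\mathfrak F^t,\Gamma_2^t\le\tilde c n}\le\Lambda^t e^{-\frac{\lambda\eps}{2n}b}+n^2$.

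The step I expect to be the main obstacle is the heavy-bin estimate, and in particular the bound $p_i^t\le\tfrac{1-\eps}{n}$ on every bin above $z$. For the processes in the statement this is immediate from the interplay of $\mathcal C_1$ and $\mathcal C_3$ with a common constant $\eps$, but to cover a process satisfying these conditions only with mismatched constants one would instead have to retain the full prefix-sum form of $\mathcal C_1$ and feed it, together with the monotonicity of $e^{\lambda(y_i^t-z)}$ in $i$, into an Abel-summation argument — which is delicate because the per-bin factor $q\mapsto\big(1+\lambda(q-\tfrac1n)+2qS\lambda^2\big)^b$ is convex rather than linear in the allocation probability. The rest is bookkeeping: verifying that $\lambda=\tfrac{\eps}{4CS}$ and $\gamma_2$ are small enough for the cited estimates (notably \cref{lem:bounded_weight_moment} and \cref{eq:phi_batched_i}), that $b\in[2n\log n,n^3]$ keeps all exponents in the claimed ranges, that $\eps=\sqrt{(n/b)\cdot\log n}\le 1$ so $C<1.9$, and that random tie-breaking leaves every inequality above intact.
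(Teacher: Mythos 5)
Your proof is correct and follows the same overall decomposition as the paper: under $\{\Gamma_2^t\le\tilde c n\}$ at most $\delta n$ bins exceed the threshold $z$, the per-bin factor is controlled via the moment-generating bound of \cref{eq:phi_batched_i}, heavy bins contract by $e^{-\frac{\lambda\eps}{2n}b}$, and the at most $n$ light bins each contribute at most $\poly(n)$ after dropping the indicator (your $g(u)\le e^{\lambda(u-z)}$ is the paper's ``$\Lambda_i^t\le 1$''). The one place you genuinely diverge is the heavy-bin probability bound. The paper invokes the majorization inequality (\cref{lem:quasilem2}) to argue the aggregate upper bound is maximized at $p_i^t=\frac{1-\eps}{n}$ — which, as you correctly flag, is slightly delicate because the per-bin factor is convex rather than linear in $p_i^t$. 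You instead observe that $\mathcal{C}_1$ (prefix sums at $k$ and $k-1$, or equivalently the sandwich $\frac{(1-\eps)k}{n}\ge\sum_{i\le k}p_i\ge k\cdot\frac{1-\eps}{n}$) together with the entrywise lower bound $p_i\ge\frac{1-\eps}{n}$ from $\mathcal{C}_3$ with $C=1+\eps$ forces $p_i^t=\frac{1-\eps}{n}$ for every $i\le\delta n$, so no aggregation argument is needed at all. This is a clean and valid shortcut for the hypotheses of \cref{thm:batching_strong_gap_bound}, where $\mathcal{C}_1$ and $\mathcal{C}_3$ share the same $\eps$; it buys simplicity at the cost of not extending to processes where the two conditions hold with mismatched parameters, a limitation you correctly identify yourself. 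Your constants also check out ($2S\lambda=\frac{\eps}{2C}\le\frac{\eps}{2}$, light-bin contribution $n^{3/(8CS)}\cdot n\le n^2$), so the result follows.
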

\begin{proof}
Consider an arbitrary step $t \geq 0$ being a multiple of $b$ and consider a labeling of the bins so that they are sorted by load. Assuming that $\{ \Gamma_2^t \leq \tilde{c} n \}$ holds, the number of bins with load $y_i^t \geq z$ is at most 
\[
\tilde{c} n \cdot e^{- \gamma_2 \cdot z} = \tilde{c} n \cdot e^{-\log(\tilde{c}/\delta)} = \delta n.
\]
For any bin $i \in [n]$ with $y_i^t \geq z$, we get as in \cref{eq:phi_batched_i} (using that $\lambda \leq 1$ and that $p$ satisfies $\mathcal{C}_3$ for $C \in (1, 1.9)$),%
\begin{align*}
\Ex{\left. \Lambda_i^{t + b} \,\,\right|\,\, \mathfrak{F}^t} 
 & \leq \Lambda_i^t \cdot \left( 1 + \Big(p_i^t - \frac{1}{n}\Big) \cdot \lambda + 2 \cdot p_i^t \cdot S\lambda^2 \right)^b.
\end{align*}
Since there are at most $\delta n$ such bins (i.e., $i \leq \delta n$), $p$ satisfies condition $\mathcal{C}_1$ and the normalized vector $y^t$ is sorted, by \cref{lem:quasilem2} the upper bound on $\Ex{\Lambda^{t+b} \mid \mathfrak{F}^t, \Gamma_2^t \leq \tilde{c} n}$ is maximized when $p_i^t = \frac{1 - \eps}{n}$, so
\begin{align*}
\sum_{i : y_i^t \geq z} \Ex{\left. \Lambda_i^{t + b} \,\,\right|\,\, \mathfrak{F}^t, \Gamma_2^t \leq \tilde{c} n} 
 & \stackrel{(a)}{\leq} \sum_{i : y_i^t \geq z} \Lambda_i^t \cdot \left( 1 - \frac{\lambda\eps}{n} + 2CS \cdot \frac{\lambda^2}{n} \right)^b \\
 &  \stackrel{(b)}{\leq} \sum_{i : y_i^t \geq z} \Lambda_i^t \cdot \left( 1 - \frac{\lambda\eps}{2n}\right)^b  
 \stackrel{(c)}{\leq} \sum_{i : y_i^t \geq z} \Lambda_i^t \cdot e^{-\frac{\lambda\eps}{2n} \cdot b},
\end{align*}
using in $(a)$ that $p_i^t \leq \frac{C}{n}$, in $(b)$ that $\lambda = \frac{\eps}{4CS}$ and in $(c)$ that $1 + v \leq e^v$ for any $v$. For the rest of the bins with $i > \delta n$,
\begin{align*}
\Ex{\left. \Lambda_i^{t + b} \,\right|\, \mathfrak{F}^t} 
 & \leq \Lambda_i^t \cdot \left( 1 + \left(p_i^t - \frac{1}{n}\right) \cdot \lambda + 2 \cdot p_i^t \cdot S\lambda^2 \right)^b \\
 & \stackrel{(a)}{\leq} \Lambda_i^t \cdot \left( 1 + \frac{C - 1}{n} \cdot \lambda + 2CS \cdot \frac{\lambda^2}{n} \right)^b \\
 & \stackrel{(b)}{\leq} \Lambda_i^t \cdot \left( 1 + 2 \cdot \frac{C-1}{n} \cdot \lambda\right)^b \\
 & \stackrel{(c)}{\leq} \left( 1 + 2 \cdot \frac{C-1}{n} \cdot \lambda\right)^b \\
 & \stackrel{(d)}{\leq} e^{2 \cdot \frac{C-1}{n} \cdot \lambda b} \stackrel{(e)}{\leq} n,
\end{align*}
using in $(a)$ that $p_i^t \leq \frac{C}{n}$, in $(b)$ that $2CS \cdot \frac{\lambda^2}{n} = \frac{\eps}{2n} \cdot \lambda \leq \frac{C-1}{2n} \cdot \lambda$ since $\lambda = \frac{\eps}{4CS}$ and $\eps = C - 1$, in $(c)$ that $\Lambda_i^t \leq 1$, in $(d)$ that $1 + v \leq e^v$ for any $v$ and in $(e)$ that $2 \cdot \frac{C-1}{n} \cdot \lambda b \leq \frac{\eps^2}{2CS} \cdot \frac{b}{n} \leq \log n$ (since $C > 1$ and $S \geq 1$).

Aggregating the contributions over all bins, 
\begin{align*}
\Ex{\left. \Lambda^{t+b} \,\right|\, \mathfrak{F}^t, \Gamma_2^t \leq \tilde{c} n } 
 & \leq \sum_{ i : y_i^t \geq z} \Lambda_i^t \cdot e^{-\frac{\lambda\eps}{2n} \cdot b} + \sum_{ i : y_i^t < z} n
 \leq \Lambda^t \cdot e^{-\frac{\lambda\eps}{2n} \cdot b} + n^2.\qedhere
\end{align*}
\end{proof}

Now we are ready to complete the proof of \cref{thm:batching_strong_gap_bound}.

\begin{proof}[Proof of \cref{thm:batching_strong_gap_bound}]
First consider the case when $m \geq b \cdot \log^3 n$. Let $t_0 = m - b \cdot \log^3 n$. Let $\mathcal{E}^t := \big\{ \Gamma_2^{t} \leq \tilde{c} n \big\}$. Then using \cref{lem:batching_gamma_linear_whp},
\begin{equation} \label{eq:eps_interval}
\Pro{ \bigcap_{j \in [0, \log^3 n]} \mathcal{E}^{t_0 + j \cdot b} } \geq 1 - n^{-3}.
\end{equation}
We define the killed potential $\tilde{\Lambda}$, with $\tilde{\Lambda}^{t_0} := \Lambda^{t_0}$ and for $j > 0$,
\[
\tilde{\Lambda}^{t_0 + j \cdot b} := \Lambda^{t_0 + j \cdot b} \cdot \mathbf{1}_{\cap_{s \in [0, j]} \mathcal{E}^{t_0 + s \cdot b}}.
\]
Since $\tilde{\Lambda}^t \leq \Lambda^t$, we have that by \cref{lem:lambda_drops} for $t = t_0 + j \cdot b$, we have that
\begin{align*}
\Ex{\tilde{\Lambda}^{t_0+(j+1)\cdot b} \,\,\left|\,\, \mathfrak{F}^{t_0+j \cdot b}, \Gamma_2^{t_0+j \cdot b} \leq \tilde{c} n \right.} \leq \tilde{\Lambda}^{t_0 + j \cdot b} \cdot e^{-\frac{\lambda\eps}{2n} \cdot b} + n^2.
\end{align*}
When $\mathcal{E}^{t_0 + j \cdot b}$ does not hold, then deterministically $\tilde{\Lambda}^{t_0 + (j+1) \cdot b} = \tilde{\Lambda}^{t_0 + j \cdot b} = 0$. Hence, we have the following unconditional drop inequality
\begin{align} \label{eq:lambda_drop}
\Ex{\tilde{\Lambda}^{t_0+(j+1)\cdot b} \,\,\left|\,\, \mathfrak{F}^{t_0+j \cdot b} \right.} \leq \tilde{\Lambda}^{t_0 + j \cdot b} \cdot e^{-\frac{\lambda\eps}{2n} \cdot b} + n^2.
\end{align}
Assuming $\mathcal{E}^{t_0}$ holds, we have %
\[
\max_{i \in [n]} y_i^{t_0} 
  \leq \frac{1}{\gamma_2} \cdot \left( \log \tilde{c} + \log n \right) 
  \leq \frac{2}{\gamma_2} \cdot \log n,
\]
for sufficiently large $n$. Recalling that $\gamma_2 = \Theta(\lambda \cdot \log n)$, there exists a constant $\kappa_1 > 0$ such that
\[
\tilde{\Lambda}^{t_0} \leq n \cdot e^{\lambda \cdot y_1^{t_0}} \leq e^{\kappa_1 \log^2 n}.
\]
Applying \cref{lem:geometric_arithmetic} to \cref{eq:lambda_drop} with $a := e^{-\frac{\lambda\eps}{2n} \cdot b}$ and $b := n^2$ for $\log^3 n$ steps, %
\begin{align}
\Ex{\tilde{\Lambda}^{m} \,\,\left|\,\, \mathfrak{F}^{t_0}, \tilde{\Lambda}^{t_0} \leq e^{\kappa_1 \log^2 n} \right.} 
& \leq e^{\kappa_1 \log^2 n} \cdot a^{\log^3 n} + \frac{b}{1 - a}   \stackrel{(a)}{\leq} 1 + 1.5 \cdot b \leq  2n^2  \label{eq:poly_n_expectation}.
\end{align}
using in $(a)$ that $\frac{\lambda\eps}{2n} \cdot b = \Omega(\log n)$, since $\lambda = \frac{\eps}{4CS}$ and $\eps = \sqrt{(n/b) \cdot \log n}$.

By Markov's inequality, we have
\begin{align*}
\Pro{\tilde{\Lambda}^{m} \leq 2n^{5} \,\left|\, \mathfrak{F}^{t_0}, \tilde{\Lambda}^{t_0} \leq e^{\kappa_1 \log^2 n} \right.} \geq 1 - n^{-3}.
\end{align*}
Hence, by \cref{eq:eps_interval},
\begin{align*} %
\Pro{\tilde{\Lambda}^{m} \leq 2n^{5}} 
 & = \Pro{\left. \tilde{\Lambda}^{m} \leq 2n^{5} \,\right|\, \mathcal{E}^{t_0}} \cdot \Pro{\mathcal{E}^{t_0}} 
 \geq \left(1 - n^{-3}\right) \cdot \left(1 - n^{-3}\right) \geq 1 - 2n^{-3}.
\end{align*}
Combining with \cref{eq:eps_interval}, we have
\begin{align*}
\Pro{\Lambda^{m} \leq 2n^{5}} 
 &\geq \Pro{\left\lbrace\tilde{\Lambda}^{m} \leq 2n^{5} \right\rbrace \cap \bigcap_{j \in [0, \log^3 n]} \mathcal{E}^{t_0 + j \cdot b}} \geq 1 - 2n^{-3} - n^{-3} \geq 1 - n^{-2}.
\end{align*}
Finally, $\{ \Lambda^{m} \leq 2 n^{5} \}$ implies that
\[
\max_{i \in [n]} y_i^{m} \leq z + \frac{\log 2}{\lambda} + \frac{5\log n}{\lambda} = \Oh\left(\sqrt{(b/n) \cdot \log n} \right),
\]
since $\lambda = \frac{\eps}{4CS} = \Theta(\sqrt{(n \log n)/b})$, so the claim follows.

For the case when $m < b \cdot \log^3 n$, it deterministically holds that $\tilde{\Lambda}^{t_0} \leq n$, which is a stronger starting point in \cref{eq:poly_n_expectation} to prove that $\ex{\Lambda^m} \leq 2n^{5}$, which in turn implies the gap bound.
\end{proof}

\section{Lower Bounds on the Gap} \label{sec:lower_bounds}

In this section, we prove two lower bounds of $\Omega( \sqrt{ (b/n) \cdot \log n})$ on the gap. Both lower bounds hold even in the unit weights case.
\begin{obs}\label{obs:simple_lower}
Consider the \Batched setting with any $b \geq n \log n$, and assume all balls have unit weights. Then, for any process which uses the same probability allocation vector within each batch with random tie breaking, 
\[
\Pro{\Gap(b) \geq \frac{1}{10} \cdot \sqrt{(b/n) \cdot \log n}} \geq 1 - n^{-2}.
\]
\end{obs}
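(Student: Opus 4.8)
The first observation is that the hypothesis on the process is essentially vacuous for the very first batch. Since $y^0=(0,\dots,0)$, all $n$ bins are tied at step~$0$, so by \cref{eq:averaging_pi} the random-tie-breaking allocation vector satisfies $\tilde p_i(y^0)=\tfrac1n$ for every $i\in[n]$, irrespective of the underlying probability allocation vector~$p$. Hence the $b$ balls of the first batch land in independent, uniformly random bins, so $(X_1,\dots,X_n):=(x_1^b,\dots,x_n^b)$ is simply the occupancy vector of $b$ balls thrown uniformly into $n$ bins, with $X_i\sim\mathrm{Bin}(b,1/n)$ marginally, and $\Gap(b)=\max_{i\in[n]}X_i-\tfrac bn$.

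\textbf{From the maximum load to a single bin.} Next I would pass from the maximum to one coordinate using that the occupancy vector $(X_1,\dots,X_n)$ is negatively associated (a standard fact for balls-in-bins). Writing $\mu:=b/n\ge\log n$ and $t:=\mu+\tfrac1{10}\sqrt{\mu\log n}$, the indicators $\mathbf{1}_{X_i<t}$ are decreasing functions of the respective coordinates, so negative association yields
\[
\Pro{\max_{i\in[n]}X_i<t}=\Pro{\bigcap_{i\in[n]}\{X_i<t\}}\le\prod_{i=1}^n\Pro{X_i<t}=\bigl(1-\Pro{X_1\ge t}\bigr)^n\le e^{-n\,\Pro{X_1\ge t}} .
\]
Since $\{\max_iX_i\ge t\}$ is exactly the event $\{\Gap(b)\ge\tfrac1{10}\sqrt{(b/n)\log n}\}$, it then suffices to establish the anti-concentration bound $\Pro{X_1\ge t}\ge\tfrac{3\log n}{n}$, which gives probability $\ge 1-n^{-3}\ge 1-n^{-2}$.

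\textbf{The anti-concentration bound (the hard part).} This last estimate is where the work lies, and the difficulty is that it must be uniform over all $b\ge n\log n$: a single bin's load is Poisson-like when $b\lesssim n^2$ but Gaussian-like when $b\gg n^2$, and no single approximation covers both. For $\mu\le n$ I would lower-bound the single point mass at $k:=\lceil t\rceil$, using $\binom bk n^{-k}=\tfrac1{k!}\prod_{j=0}^{k-1}(\mu-\tfrac jn)\ge\tfrac{\mu^k}{k!}e^{-\Oh(\mu/n)}$, the estimate $(1-1/n)^{b-k}\ge e^{-\mu}e^{-\Oh(\mu/n)}$, and Stirling's formula, to obtain $\Pro{X_1=k}=\Omega(k^{-1/2})\cdot e^{-\mu\,((1+\delta)\ln(1+\delta)-\delta)}$ with $\delta:=(k-\mu)/\mu=\Oh\bigl(\sqrt{(\log n)/\mu}\bigr)$; then $(1+\delta)\ln(1+\delta)-\delta\le\tfrac12\delta^2$ and $\tfrac\mu2\delta^2\le\tfrac1{50}\log n$, so $\Pro{X_1\ge t}\ge\Pro{X_1=k}=\Omega(n^{-1/2-1/50})$, which beats $\tfrac{3\log n}{n}$ for $n$ large. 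For $\mu>n$ I would instead invoke the Berry--Esseen theorem for $X_1=\sum_{j=1}^b\mathrm{Bern}(1/n)$, whose error term is $\Oh(1/\sqrt\mu)=\Oh(1/\sqrt n)$, together with the standard Gaussian tail lower bound evaluated at $(t-\mu)/\sqrt{\mu(1-1/n)}\le\tfrac{\sqrt2}{10}\sqrt{\log n}$, to get $\Pro{X_1\ge t}\ge\Omega\bigl(n^{-1/100}(\log n)^{-1/2}\bigr)-\Oh(n^{-1/2})=\Omega\bigl(n^{-1/100}(\log n)^{-1/2}\bigr)\ge\tfrac{3\log n}{n}$ for $n$ large. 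In either regime the bound holds, which completes the argument. (Alternatively, since the first batch is literally \OneChoice, one may just cite the known tight lower bound of $\Omega(\sqrt{(b/n)\log n})$ on the maximum load of \OneChoice for $b\ge n\log n$, e.g.\ \cite{RS98}, and skip this computation.)
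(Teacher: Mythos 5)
Your proof is correct and follows the same route as the paper: the random tie-breaking at the all-zero load vector makes the first batch behave exactly like \OneChoice, and the bound then reduces to the known $\Omega(\sqrt{(b/n)\log n})$ lower bound for \OneChoice with $b \geq n\log n$ balls, which the paper simply cites. The only difference is that you additionally carry out that anti-concentration computation in full (negative association plus a point-mass/Berry--Esseen case split), which is sound but not needed given the citation you yourself note at the end.
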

\begin{proof}
Any such process behaves exactly like \OneChoice in the first batch and so the lower bound follows from that of \OneChoice for $b$ balls into $n$ bins (cf.~\cite{RS98} and~\cite[Lemma A.2]{LS23RBB}).
\end{proof}

The next lower bound is more involved. This bound also applies to processes which are allowed to adjust the probability allocation vector from one batch to another arbitrarily;  e.g., the probability for a heavily underloaded bin might be set close to (or even equal to) $1$, and similarly, the probability for a heavily overloaded bin might be set close to (or equal to) $0$. Additionally, the lower bound below applies to any two consecutive batches, and not only to the end of the first batch as in \cref{obs:simple_lower}.

\begin{thm}\label{thm:lower}
Consider the \Batched setting with any $b = \Omega(n \log n)$ in the unit weights case. Furthermore, consider an allocation process which may adaptively change the probability allocation vector for each batch. Then there is a constant $\kappa > 0$ such that for any allocation process (which may adaptively change the probability for each batch) it holds that for every $t \geq 0$ being a multiple of $b$,
\[
 \Pro{ \max \left\{ \Gap(t), \Gap(t+b) \right\} \geq \kappa \cdot \sqrt{ (b/n) \cdot \log n}  } \geq 1/2. 
\]
\end{thm}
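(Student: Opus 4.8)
The plan is to prove this lower bound by a two-stage argument that handles adaptivity. Fix a batch starting at step $t$ with load vector $y^t$, and let $p = p^t$ be the (possibly adaptively chosen) probability allocation vector used for this batch. The key dichotomy is: either the vector $p$ is already ``close to uniform'' in a suitable quantitative sense, or it is not. If $p$ is far from uniform — say some prefix $\sum_{i=1}^{k} p_i$ deviates from $k/n$ by more than $c\sqrt{(n/b)\log n}$ for some $k$ in a constant quantile range — then over the $b$ independent samples of the batch, the number of balls landing in the top $k$ bins concentrates (by a Chernoff bound, since $b \ge n\log n$ makes the expected count $\gg \log n$) around $b \cdot \sum_{i=1}^k p_i$, which differs from the ``fair share'' $bk/n$ by $\Omega(b \cdot \sqrt{(n/b)\log n}) = \Omega(\sqrt{bn\log n})$. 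Averaged over those $k = \Theta(n)$ bins this forces a deviation of $\Omega(\sqrt{(b/n)\log n})$ in either the max load (if the prefix is too heavy) or, via $\sum_i y_i = 0$, in the magnitude of the most underloaded bin; in the latter case I then argue the gap at step $t$ was already large, or it becomes large at step $t+b$. This is where the ``$\max\{\Gap(t),\Gap(t+b)\}$'' formulation is essential: a process can only avoid a large positive gap by making the load vector lopsidedly underloaded, which is itself a witness at the earlier step.

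In the complementary case, $p$ is close to uniform on the whole constant-quantile range, which means the batch behaves essentially like \OneChoice restricted to those $\Theta(n)$ coordinates. Here I would invoke the standard \OneChoice anti-concentration: when $b = \Omega(n\log n)$ balls are thrown nearly uniformly into $\Theta(n)$ bins, \Whp~some bin receives $b/n + \Omega(\sqrt{(b/n)\log n})$ balls. Concretely one picks a set $T$ of $\Theta(n)$ bins on which $p_i \le (1+o(1))/n$, notes that the counts $(x_i^{t+b}-x_i^t)_{i\in T}$ stochastically dominate (after a coupling) the \OneChoice allocation of a comparable number of balls into $|T|$ bins, and applies the lower-bound tail for the maximum of such a multinomial (cf.~the \OneChoice bound cited via \cite{RS98} and \cite[Lemma A.2]{LS23RBB} used in \cref{obs:simple_lower}). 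Since the starting loads $x_i^t$ on $T$ sum to at most $|T|\cdot(\text{avg}) + |T|\cdot\Gap(t)$, either $\Gap(t)$ is already $\Omega(\sqrt{(b/n)\log n})$ or the additive surplus from the batch pushes $\Gap(t+b)$ above the threshold.

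To assemble these into the stated bound with probability $\ge 1/2$, I would union-bound the (exponentially small) failure probabilities of the Chernoff/anti-concentration steps against the constant $1/2$ target — this is comfortable since each concentration event holds with probability $1 - n^{-\omega(1)}$ — and choose the constant $\kappa$ small enough to survive the $o(1)$ slack in ``close to uniform'' and the constants in the \OneChoice tail. The main obstacle I anticipate is making the dichotomy airtight against a fully adaptive adversary: the adversary sees $y^t$ and can try to choose $p$ so that the prefix deviations exactly cancel the pre-existing imbalance, keeping both $\Gap(t)$ and $\Gap(t+b)$ moderate. Handling this requires carefully choosing \emph{which} prefix length $k$ (or which bin set $T$) to test — ideally one tied to where the load vector $y^t$ is already near zero — so that a $p$ neutralizing the imbalance at one scale necessarily creates a fresh imbalance of the required magnitude at another, and quantifying that trade-off is the delicate part of the argument. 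A clean way to package it is to track a single potential, e.g. $\sum_{i \le \delta n} y_i^{t}$ or the number of bins above the average, show it must move by $\Omega(\sqrt{bn\log n})$ in expectation over the batch regardless of $p$ (because either $p$ has a large prefix bias, contributing directly, or it is near-uniform, in which case \OneChoice-style fluctuations contribute), and then convert a large value of this potential at step $t$ or $t+b$ into the gap bound.
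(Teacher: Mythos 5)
Your sketch assembles the right ingredients (a dichotomy, binomial anti-concentration, the zero-sum constraint, and the role of $\max\{\Gap(t),\Gap(t+b)\}$), but the step you yourself flag as ``the delicate part'' --- defeating the adversary who chooses $p$ to cancel the pre-existing imbalance --- is exactly where the proof has to do its work, and neither of your two concrete suggestions closes it. The packaging via a linear potential such as $\sum_{i\le \delta n} y_i^t$ does not work: against the compensating strategy $p_j = \frac{1}{n} - \frac{y_j^t}{b}$ (feasible whenever $\Gap(t)$ is small), \emph{every} per-bin expected normalized load after the batch is exactly $0$, so every linear functional of the loads has zero expected drift, and its random fluctuation is only $\Theta(\sqrt{b})$, far below the claimed $\Omega(\sqrt{bn\log n})$. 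Moreover this same strategy has all prefix deviations $\bigl|\sum_{i\le k}(p_i-\tfrac1n)\bigr| = \bigl|\sum_{i\le k} y_i^t\bigr|/b = O(\kappa\sqrt{(n/b)\log n})$, so it lands in your ``close to uniform'' branch --- which shows the prefix-sum dichotomy is not the right one. In that branch your argument also points the wrong way: you select bins with $p_i \le (1+o(1))/n$ and claim their counts dominate a \OneChoice allocation, but for a lower bound on the maximum you need the opposite, namely a \emph{lower} bound $p_i = \Omega(1/n)$ (so the binomial has enough variance) together with a lower bound on the expected final normalized load of that bin; prefix-sum near-uniformity gives neither (e.g.\ $p$ alternating between $2/n$ and $0$ has all prefix sums within $1/n$ of uniform, yet half the bins contribute no fluctuation at all). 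Similarly, in your ``far from uniform'' branch, a total surplus of $\Omega(\sqrt{bn\log n})$ over $\Theta(n)$ bins only yields an \emph{average} excess; the adversary can aim that surplus precisely at bins whose initial loads absorb it, so converting the average into a maximum again requires per-bin, not aggregate, control.

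The paper's proof resolves this by working throughout with the per-bin quantities $\varphi_j := b\,p_j - \frac{b}{n} + y_j^t$ (expected normalized load after the batch), which fold the adaptive choice of $p$ and the initial load into one object with $\sum_j \varphi_j = 0$. After discarding bins with $p_j$ far below $1/n$ (which \Whp\ end below average and can be raised to $\tilde p_j \ge \frac{1}{2n}$ by a monotone coupling), the dichotomy is: either all but $n^{\xi}$ bins have $\varphi_j \le -C\sqrt{b/n}$, in which case the zero-sum constraint forces a single bin with $\varphi_j \ge \tfrac12 n^{1-\xi}\sqrt{b/n}$ and the median of the binomial already gives a huge gap with probability $1/2$; or at least $n^{\xi}$ bins have $\varphi_j \ge -C\sqrt{b/n}$ \emph{and} $\tilde p_j \ge \frac{1}{2n}$, and a binomial anti-concentration bound (each such bin exceeds its mean by $\Omega(\sqrt{(b/n)\log n})$ with probability $\ge n^{-\xi/2}$) finishes. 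If you reformulate your dichotomy in terms of $\varphi_j$ rather than prefix sums of $p$, and replace the upper bound $p_i \le (1+o(1))/n$ by the lower bounds on $\tilde p_j$ and $\varphi_j$, your outline becomes essentially this proof; as written, it has a genuine gap at its center.
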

\begin{proof}
In the proof, we shall prove a slightly stronger statement:
\[
 \Pro{ \max \left\{ \Gap(t), \Gap(t+b) \right\} \geq \kappa \cdot \sqrt{ (b/n) \cdot \log n} ~~\Bigg|~~ \mathfrak{F}^t } \geq 1/2. 
\]
That is, there is no load configuration and no probability allocation vector (depending on $\mathfrak{F}^t$) such that the gap is small, both before and at the end of an arbitrary batch. 

For notational convenience, we will prove this statement by assuming that $t=0$, and $x^{0}$ is an arbitrary load vector satisfying $\sum_{i \in [n]} x_i^0=0$ (in other words, we shift time backwards by $t$ steps) and $p = p^0$ is the probability allocation vector used by the process.
Consider one arbitrary bin $j \in [n]$. Then,
\begin{align*}
 \Ex{ x_j^b - \frac{b}{n} + z_j} = b \cdot p_j - \frac{b}{n} + z_j =: \varphi_j %
\end{align*}

For a sufficiently large constant $C > 0$,  let us now assume $\max_{j \in [n]} z_j \leq C/2 \cdot \sqrt{(b/n) \cdot \log n}$; clearly, if this is not the case, we already have a large gap already before the next batch. 

Next consider a bin $j \in [n]$ with
\[
 p_j \leq \frac{1}{n} + \frac{1}{b} \cdot \left( -10 C \cdot \sqrt{(b/n) \cdot \log n} \right).
\]
We will now apply a Chernoff bound (\cref{lem:chernoff}) for $x_j^b \sim \mathsf{Bin}(b,p_j)$,
with $\delta := C \cdot \sqrt{(n/b) \cdot \log n}$, $\mu := b \cdot p_j$ and $\mu_H := \frac{b}{n} - 10 C \cdot \sqrt{(b/n) \cdot \log n} \geq \mu$ to get that
\[
  \Pro{ x_j^b \geq \frac{b}{n} - C \cdot \sqrt{(b/n) \cdot \log n}}
    \leq \Pro{ x_j^b \geq \mu_H \cdot (1 + \delta) }
    \leq e^{-\delta^2\mu_H/3} = e^{-\frac{C^2}{3} \cdot \log n} \leq n^{-4},
\]
using that $C \geq 4$.
If $\left\{ x_j^b \leq \frac{b}{n} - C \cdot \sqrt{(b/n) \cdot \log n} \right\}$ occurs, then
\[
x_j^b - \frac{b}{n} + z_j \leq -C \cdot \sqrt{(b/n) \cdot \log n} + z_j  \leq -C/ 2 \cdot \sqrt{(b/n) \cdot \log n} \leq 0,
\]
and thus bin $j$ will not contribute to the gap at step $b$.

Hence in the remainder of the proof, we would like to assume that for all bins $j \in [n]$,
\[
 p_j \geq \frac{1}{n} + \frac{1}{b} \cdot \left( -10 C \cdot \sqrt{(b/n) \cdot \log n} \right) =: p_{\operatorname{low}}
\]
Note $p_{\operatorname{low}} \leq 1/n$.
Consider now a transformation of the probability vector $(p_i)_{i \in [n]}$ into $(\tilde{p}_i)_{i \in [n]}$, where $\tilde{p}$ satisfies for all $j \in [n]$,$
p_{\operatorname{low}} \leq \tilde{p}_j \leq \max \left\{  p_{\operatorname{low}},  p_j \right\}.
$
In other words, in $\tilde{p}$ we only increase probabilities of bins $j \in [n]$, for which $p_j < p_{\operatorname{low}}$. Let us define $\mathcal{J} := \left\{ j \in [n] \colon p_j < p_{\operatorname{low}} \right\}$. For $b \geq (20C^2) n \log n$, this implies $\tilde{p}_j \geq \frac{1}{2n}$ for all $j \in [n]$.

Further,
let $(x_i^b)_{i \in [n]}$ be a load vector where the locations of the next $b$ balls are sampled according to $p$, and $(\tilde{x}_i^b)_{i \in [n]}$ be a load vector where these locations are sampled according to $\tilde{p}$. Clearly, there is a coupling so that for every $j \in [n] \setminus \mathcal{J}$, $x_j^b \geq \tilde{x}_j^b$ (since $p_j^b \geq \tilde{p}_j^b$). Further, for any $j \in \mathcal{J}$, by a union bound,
\[
 \Pro{ \max_{j \in \mathcal{J}} x_j^b > 0 } \leq | \mathcal{J} | \cdot n^{-4} \leq n^{-3}.
\]
Hence it follows that, for any threshold $T > 0$,
\begin{align*}
 \Pro{ \max_{j \in [n]} x_j^b \geq T } &\geq
 \Pro{ \max_{j \in [n]} \tilde{x}_j^b \geq T} - \Pro{ \max_{j \in \mathcal{J}} x_j^b \geq T}
  \geq  \Pro{ \max_{j \in [n]} \tilde{x}_j^b \geq T} - n^{-3}.
 \end{align*}

Therefore, in the remainder of the proof, we will lower bound $\Pro{ \max_{j \in [n] \setminus \mathcal{J}} \tilde{x}_j^b \geq T}$ for a suitable value of $T=\Omega( \sqrt{b / n \cdot \log n})$. We will also use the definition
\[
 \tilde{\varphi}_j := b \cdot \tilde{p}_j - \frac{b}{n} + z_j.
\]
Finally, we define $\xi=0.1$ as a (sufficiently) small constant.

\textbf{Case 1:} We have at least $n-n^{\xi}$ bins for which $\tilde{\varphi}_j \leq - C \sqrt{b/n}$. Since $\sum_{i \in [n]} \tilde{\varphi}_i = 0$, this implies that there must be at least one bin with $j \in [n]$ with
$
 \tilde{\varphi}_j \geq \frac{n-n^{\xi}}{n^{\xi}} \cdot C \cdot \sqrt{b/n} \geq 1/2 \cdot n^{1-\xi} \cdot \sqrt{b/n}.
$
Further, using that the median of a $\mathsf{Bin}(N, q)$ r.v.~is either $\lfloor Nq \rfloor$ or $\lceil Nq \rceil$, then
\[
\Pro{ x_j^b \geq \Ex{ x_j^b } - \frac{1}{4} n^{1-\xi} } \geq \Pro{ x_j^b \geq \left\lfloor \Ex{ x_j^b }\right\rfloor }
\geq 1/2, \] it follows that with probability at least $1/2$ we will have a large gap.

\textbf{Case 2:} We have at least $n^{\xi}$ bins with $\tilde{\varphi}_j \geq - C \sqrt{b/n}$; call this set $\mathcal{B}$. We further know that, due to the definition of $\tilde{p}$, we have for all bins $j \in [n]$ that $\tilde{p}_j \geq \frac{1}{2n}$. Hence, we set $T:= b \cdot \tilde{p}_j + \kappa \cdot \sqrt{ b \cdot \tilde{p}_j \cdot \log n}$, and applying \cref{lem:binomial} yields for any bin $j \in \mathcal{B}$,
$
 \Pro{ \tilde{x}_j^{b} \geq T } \geq n^{-\xi/2}.
$
Since $\tilde{p}_j \geq \frac{1}{2n}$, $|\mathcal{B}| \geq n^{\xi}$, the claim follows.\end{proof}

\section{Experimental Results} \label{sec:experiments}

In this section, we complement our theoretical analysis with some experimental results for the \Batched setting. In \cref{fig:beta_vs_batch_size}, we plot the gap of the \OnePlusBeta-process for various batch sizes and different values of $\beta \in (0, 1]$ (\TwoChoice corresponding to $\beta = 1$). The plot strongly suggests the existence of an optimal $\beta$, which seems to increase as the batch size $b$ grows.

\begin{figure}[H]
    \centering
    \vspace{-0.2cm}
    \includegraphics[scale=0.85]{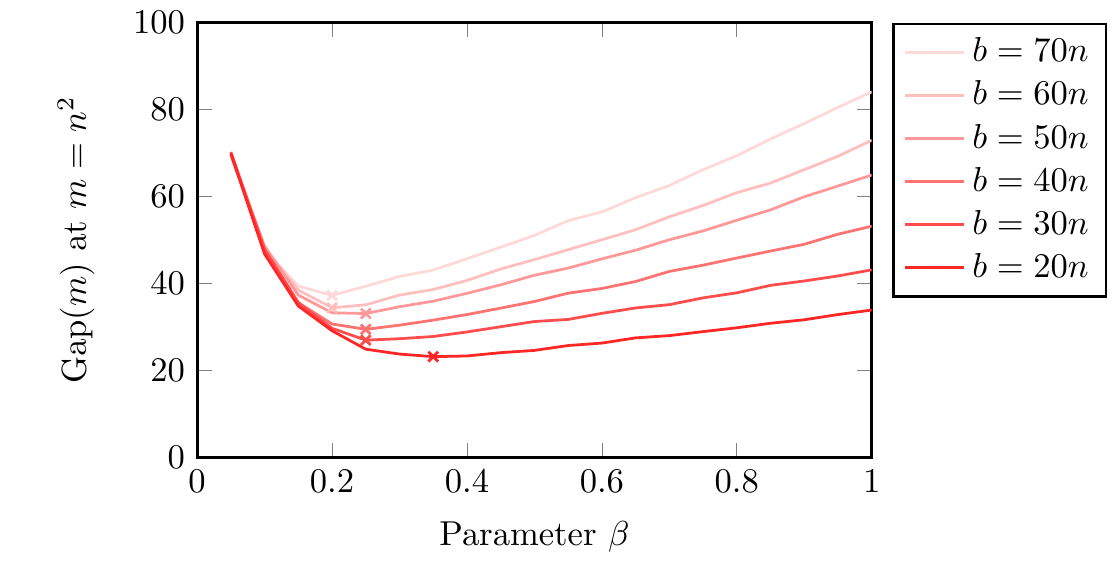}
    \vspace{-0.4cm}
    \caption{Average gap for the \OnePlusBeta-process in the \Batched setting with unit weights for $n = 1.000$ bins, $m = n^2$ balls, for various batch sizes and parameter values $\beta \in (0, 1]$ (averaged over $25$ runs).}
    \label{fig:beta_vs_batch_size}
\end{figure}

In \cref{fig:eta_vs_batch_size}, we present the corresponding empirical results of \cref{fig:beta_vs_batch_size} for  
the \Quantile process (mixed with \OneChoice). As with the \OnePlusBeta-process, the optimal mixing factor $\eta$ tends to increase as the batch size grows. The \Quantile with the optimized mixing factor seems to perform slightly worse than the optimized \OnePlusBeta-process.

\begin{figure}[H]
    \centering
    \vspace{-0.2cm}
    \includegraphics[scale=0.85]{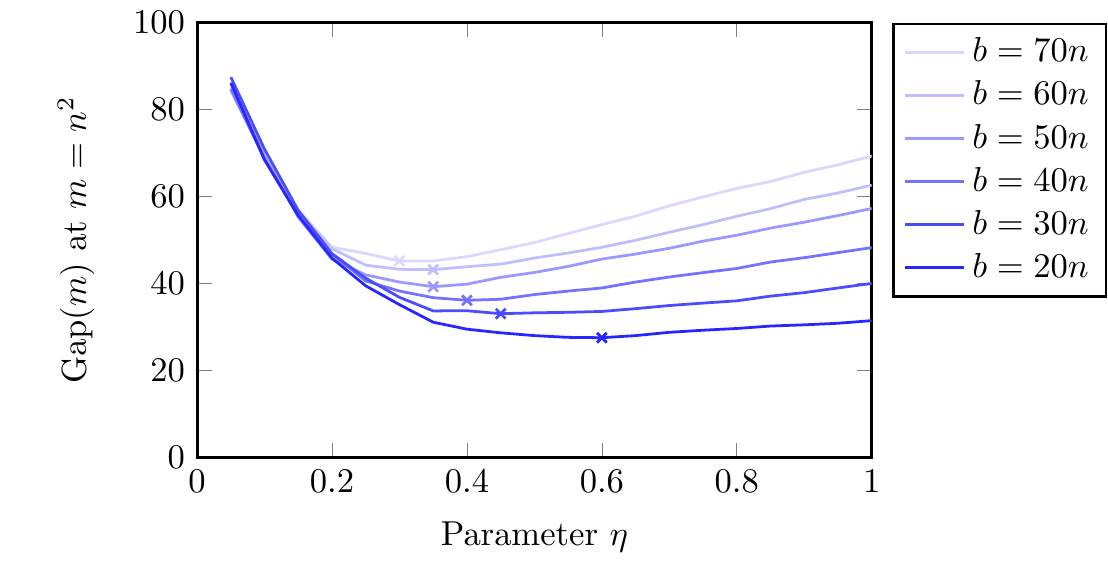}
    \vspace{-0.4cm}
    \caption{Average gap for the \Quantile process (mixed with \OneChoice with probability $\eta \in (0, 1]$) in the \Batched setting with unit weights for $n = 1.000$ bins, $m = n^2$ balls, for various batch sizes and parameter values $\eta \in [0, 1]$ (averaged over $25$ runs).}
    \label{fig:eta_vs_batch_size}
\end{figure}

 In \cref{fig:batched_unit_weights}, we plot the gap of \TwoChoice, \ThreeChoice and \OnePlusBeta versus the batch size. For small values of $b$, the gap of \TwoChoice and \ThreeChoice is small, but soon grows rapidly, diverging from the asymptotically optimal \OnePlusBeta-processes as predicted by the theoretical analysis. Similar, results are observed for weights sampled from an exponential distribution~\cref{fig:batched_exp_weights}.

\begin{figure}[H]
    \centering
    \includegraphics[scale=0.85]{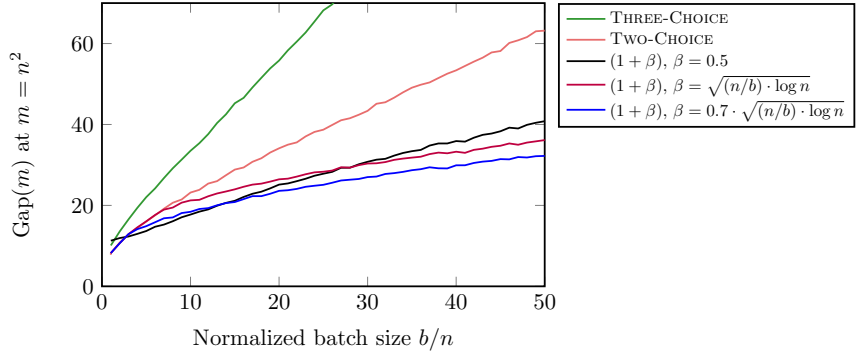}
    \vspace{-0.4cm}
    \caption{Average gap for \ThreeChoice, \TwoChoice and \OnePlusBeta with $\beta = 0.5$, $\beta = \sqrt{(n/b) \cdot \log n}$ and $\beta = 0.7 \cdot \sqrt{(n/b) \cdot \log n}$ in the \Batched setting with unit weights for $n = 1.000$ bins, $m = n^2$ balls vs batch size $b \in \{ n, \ldots, 50n \}$ (averaged over $50$ runs).}
    \label{fig:batched_unit_weights}
\end{figure}

\begin{figure}[H]
    \centering
    \includegraphics[scale=0.85]{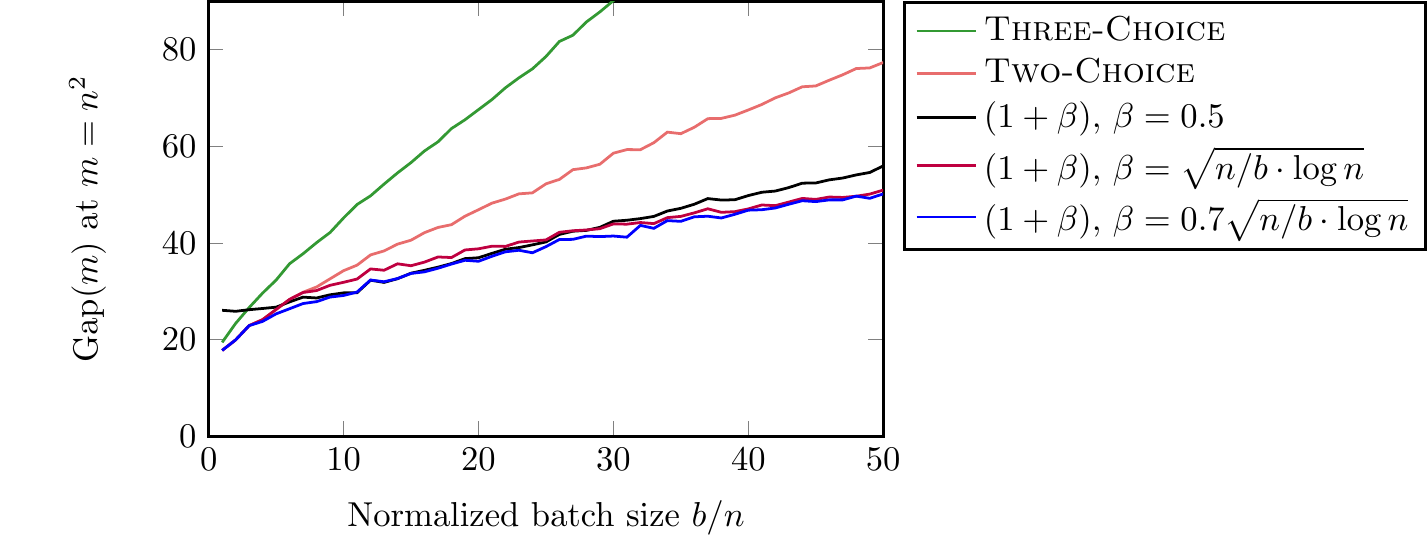}
    \vspace{-0.4cm}
    \caption{Average gap for \ThreeChoice, \TwoChoice and \OnePlusBeta with $\beta = 0.5$, $\beta = \sqrt{(n/b) \cdot \log n}$ and $\beta = 0.7 \cdot \sqrt{(n/b) \cdot \log n}$ in the \Batched setting with weights from an $\mathsf{Exp}(1)$ distribution for $n = 1.000$ bins, $m = n^2$ balls vs batch size $b \in \{ n, \ldots, 50n \}$ (averaged over $50$ runs).}
    \label{fig:batched_exp_weights}
\end{figure}

Finally, in \cref{tab:batch_large_values}, we show the gap of the \OnePlusBeta and \Quantile compared to \TwoChoice and \OneChoice with $b$ balls (which is the theoretically optimal attainable value), for slightly larger values of $n \in \{ 10^4, 10^5 \}$. The for large $b$, the \OnePlusBeta has roughly half the gap of \TwoChoice and is close to the theoretically optimal value of \OneChoice for $m = b$ balls.

\begin{table}[H]
    \centering
\scalebox{0.9}{
    \begin{tabular}{cc|c|c|c|c|}
\cline{3-6}
 &  & \TwoChoice & \Quantile & \OnePlusBeta & $\OneChoice^*$ for $m = b$ \\ \hline
\multirow{3}{*}{\rotatebox{90}{$n=10^4$}} 
& \multicolumn{1}{|c|}{$b = 20n$} & 36.45 & 30.15 & \textbf{26.60} & 19.00 \\ \cline{2-6}
& \multicolumn{1}{|c|}{$b = 50n$} &  70.10 & 45.75 & \textbf{39.00} & 29.75 \\ \cline{2-6}
& \multicolumn{1}{|c|}{$b = 80n$} & 100.85 & 55.65 & \textbf{46.80} & 35.80 \\ \hline \hline
\multirow{3}{*}{\rotatebox{90}{$n=10^5$}} 
& \multicolumn{1}{|c|}{$b = 20n$} &  39.90 & 34.1 & \textbf{29.95} & 22.40 \\ \cline{2-6}
& \multicolumn{1}{|c|}{$b = 50n$} &  75.55 & 50.3 & $\mathbf{44.20}$ & 34.30 \\ \cline{2-6}
& \multicolumn{1}{|c|}{$b = 80n$} & 111.10 & 64.9 & \textbf{55.20} & 41.95 \\ \hline
    \end{tabular}}
    \caption{Average gap for \TwoChoice, \Quantile (with $\eta = \sqrt{(n/b) \cdot \log n}$) and \OnePlusBeta (with $\beta = 0.7 \sqrt{(n/b) \cdot \log n}$) in the \Batched setting with $b \in \{ 20n, 50n, 80n \}$ and $n \in \{ 10^4, 10^5 \}$ (averaged over $20$ runs). The last column gives the average gap for \OneChoice with $m = b$ balls which is the theoretically optimal attainable value.}
    \label{tab:batch_large_values}
\end{table}

\section{Conclusions} \label{sec:conclusions}

In this work, we revisited the outdated information setting of \cite{BCEFN12}, where  balls are allocated to bins in batches of size $b$, using the load information available at the beginning of the batch. We established that by defining the mixing factor $\beta$ carefully as a function of the batch size $b$, \OnePlusBeta achieves the asymptotically optimal gap for any $b \geq n \log n$. That is, by having $\beta$ chosen appropriately small, \OnePlusBeta circumvents the ``herd behavior'' (as called in \cite{M00}), where some of the previously underloaded bins are chosen too frequently, turning them into heavily overloaded bins in the next batch. Similarly, $\beta$ should also not be too small, as otherwise the process would be too close to \OneChoice.  

There are several directions for future work. First, recall that our lower bounds apply to a large class of processes which allocate all balls within the \emph{same batch} independently. However, there are processes  which allocate multiple balls in a coordinated way. For example, the process of Park~\cite{P11} draws $d$ samples, and then places into each of the $k$ least loaded bins one ball. It would be interesting to explore the gap of this type of processes in the \Batched setting. A second avenue is to analyze \TwoThinning processes (and in particular processes that use a fixed load threshold relative to the average) in outdated information settings. An experimental study of threshold processes with outdated information was already conducted in 1989~\cite[Figure 8]{MTS89}, but no rigorous bounds were proven. A third possibility is to investigate whether the \OnePlusBeta and related processes are superior to \TwoChoice in other settings, like the \TauDelay or random noise settings studied in~\cite{LS22Noise}. Finally, one could study settings where the load information of bins is updated at different rates, depending on the specific bin. In such a setting, when deciding between sampled bins, both their reported load estimates and update rates should be taken into account.

\clearpage

\addcontentsline{toc}{section}{Bibliography}
\renewcommand{\bibsection}{\section*{Bibliography}}
\setlength{\bibsep}{0pt plus 0.3ex}
\bibliographystyle{ACM-Reference-Format-CAM}
\bibliography{bibliography}


\begin{thebibliography}{38}


\ifx \showCODEN    \undefined \def \showCODEN     #1{\unskip}     \fi
\ifx \showDOI      \undefined \def \showDOI       #1{#1}\fi
\ifx \showISBNx    \undefined \def \showISBNx     #1{\unskip}     \fi
\ifx \showISBNxiii \undefined \def \showISBNxiii  #1{\unskip}     \fi
\ifx \showISSN     \undefined \def \showISSN      #1{\unskip}     \fi
\ifx \showLCCN     \undefined \def \showLCCN      #1{\unskip}     \fi
\ifx \shownote     \undefined \def \shownote      #1{#1}          \fi
\ifx \showarticletitle \undefined \def \showarticletitle #1{#1}   \fi
\ifx \showURL      \undefined \def \showURL       {\relax}        \fi
\providecommand\bibfield[2]{#2}
\providecommand\bibinfo[2]{#2}
\providecommand\natexlab[1]{#1}
\providecommand\showeprint[2][]{arXiv:#2}

\bibitem[Altman and Nain(1992)]%
        {AN92}
\bibfield{author}{\bibinfo{person}{Eitan Altman} {and}
  \bibinfo{person}{Philippe Nain}.} \bibinfo{year}{1992}\natexlab{}.
\newblock \showarticletitle{Closed-Loop Control with Delayed Information}. In
  \bibinfo{booktitle}{\emph{ACM SIGMETRICS Joint International Conference on
  Measurement and Modeling of Computer Systems (PERFORMANCE'92)}}.
  \bibinfo{publisher}{ACM}, \bibinfo{pages}{193–204}.
\newblock
\showISBNx{0897915070}
\href{https://doi.org/10.1145/133057.133106}{\texttt{doi}}


\bibitem[Azar et~al\mbox{.}(2020)]%
        {award20}
\bibfield{author}{\bibinfo{person}{Yossi Azar}, \bibinfo{person}{Andrei~Z.
  Broder}, \bibinfo{person}{Anna~R. Karlin}, \bibinfo{person}{Michael
  Mitzenmacher}, {and} \bibinfo{person}{Eli Upfal}.}
  \bibinfo{year}{2020}\natexlab{}.
\newblock \bibinfo{title}{{The ACM Paris Kanellakis Theory and Practice
  Award}}.
\newblock
\newblock
\newblock
\shownote{\url{https://www.acm.org/media-center/2021/may/technical-awards-2020}}.


\bibitem[Azar et~al\mbox{.}(1999)]%
        {ABKU99}
\bibfield{author}{\bibinfo{person}{Yossi Azar}, \bibinfo{person}{Andrei~Z.
  Broder}, \bibinfo{person}{Anna~R. Karlin}, {and} \bibinfo{person}{Eli
  Upfal}.} \bibinfo{year}{1999}\natexlab{}.
\newblock \showarticletitle{Balanced allocations}.
\newblock \bibinfo{journal}{\emph{\JournalOnComputing}} \bibinfo{volume}{29},
  \bibinfo{number}{1} (\bibinfo{year}{1999}), \bibinfo{pages}{180--200}.
\newblock
\showISSN{0097-5397}
\href{https://doi.org/10.1137/S0097539795288490}{\texttt{doi}}


\bibitem[Bansal and Feldheim(2022)]%
        {BF21}
\bibfield{author}{\bibinfo{person}{Nikhil Bansal} {and}
  \bibinfo{person}{Ohad~N. Feldheim}.} \bibinfo{year}{2022}\natexlab{}.
\newblock \showarticletitle{The power of two choices in graphical allocation}.
  In \bibinfo{booktitle}{\emph{\STOC{54}{22}}}. \bibinfo{publisher}{{ACM}},
  \bibinfo{pages}{52--63}.
\newblock
\href{https://doi.org/10.1145/3519935.3519995}{\texttt{doi}}


\bibitem[Berenbrink et~al\mbox{.}(2012)]%
        {BCEFN12}
\bibfield{author}{\bibinfo{person}{Petra Berenbrink}, \bibinfo{person}{Artur
  Czumaj}, \bibinfo{person}{Matthias Englert}, \bibinfo{person}{Tom
  Friedetzky}, {and} \bibinfo{person}{Lars Nagel}.}
  \bibinfo{year}{2012}\natexlab{}.
\newblock \showarticletitle{Multiple-Choice Balanced Allocation in (Almost)
  Parallel}. In \bibinfo{booktitle}{\emph{\RANDOM{16}{12}}}.
  \bibinfo{publisher}{Springer-Verlag}, \bibinfo{pages}{411--422}.
\newblock
\href{https://doi.org/10.1007/978-3-642-32512-0_35}{\texttt{doi}}


\bibitem[Berenbrink et~al\mbox{.}(2006)]%
        {BCSV06}
\bibfield{author}{\bibinfo{person}{Petra Berenbrink}, \bibinfo{person}{Artur
  Czumaj}, \bibinfo{person}{Angelika Steger}, {and} \bibinfo{person}{Berthold
  V\"{o}cking}.} \bibinfo{year}{2006}\natexlab{}.
\newblock \showarticletitle{Balanced allocations: the heavily loaded case}.
\newblock \bibinfo{journal}{\emph{\JournalOnComputing}} \bibinfo{volume}{35},
  \bibinfo{number}{6} (\bibinfo{year}{2006}), \bibinfo{pages}{1350--1385}.
\newblock
\showISSN{0097-5397}
\href{https://doi.org/10.1137/S009753970444435X}{\texttt{doi}}


\bibitem[Berenbrink et~al\mbox{.}(2008)]%
        {BFHM08}
\bibfield{author}{\bibinfo{person}{Petra Berenbrink}, \bibinfo{person}{Tom
  Friedetzky}, \bibinfo{person}{Zengjian Hu}, {and} \bibinfo{person}{Russell
  Martin}.} \bibinfo{year}{2008}\natexlab{}.
\newblock \showarticletitle{On weighted balls-into-bins games}.
\newblock \bibinfo{journal}{\emph{Theoret. Comput. Sci.}}
  \bibinfo{volume}{409}, \bibinfo{number}{3} (\bibinfo{year}{2008}),
  \bibinfo{pages}{511--520}.
\newblock
\showISSN{0304-3975}
\href{https://doi.org/10.1016/j.tcs.2008.09.023}{\texttt{doi}}


\bibitem[Dahlin(2000)]%
        {D00}
\bibfield{author}{\bibinfo{person}{Michael Dahlin}.}
  \bibinfo{year}{2000}\natexlab{}.
\newblock \showarticletitle{Interpreting Stale Load Information}.
\newblock \bibinfo{journal}{\emph{\IEEEParallelDistributedSystems}}
  \bibinfo{volume}{11}, \bibinfo{number}{10} (\bibinfo{year}{2000}),
  \bibinfo{pages}{1033--1047}.
\newblock
\href{https://doi.org/10.1109/71.888643}{\texttt{doi}}


\bibitem[Delgado et~al\mbox{.}(2016)]%
        {DDDZ16}
\bibfield{author}{\bibinfo{person}{Pamela Delgado}, \bibinfo{person}{Diego
  Didona}, \bibinfo{person}{Florin Dinu}, {and} \bibinfo{person}{Willy
  Zwaenepoel}.} \bibinfo{year}{2016}\natexlab{}.
\newblock \showarticletitle{Job-aware Scheduling in Eagle: Divide and Stick to
  Your Probes}. In \bibinfo{booktitle}{\emph{7th {ACM} Symposium on Cloud
  Computing (SoCC'16)}}. \bibinfo{publisher}{{ACM}}, \bibinfo{pages}{497--509}.
\newblock
\href{https://doi.org/10.1145/2987550.2987563}{\texttt{doi}}


\bibitem[Delgado et~al\mbox{.}(2015)]%
        {DDKZ15}
\bibfield{author}{\bibinfo{person}{Pamela Delgado}, \bibinfo{person}{Florin
  Dinu}, \bibinfo{person}{Anne{-}Marie Kermarrec}, {and} \bibinfo{person}{Willy
  Zwaenepoel}.} \bibinfo{year}{2015}\natexlab{}.
\newblock \showarticletitle{Hawk: Hybrid Datacenter Scheduling}. In
  \bibinfo{booktitle}{\emph{2015 {USENIX} Annual Technical Conference
  ({USENIX}'15)}}. \bibinfo{publisher}{{USENIX}}, \bibinfo{pages}{499--510}.
\newblock


\bibitem[Delimitrou et~al\mbox{.}(2015)]%
        {DSK15}
\bibfield{author}{\bibinfo{person}{Christina Delimitrou},
  \bibinfo{person}{Daniel S{\'{a}}nchez}, {and} \bibinfo{person}{Christos
  Kozyrakis}.} \bibinfo{year}{2015}\natexlab{}.
\newblock \showarticletitle{Tarcil: reconciling scheduling speed and quality in
  large shared clusters}. In \bibinfo{booktitle}{\emph{6th {ACM} Symposium on
  Cloud Computing (SoCC'15)}}. \bibinfo{publisher}{{ACM}},
  \bibinfo{pages}{97--110}.
\newblock
\href{https://doi.org/10.1145/2806777.2806779}{\texttt{doi}}


\bibitem[Feldheim and Gurel-Gurevich(2021)]%
        {FG18}
\bibfield{author}{\bibinfo{person}{Ohad~N. Feldheim} {and} \bibinfo{person}{Ori
  Gurel-Gurevich}.} \bibinfo{year}{2021}\natexlab{}.
\newblock \showarticletitle{The power of thinning in balanced allocation}.
\newblock \bibinfo{journal}{\emph{Electron. Commun. Probab.}}
  \bibinfo{volume}{26} (\bibinfo{year}{2021}), \bibinfo{pages}{Paper No. 34,
  8}.
\newblock
\href{https://doi.org/10.1214/21-ecp400}{\texttt{doi}}


\bibitem[Feldheim and Li(2020)]%
        {FL20}
\bibfield{author}{\bibinfo{person}{Ohad~N. Feldheim} {and}
  \bibinfo{person}{Jiange Li}.} \bibinfo{year}{2020}\natexlab{}.
\newblock \showarticletitle{Load balancing under {$d$}-thinning}.
\newblock \bibinfo{journal}{\emph{Electronic Communications in Probability}}
  \bibinfo{volume}{25} (\bibinfo{year}{2020}), \bibinfo{pages}{Paper No. 1,
  13}.
\newblock
\href{https://doi.org/10.1214/19-ecp282}{\texttt{doi}}


\bibitem[Fox et~al\mbox{.}(1997)]%
        {FGCBG97}
\bibfield{author}{\bibinfo{person}{Armando Fox}, \bibinfo{person}{Steven~D.
  Gribble}, \bibinfo{person}{Yatin Chawathe}, \bibinfo{person}{Eric~A. Brewer},
  {and} \bibinfo{person}{Paul Gauthier}.} \bibinfo{year}{1997}\natexlab{}.
\newblock \showarticletitle{Cluster-Based Scalable Network Services}. In
  \bibinfo{booktitle}{\emph{\SOSP{16}{97}}}. \bibinfo{publisher}{ACM},
  \bibinfo{pages}{78–91}.
\newblock
\showISBNx{0897919165}
\href{https://doi.org/10.1145/268998.266662}{\texttt{doi}}


\bibitem[Karp et~al\mbox{.}(1996)]%
        {KLM96}
\bibfield{author}{\bibinfo{person}{Richard~M. Karp}, \bibinfo{person}{Michael
  Luby}, {and} \bibinfo{person}{Friedhelm Meyer auf~der Heide}.}
  \bibinfo{year}{1996}\natexlab{}.
\newblock \showarticletitle{Efficient {PRAM} simulation on a distributed memory
  machine}.
\newblock \bibinfo{journal}{\emph{\AlgorithmicaJournal}} \bibinfo{volume}{16},
  \bibinfo{number}{4-5} (\bibinfo{year}{1996}), \bibinfo{pages}{517--542}.
\newblock
\showISSN{0178-4617}
\href{https://doi.org/10.1007/BF01940878}{\texttt{doi}}


\bibitem[Kenthapadi and Panigrahy(2006)]%
        {KP06}
\bibfield{author}{\bibinfo{person}{Krishnaram Kenthapadi} {and}
  \bibinfo{person}{Rina Panigrahy}.} \bibinfo{year}{2006}\natexlab{}.
\newblock \showarticletitle{Balanced allocation on graphs}. In
  \bibinfo{booktitle}{\emph{\SODA{17}{06}}}. \bibinfo{publisher}{SIAM},
  \bibinfo{pages}{434--443}.
\newblock
\href{https://doi.org/10.1145/1109557.1109606}{\texttt{doi}}


\bibitem[Khelghatdoust and Gramoli(2018)]%
        {KG18}
\bibfield{author}{\bibinfo{person}{Mansour Khelghatdoust} {and}
  \bibinfo{person}{Vincent Gramoli}.} \bibinfo{year}{2018}\natexlab{}.
\newblock \showarticletitle{Peacock: Probe-Based Scheduling of Jobs by Rotating
  Between Elastic Queues}. In \bibinfo{booktitle}{\emph{24th International
  Conference on Parallel and Distributed Computing (Euro-Par'18)}},
  Vol.~\bibinfo{volume}{11014}. \bibinfo{publisher}{Springer},
  \bibinfo{pages}{178--191}.
\newblock
\href{https://doi.org/10.1007/978-3-319-96983-1\_13}{\texttt{doi}}


\bibitem[Kuri and Kumar(1995)]%
        {KK95}
\bibfield{author}{\bibinfo{person}{Joy Kuri} {and} \bibinfo{person}{Anurag
  Kumar}.} \bibinfo{year}{1995}\natexlab{}.
\newblock \showarticletitle{Optimal control of arrivals to queues with delayed
  queue length information}.
\newblock \bibinfo{journal}{\emph{IEEE Trans. Automat. Control}}
  \bibinfo{volume}{40}, \bibinfo{number}{8} (\bibinfo{year}{1995}),
  \bibinfo{pages}{1444--1450}.
\newblock
\href{https://doi.org/10.1109/9.402238}{\texttt{doi}}


\bibitem[Kutin(2002)]%
        {K02}
\bibfield{author}{\bibinfo{person}{Samuel Kutin}.}
  \bibinfo{year}{2002}\natexlab{}.
\newblock \bibinfo{booktitle}{\emph{Extensions to {McDiarmid}’s inequality
  when differences are bounded with high probability}}.
\newblock \bibinfo{type}{{T}echnical {R}eport}.
  \bibinfo{institution}{University of Chicago}.
\newblock


\bibitem[Los and Sauerwald(2022a)]%
        {LS22Batched}
\bibfield{author}{\bibinfo{person}{Dimitrios Los} {and} \bibinfo{person}{Thomas
  Sauerwald}.} \bibinfo{year}{2022}\natexlab{a}.
\newblock \showarticletitle{Balanced Allocations in Batches: Simplified and
  Generalized}. In \bibinfo{booktitle}{\emph{\SPAA{34}{22}}}.
  \bibinfo{publisher}{{ACM}}, \bibinfo{pages}{389–399}.
\newblock
\showISBNx{9781450391467}
\href{https://doi.org/10.1145/3490148.3538593}{\texttt{doi}}


\bibitem[Los and Sauerwald(2022b)]%
        {LS22Queries}
\bibfield{author}{\bibinfo{person}{Dimitrios Los} {and} \bibinfo{person}{Thomas
  Sauerwald}.} \bibinfo{year}{2022}\natexlab{b}.
\newblock \showarticletitle{{Balanced Allocations with Incomplete Information:
  The Power of Two Queries}}. In \bibinfo{booktitle}{\emph{\ITCS{13}{22}}},
  Vol.~\bibinfo{volume}{215}. \bibinfo{publisher}{Schloss Dagstuhl --
  Leibniz-Zentrum f{\"u}r Informatik}, \bibinfo{pages}{103:1--103:23}.
\newblock
\showISBNx{978-3-95977-217-4}
\showISSN{1868-8969}
\href{https://doi.org/10.4230/LIPIcs.ITCS.2022.103}{\texttt{doi}}


\bibitem[Los and Sauerwald(2022c)]%
        {LS22Noise}
\bibfield{author}{\bibinfo{person}{Dimitrios Los} {and} \bibinfo{person}{Thomas
  Sauerwald}.} \bibinfo{year}{2022}\natexlab{c}.
\newblock \showarticletitle{Balanced Allocations with the Choice of Noise}. In
  \bibinfo{booktitle}{\emph{\PODC{41}{22}}}. \bibinfo{publisher}{ACM},
  \bibinfo{pages}{164–175}.
\newblock
\showISBNx{9781450392624}
\href{https://doi.org/10.1145/3519270.3538428}{\texttt{doi}}


\bibitem[Los and Sauerwald(2023)]%
        {LS23RBB}
\bibfield{author}{\bibinfo{person}{Dimitrios Los} {and} \bibinfo{person}{Thomas
  Sauerwald}.} \bibinfo{year}{2023}\natexlab{}.
\newblock \showarticletitle{{Tight Bounds for Repeated Balls-Into-Bins}}. In
  \bibinfo{booktitle}{\emph{\STACS{40}{23}}}, Vol.~\bibinfo{volume}{254}.
  \bibinfo{publisher}{Schloss Dagstuhl -- Leibniz-Zentrum f{\"u}r Informatik},
  \bibinfo{pages}{45:1--45:22}.
\newblock
\showISBNx{978-3-95977-266-2}
\showISSN{1868-8969}
\href{https://doi.org/10.4230/LIPIcs.STACS.2023.45}{\texttt{doi}}


\bibitem[Lu et~al\mbox{.}(2011)]%
        {LXKGLG11}
\bibfield{author}{\bibinfo{person}{Yi Lu}, \bibinfo{person}{Qiaomin Xie},
  \bibinfo{person}{Gabriel Kliot}, \bibinfo{person}{Alan Geller},
  \bibinfo{person}{James~R. Larus}, {and} \bibinfo{person}{Albert~G.
  Greenberg}.} \bibinfo{year}{2011}\natexlab{}.
\newblock \showarticletitle{Join-Idle-Queue: {A} novel load balancing algorithm
  for dynamically scalable web services}.
\newblock \bibinfo{journal}{\emph{Perform. Evaluation}} \bibinfo{volume}{68},
  \bibinfo{number}{11} (\bibinfo{year}{2011}), \bibinfo{pages}{1056--1071}.
\newblock
\href{https://doi.org/10.1016/j.peva.2011.07.015}{\texttt{doi}}


\bibitem[Mirchandaney et~al\mbox{.}(1989)]%
        {MTS89}
\bibfield{author}{\bibinfo{person}{Ravi Mirchandaney}, \bibinfo{person}{Don
  Towsley}, {and} \bibinfo{person}{John~A. Stankovic}.}
  \bibinfo{year}{1989}\natexlab{}.
\newblock \showarticletitle{Analysis of the Effects of Delays on Load Sharing}.
\newblock \bibinfo{journal}{\emph{IEEE Trans. Comput.}} \bibinfo{volume}{38},
  \bibinfo{number}{11} (\bibinfo{date}{nov} \bibinfo{year}{1989}),
  \bibinfo{pages}{1513–1525}.
\newblock
\showISSN{0018-9340}
\href{https://doi.org/10.1109/12.42124}{\texttt{doi}}


\bibitem[Mitzenmacher(1999)]%
        {M96}
\bibfield{author}{\bibinfo{person}{Michael Mitzenmacher}.}
  \bibinfo{year}{1999}\natexlab{}.
\newblock \showarticletitle{On the analysis of randomized load balancing
  schemes}.
\newblock \bibinfo{journal}{\emph{Theory Comput. Syst.}} \bibinfo{volume}{32},
  \bibinfo{number}{3} (\bibinfo{year}{1999}), \bibinfo{pages}{361--386}.
\newblock
\showISSN{1432-4350}
\href{https://doi.org/10.1007/s002240000122}{\texttt{doi}}


\bibitem[Mitzenmacher(2000)]%
        {M00}
\bibfield{author}{\bibinfo{person}{Michael Mitzenmacher}.}
  \bibinfo{year}{2000}\natexlab{}.
\newblock \showarticletitle{How Useful Is Old Information?}
\newblock \bibinfo{journal}{\emph{{IEEE} Trans. Parallel Distributed Syst.}}
  \bibinfo{volume}{11}, \bibinfo{number}{1} (\bibinfo{year}{2000}),
  \bibinfo{pages}{6--20}.
\newblock
\href{https://doi.org/10.1109/71.824633}{\texttt{doi}}


\bibitem[Mitzenmacher et~al\mbox{.}(2001)]%
        {MRS01}
\bibfield{author}{\bibinfo{person}{Michael Mitzenmacher},
  \bibinfo{person}{Andr\'{e}a~W. Richa}, {and} \bibinfo{person}{Ramesh
  Sitaraman}.} \bibinfo{year}{2001}\natexlab{}.
\newblock \showarticletitle{The power of two random choices: a survey of
  techniques and results}.
\newblock In \bibinfo{booktitle}{\emph{Handbook of randomized computing, {V}ol.
  {I}, {II}}}. \bibinfo{series}{Comb. Optim.}, Vol.~\bibinfo{volume}{9}.
  \bibinfo{publisher}{Kluwer Acad. Publ.}, \bibinfo{address}{Netherlands},
  \bibinfo{pages}{255--312}.
\newblock
\href{https://doi.org/10.1007/978-1-4615-0013-1_9}{\texttt{doi}}


\bibitem[Nasir et~al\mbox{.}(2015)]%
        {NMGKS15}
\bibfield{author}{\bibinfo{person}{Muhammad Anis~Uddin Nasir},
  \bibinfo{person}{Gianmarco De~Francisci Morales}, \bibinfo{person}{David
  Garc{\'{\i}}a{-}Soriano}, \bibinfo{person}{Nicolas Kourtellis}, {and}
  \bibinfo{person}{Marco Serafini}.} \bibinfo{year}{2015}\natexlab{}.
\newblock \showarticletitle{The power of both choices: Practical load balancing
  for distributed stream processing engines}. In \bibinfo{booktitle}{\emph{31st
  {IEEE} International Conference on Data Engineering ({ICDE}'15)}}.
  \bibinfo{publisher}{{IEEE}}, \bibinfo{pages}{137--148}.
\newblock
\href{https://doi.org/10.1109/ICDE.2015.7113279}{\texttt{doi}}


\bibitem[Nasir et~al\mbox{.}(2016)]%
        {NMKS16}
\bibfield{author}{\bibinfo{person}{Muhammad Anis~Uddin Nasir},
  \bibinfo{person}{Gianmarco De~Francisci Morales}, \bibinfo{person}{Nicolas
  Kourtellis}, {and} \bibinfo{person}{Marco Serafini}.}
  \bibinfo{year}{2016}\natexlab{}.
\newblock \showarticletitle{When two choices are not enough: Balancing at scale
  in Distributed Stream Processing}. In \bibinfo{booktitle}{\emph{32nd {IEEE}
  International Conference on Data Engineering ({ICDE}'16)}}.
  \bibinfo{publisher}{{IEEE}}, \bibinfo{pages}{589--600}.
\newblock
\href{https://doi.org/10.1109/ICDE.2016.7498273}{\texttt{doi}}


\bibitem[Ousterhout et~al\mbox{.}(2013)]%
        {OWZS13}
\bibfield{author}{\bibinfo{person}{Kay Ousterhout}, \bibinfo{person}{Patrick
  Wendell}, \bibinfo{person}{Matei Zaharia}, {and} \bibinfo{person}{Ion
  Stoica}.} \bibinfo{year}{2013}\natexlab{}.
\newblock \showarticletitle{Sparrow: distributed, low latency scheduling}. In
  \bibinfo{booktitle}{\emph{24th {ACM} {SIGOPS} Symposium on Operating Systems
  Principles ({SOSP}'13)}}. \bibinfo{publisher}{{ACM}},
  \bibinfo{pages}{69--84}.
\newblock
\href{https://doi.org/10.1145/2517349.2522716}{\texttt{doi}}


\bibitem[Park(2011)]%
        {P11}
\bibfield{author}{\bibinfo{person}{Gahyun Park}.}
  \bibinfo{year}{2011}\natexlab{}.
\newblock \showarticletitle{A generalization of multiple choice
  balls-into-bins}. In \bibinfo{booktitle}{\emph{\PODC{30}{11}}}.
  \bibinfo{publisher}{{ACM}}, \bibinfo{pages}{297--298}.
\newblock
\href{https://doi.org/10.1145/1993806.1993862}{\texttt{doi}}


\bibitem[Peres et~al\mbox{.}(2015)]%
        {PTW15}
\bibfield{author}{\bibinfo{person}{Yuval Peres}, \bibinfo{person}{Kunal
  Talwar}, {and} \bibinfo{person}{Udi Wieder}.}
  \bibinfo{year}{2015}\natexlab{}.
\newblock \showarticletitle{Graphical balanced allocations and the
  {$(1+\beta)$}-choice process}.
\newblock \bibinfo{journal}{\emph{\RandomStructuresAndAlgorithmsJournal}}
  \bibinfo{volume}{47}, \bibinfo{number}{4} (\bibinfo{year}{2015}),
  \bibinfo{pages}{760--775}.
\newblock
\showISSN{1042-9832}
\href{https://doi.org/10.1002/rsa.20558}{\texttt{doi}}


\bibitem[Raab and Steger(1998)]%
        {RS98}
\bibfield{author}{\bibinfo{person}{Martin Raab} {and} \bibinfo{person}{Angelika
  Steger}.} \bibinfo{year}{1998}\natexlab{}.
\newblock \showarticletitle{``{B}alls into bins''---a simple and tight
  analysis}. In \bibinfo{booktitle}{\emph{\RANDOM{2}{98}}},
  Vol.~\bibinfo{volume}{1518}. \bibinfo{publisher}{Springer},
  \bibinfo{pages}{159--170}.
\newblock
\href{https://doi.org/10.1007/3-540-49543-6_13}{\texttt{doi}}


\bibitem[Talwar and Wieder(2007)]%
        {TW07}
\bibfield{author}{\bibinfo{person}{Kunal Talwar} {and} \bibinfo{person}{Udi
  Wieder}.} \bibinfo{year}{2007}\natexlab{}.
\newblock \showarticletitle{Balanced allocations: the weighted case}. In
  \bibinfo{booktitle}{\emph{\STOC{39}{07}}}. \bibinfo{publisher}{{ACM}},
  \bibinfo{pages}{256--265}.
\newblock
\href{https://doi.org/10.1145/1250790.1250829}{\texttt{doi}}


\bibitem[Wang et~al\mbox{.}(2019)]%
        {WLLSRCJ19}
\bibfield{author}{\bibinfo{person}{Zhijun Wang}, \bibinfo{person}{Huiyang Li},
  \bibinfo{person}{Zhongwei Li}, \bibinfo{person}{Xiaocui Sun},
  \bibinfo{person}{Jia Rao}, \bibinfo{person}{Hao Che}, {and}
  \bibinfo{person}{Hong Jiang}.} \bibinfo{year}{2019}\natexlab{}.
\newblock \showarticletitle{Pigeon: an Effective Distributed, Hierarchical
  Datacenter Job Scheduler}. In \bibinfo{booktitle}{\emph{10th {ACM} Symposium
  on Cloud Computing (SoCC'19)}}. \bibinfo{publisher}{{ACM}},
  \bibinfo{pages}{246--258}.
\newblock
\href{https://doi.org/10.1145/3357223.3362728}{\texttt{doi}}


\bibitem[Whitt(1986)]%
        {W86}
\bibfield{author}{\bibinfo{person}{Ward Whitt}.}
  \bibinfo{year}{1986}\natexlab{}.
\newblock \showarticletitle{Deciding Which Queue to Join: Some
  Counterexamples}.
\newblock \bibinfo{journal}{\emph{Oper. Res.}} \bibinfo{volume}{34},
  \bibinfo{number}{1} (\bibinfo{year}{1986}), \bibinfo{pages}{55--62}.
\newblock
\href{https://doi.org/10.1287/opre.34.1.55}{\texttt{doi}}


\bibitem[Wieder(2017)]%
        {W17}
\bibfield{author}{\bibinfo{person}{Udi Wieder}.}
  \bibinfo{year}{2017}\natexlab{}.
\newblock \showarticletitle{Hashing, Load Balancing and Multiple Choice}.
\newblock \bibinfo{journal}{\emph{Found. Trends Theor. Comput. Sci.}}
  \bibinfo{volume}{12}, \bibinfo{number}{3-4} (\bibinfo{year}{2017}),
  \bibinfo{pages}{275--379}.
\newblock
\href{https://doi.org/10.1561/0400000070}{\texttt{doi}}


\end{thebibliography}

\clearpage

\appendix

\section{Tools}

\subsection{Auxiliary Probabilistic Claims}

For convenience, we add the following well-known inequality for a sequence of random variables, whose expectations are related through a recurrence inequality.

\begin{lem} \label{lem:geometric_arithmetic}
Consider a sequence of random variables $(X^i)_{i \in \mathbb{N}}$ such that there exist $a \in (0, 1)$ and $b > 0$ such that every $i \geq 1$,
\[
\Ex{X^i \mid X^{i-1}} \leq X^{i-1} \cdot a + b.
\]
Then, for every $i \geq 1$, 
\[
\Ex{X^i \mid X^0}
\leq X^0 \cdot a^i + \frac{b}{1 - a}.
\]
\end{lem}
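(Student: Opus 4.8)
The plan is to prove the bound by induction on $i$, using the tower property of conditional expectation to unfold the recurrence one step at a time. The base case $i=1$ is immediate from the hypothesis, since $\Ex{X^1 \mid X^0} \leq X^0 \cdot a + b \leq X^0 \cdot a + \frac{b}{1-a}$ as $a \in (0,1)$ forces $\frac{b}{1-a} \geq b$.

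For the inductive step, suppose the claim holds for $i-1$, i.e.\ $\Ex{X^{i-1} \mid X^0} \leq X^0 \cdot a^{i-1} + \frac{b}{1-a}$. Applying the tower property (conditioning on $X^{i-1}$ inside the conditioning on $X^0$), then the one-step hypothesis, and then linearity and monotonicity of conditional expectation, I would write
\[
\Ex{X^i \mid X^0} = \Ex{\Ex{X^i \mid X^{i-1}} \,\middle|\, X^0} \leq \Ex{X^{i-1} \cdot a + b \,\middle|\, X^0} = a \cdot \Ex{X^{i-1} \mid X^0} + b,
\]
and then plug in the inductive hypothesis to get $\Ex{X^i \mid X^0} \leq a \cdot \big( X^0 \cdot a^{i-1} + \frac{b}{1-a} \big) + b = X^0 \cdot a^i + \frac{ab}{1-a} + b = X^0 \cdot a^i + \frac{b}{1-a}$, which closes the induction. (Alternatively, one can unroll the recurrence directly to obtain $\Ex{X^i \mid X^0} \leq X^0 \cdot a^i + b \sum_{j=0}^{i-1} a^j$ and then bound the finite geometric sum $\sum_{j=0}^{i-1} a^j \leq \sum_{j=0}^{\infty} a^j = \frac{1}{1-a}$.)

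There is essentially no serious obstacle here; the only points requiring a little care are $(i)$ invoking the tower property in the correct form so that the one-step inequality $\Ex{X^i \mid X^{i-1}} \leq X^{i-1} a + b$ can be substituted inside the outer expectation, which is valid because $X^{i-1} a + b$ is $\sigma(X^{i-1})$-measurable and hence the monotonicity of conditional expectation applies, and $(ii)$ noting that the bound $\frac{b}{1-a}$ dominates every partial geometric sum $b\sum_{j=0}^{i-1} a^j$ uniformly in $i$, which is exactly why the final bound is independent of $i$. Both steps are routine, so the proof is short.
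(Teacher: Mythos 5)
Your proposal is correct and follows essentially the same route as the paper: induction on $i$ via the tower property, with the only cosmetic difference that you carry the closed bound $\frac{b}{1-a}$ through the induction while the paper carries the partial sum $b\sum_{j=0}^{i-1}a^j$ and bounds it by $\frac{b}{1-a}$ at the end (which is exactly your parenthetical alternative). Nothing further is needed.
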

\begin{proof}
We will prove by induction that for every $i \in \mathbb{N}$, 
\[
\Ex{X^i \mid X^0} \leq X^0 \cdot a^i + b \cdot \sum_{j = 0}^{i-1} a^j.
\]
For $i = 0$, it trivially holds that $\Ex{X^0 \mid X^0} \leq X^0$. Assuming the induction hypothesis holds for some $i \geq 0$, then since $a > 0$,
\begin{align*}
\Ex{X^{i+1} \mid X^0} & = \Ex{\Ex{X^{i+1} \mid X^i}\mid X^0} \leq \Ex{X^i\mid X^0} \cdot a + b \\
 & \leq \Big(X^0 \cdot a^i + b \cdot \sum_{j = 0}^{i-1} a^j \Big) \cdot a + b \\
 & = X^0 \cdot a^{i+1} +b \cdot \sum_{j = 0}^i a^j.
\end{align*}
The claims follows using that for $a \in (0,1)$, $\sum_{j=0}^{\infty} a^j = \frac{1}{1-a}$.
\end{proof}

For the next lemma, we define for two $n$-dimensional vectors $x,y$, $\langle x,y \rangle := \sum_{i=1}^n x_i \cdot y_i$.

\begin{lem}[{\cite[Lemma A.7]{LS22Batched}}]\label{lem:quasilem2}Let $(p_k)_{k=1}^n , (q_k)_{k=1}^n $ be two probability vectors and $(c_k)_{k=1}^n$ be non-negative and non-increasing. Then if $p$ majorizes $q$, i.e., for all $1 \leq k \leq n$, $\sum_{i=1}^k p_i \geq \sum_{i=1}^k q_i$ holds, then \begin{equation*}
\langle p,c \rangle \geq \langle q,c \rangle.
\end{equation*}
\end{lem}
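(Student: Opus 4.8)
The plan is to prove this by Abel summation (summation by parts). Introduce the partial sums $P_k := \sum_{i=1}^k p_i$ and $Q_k := \sum_{i=1}^k q_i$ for $0 \leq k \leq n$, with the convention $P_0 = Q_0 = 0$. The majorization hypothesis is precisely the statement that $P_k \geq Q_k$ for every $1 \leq k \leq n$, and since $p$ and $q$ are both probability vectors we additionally have the boundary identity $P_n = Q_n = 1$.

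First I would write $\langle p,c\rangle = \sum_{k=1}^n (P_k - P_{k-1}) c_k$, split the sum, and reindex the term involving $P_{k-1}$ to obtain the summation-by-parts identity
\[
\langle p,c\rangle = P_n c_n + \sum_{k=1}^{n-1} P_k (c_k - c_{k+1}),
\]
and, by the same manipulation, $\langle q,c\rangle = Q_n c_n + \sum_{k=1}^{n-1} Q_k (c_k - c_{k+1})$. Subtracting these two identities, the boundary terms cancel because $P_n = Q_n$, leaving
\[
\langle p,c\rangle - \langle q,c\rangle = \sum_{k=1}^{n-1} (P_k - Q_k)(c_k - c_{k+1}).
\]

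To conclude, I would observe that every summand on the right-hand side is non-negative: $P_k - Q_k \geq 0$ by the majorization assumption, and $c_k - c_{k+1} \geq 0$ because $c$ is non-increasing. Hence the sum is $\geq 0$, which gives $\langle p,c\rangle \geq \langle q,c\rangle$ as claimed (in fact non-negativity of $c$ is not needed, only that it is non-increasing). There is no genuine obstacle here; the only point requiring care is the index bookkeeping in the reindexing step, and in particular retaining the boundary term $P_n c_n$ explicitly so that it can be cancelled against $Q_n c_n$ using $P_n = Q_n = 1$, rather than being dropped without justification. Once the summation-by-parts identity is set up correctly, the sign argument is immediate.
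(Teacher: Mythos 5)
Your proof is correct: the summation-by-parts identity and the cancellation of the boundary terms via $P_n=Q_n=1$ are exactly right, and the sign argument follows. The paper itself gives no proof here (it cites the lemma from \cite{LS22Batched}), and the Abel-summation argument you give is the standard one; your side remark that non-negativity of $c$ is not actually needed once $P_n=Q_n$ is also accurate.
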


We continue with an ``anti-concentration'' result, i.e., a  lower bound on the probability that a binomial random variable is significantly larger than its expectation.

\begin{lem}\label{lem:binomial}
Let $m , n $ be integers such that $m \geq n \log n$. Further, let $p$ be a probability satisfying $p \in [1/(2n),1/2]$ and let $X \sim \mathsf{Bin}(m,p)$. Then for any constant $\xi \in (0,1)$, there exists a constant $\kappa \geq 0$, such that
\begin{align*}
 \Pro{ X \geq m \cdot p + \kappa \cdot \sqrt{m p \cdot \log n} } \geq n^{-\xi/2}.
\end{align*}
\end{lem}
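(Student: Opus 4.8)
The plan is to prove the anti-concentration bound \cref{lem:binomial} for $X \sim \mathsf{Bin}(m,p)$ with $p \in [1/(2n), 1/2]$ and $m \geq n\log n$, showing $\Pro{X \geq mp + \kappa\sqrt{mp\log n}} \geq n^{-\xi/2}$. The natural approach is to find a single ``target'' value $k = mp + \Theta(\sqrt{mp\log n})$ and to lower bound the single-point probability $\Pro{X = k}$, since $\Pro{X \geq k} \geq \Pro{X = k}$. Set $s := \lceil \kappa\sqrt{mp\log n}\rceil$ and $k := \lceil mp\rceil + s$; note $mp \geq (1/2)\log n$, so $mp$ is reasonably large, and $s = \Theta(\sqrt{mp\log n}) = o(mp)$ once $n$ is large (this uses $p \le 1/2$ so $mp$ is not too close to $m$ either — in fact we mostly care about the regime $mp$ small relative to $m$, which is automatic since $p\le 1/2$).

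The key step is a careful estimate of $\Pro{X = k} = \binom{m}{k} p^k (1-p)^{m-k}$. I would use Stirling's approximation to write this as $\exp(-m \cdot \mathrm{KL}(k/m \,\|\, p) + O(\log m))$, where $\mathrm{KL}(a\|p) = a\log(a/p) + (1-a)\log((1-a)/(1-p))$ is the binary relative entropy; alternatively, and perhaps more cleanly here, compare $\Pro{X = k}$ to $\Pro{X = \lceil mp\rceil}$ by telescoping the ratio $\prod_{j=\lceil mp\rceil+1}^{k} \frac{(m-j+1)p}{j(1-p)}$. Each factor in this product is $\frac{(m-j+1)p}{j(1-p)} \approx 1 - \frac{j - mp}{mp} \cdot (1 + o(1))$ (using $j - mp = O(\sqrt{mp\log n}) = o(mp)$ and $j/m = o(1)$ so $1-p$-type corrections are lower order), so $\log$ of the product is $\approx -\sum_{j} \frac{j-mp}{mp} = -\frac{s^2}{2mp}(1+o(1)) = -\frac{\kappa^2\log n}{2}(1+o(1))$. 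Combined with $\Pro{X = \lceil mp\rceil} = \Omega(1/\sqrt{mp}) = \Omega(1/\sqrt{m})$ from the standard local central limit / Stirling estimate for the mode of a binomial, this gives $\Pro{X = k} \geq \frac{1}{\mathrm{poly}(m)} \cdot n^{-\kappa^2/2 \cdot (1+o(1))}$. Since $m \leq \mathrm{poly}(n)$ in all our applications — but to be safe one should note the statement as used only needs $m$ polynomially bounded in $n$, or else absorb $\mathrm{poly}(m)$ by choosing $\kappa$ slightly smaller; here I would simply assume $m = \mathrm{poly}(n)$ as in the paper's regime $b \leq n^3$, so $1/\mathrm{poly}(m) \geq n^{-O(1)}$ — choosing $\kappa$ a sufficiently small constant makes the total exponent at least $-\xi/2$, giving $\Pro{X = k} \geq n^{-\xi/2}$ for large $n$.

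The main obstacle is controlling the error terms in the product-of-ratios estimate uniformly: one must verify that $\sum_{j=\lceil mp\rceil+1}^{k}\left(\frac{j-mp}{mp}\right)^2$ and the $j/m$ corrections are genuinely $o(\log n)$, which requires $s = o(mp)$, i.e.\ $\kappa\sqrt{mp\log n} = o(mp)$, equivalently $\kappa^2\log n = o(mp)$. This holds because $mp \geq (1/2)\log n$ only gives $\kappa^2\log n = O(mp)$, not $o(mp)$, when $p$ is near its minimum $1/(2n)$ and $m$ near $n\log n$ — so in that boundary regime $mp = \Theta(\log n)$ and $s = \Theta(\sqrt{mp\log n}) = \Theta(mp/\sqrt{mp})\cdot\sqrt{\log n}$, and one checks $s/(mp) = \Theta(\sqrt{\log n / mp}) = \Theta(1)$, NOT $o(1)$. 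This means the crude Taylor expansion is not valid at the boundary, and instead I would handle it by keeping the exact relative-entropy bound: $\Pro{X=k} \geq \frac{1}{\mathrm{poly}(m)} e^{-m\mathrm{KL}(k/m\|p)}$, and bound $m\,\mathrm{KL}(k/m\|p) \leq \frac{(k-mp)^2}{2mp}\cdot(1 + O(s/(mp)))$ using the standard one-sided inequality $\mathrm{KL}(p+t\|p) \leq \frac{t^2}{2p(1-p)} + O(t^3/p^2)$ valid for $t \leq p/2$ — and for the boundary regime where $s/(mp)$ is not small, one instead notes $mp = \Theta(\log n)$ is itself only a constant times $\log n$, so even a crude bound $m\,\mathrm{KL} = O(\log n)$ with an explicit constant, combined with choosing $\kappa$ small enough relative to that constant, suffices. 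So the proof splits into two cases according to whether $mp \geq \log^2 n$ (Taylor expansion clean) or $mp < \log^2 n$ (bound $m\,\mathrm{KL}(k/m\|p)$ by an explicit multiple of $\log n$ and pick $\kappa$ accordingly); in both cases the conclusion $\Pro{X \geq mp + \kappa\sqrt{mp\log n}} \geq \Pro{X = k} \geq n^{-\xi/2}$ follows for suitable constant $\kappa = \kappa(\xi)$ and large $n$.
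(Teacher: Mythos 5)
Your core computation --- telescoping the ratio of consecutive point masses away from the mode and lower-bounding the mode probability by $\Omega(1/\sqrt{mp})$ --- is exactly the engine of the paper's proof, and your handling of the boundary regime $mp = \Theta(\log n)$ (where the increment is a constant fraction of $mp$, so one needs the crude rather than the Taylor-expanded bound on the ratio) is a legitimate concern that the paper addresses by keeping an explicit bound $f(\mu+K) \geq f(\mu)\cdot e^{-16K^2/\mu}$ valid whenever $K \leq 4\mu$ and $p \leq 1/2$.

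However, there is a genuine gap in your final step. You lower bound the tail by a \emph{single} point mass, $\Pro{X \geq k} \geq \Pro{X = k} \geq \Omega(1/\sqrt{mp})\cdot n^{-\kappa^2/2\cdot(1+o(1))}$, and then need the prefactor $1/\sqrt{mp}$ to be at least $n^{-O(1)}$. The lemma as stated assumes only $m \geq n\log n$ with no upper bound, and it is invoked in Theorem~\ref{thm:lower}, which likewise permits any $b = \Omega(n\log n)$; so you cannot assume $m = \mathrm{poly}(n)$, and your fallback of ``absorbing $\mathrm{poly}(m)$ by choosing $\kappa$ smaller'' does not work because the $1/\sqrt{mp}$ factor is independent of $\kappa$. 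The fix --- and this is precisely what the paper does --- is to sum the point masses over a window of $\Theta(\xi\sqrt{mp\log n})$ consecutive values $K \in [\tfrac{\xi}{100}\sqrt{\mu\log n}, \tfrac{\xi}{50}\sqrt{\mu\log n}]$: each term is at least $f(\mu)\cdot e^{-O(\xi^2\log n)} = \Omega(1/\sqrt{\mu})\cdot n^{-O(\xi^2)}$, and the number of terms $\Theta(\xi\sqrt{\mu\log n})$ cancels the $1/\sqrt{\mu}$, leaving $\Omega(\xi\sqrt{\log n})\cdot n^{-O(\xi^2)} \geq n^{-\xi/2}$ uniformly in $m$. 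With that one modification your argument closes; without it, the statement you prove is strictly weaker than the lemma.
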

\begin{proof}
Since $X \sim \mathsf{Bin}(m,p)$, we know that 
\[
 \Pro{ X = m \cdot p } 
 = \binom{ m }{m \cdot p} \cdot p^{ m \cdot p} \cdot (1-p)^{m - m \cdot p}.
\]
Let $\mu := m \cdot p$ and $f(z) := \Pro{ X = z }$ for any $z \in [0,m]$. Then, for any integer $k \geq 1$,
\begin{align*}
 \frac{ f(\mu+k) }{ f(\mu+(k-1)) } &=
 \frac{ \frac{m!}{(\mu+k)!(m-\mu-k)!} }{ \frac{m!}{(\mu+k-1)!(m-\mu-k+1)!} } \cdot \frac{p}{1-p} \\ &= \frac{m-\mu-k+1}{ \mu+k} \cdot \frac{p}{1-p} 
 \\
 &=\frac{m \cdot (1 - \frac{\mu+k-1}{m})}{(1+ \frac{k}{\mu} ) m \cdot p} \cdot \frac{p}{1-p} \\
 &= \frac{ 1 - \frac{\mu+k-1}{m}}{ 1 + \frac{k}{\mu} } \cdot \frac{1}{1-p} \\
  &= \frac{ 1 - p \cdot \frac{\mu+k-1}{\mu}}{ 1 + \frac{k}{\mu} } \cdot \frac{1}{1-p}.
\end{align*}
The first factor is decreasing in $k$. Hence,
for any $K \geq 1$,
\begin{align*}
 f(\mu+K) &\geq f(\mu)  \cdot \left( \frac{ \mu \cdot (1 - p) - (K-1) p }{\mu + K} \right) ^{K} \cdot \left( \frac{ 1 }{1-p} \right)^{K} \\
  &=  f(\mu)  \cdot \left( \frac{ (\mu + K) \cdot (1-p) - K (1+p) }{(\mu + K) \cdot (1-p)} \right) ^{K} \\
  &= f(\mu) \cdot \left( 1 - \frac{K (1+p) }{(\mu+K) \cdot (1-p)} \right)^{ ( \frac{(\mu + K) \cdot (1-p)}{K (1+p) } - 1 ) \cdot \frac{K}{ \frac{(\mu + K) \cdot (1-p)}{K(1+p) } - 1  } }
  \end{align*}
  We will now make use of the fact that $(1-\frac{1}{N})^{N-1} \geq e^{-1}$, where
  $ N: = \frac{(\mu + K) \cdot (1-p)}{K \cdot (1+p)}$. 
  With this we have,
  \begin{align*}
   f(\mu+K) 
  &= f(\mu)  \cdot \left(1 - \frac{1}{N} \right)^{(N-1) \cdot \frac{K}{N-1}}  \\
  &\geq f(\mu)  \cdot e^{-\frac{K}{\frac{(\mu + K) \cdot (1-p)}{K (1+p)}-1}} \\
  &\geq f(\mu)  \cdot e^{-\frac{2 K}{\frac{(\mu + K) \cdot (1-p)}{K (1+p) }}} \\
    &\geq f(\mu)  \cdot e^{-\frac{8 K^2}{(\mu + K) }} \\
    &\geq f(\mu)  \cdot e^{-\frac{16 K^2}{\mu} },
  \end{align*}
  having used that $p \leq 1/2$ and $K \leq 4 \cdot \mu$.
Further, by a concentration bound, e.g., by Chebyshev's inequality,
\[
 \Pro{ | X - \mu | \geq 2 \sqrt{ \Var{X} } } \leq \frac{1}{4}.
\]
This implies that at least $\frac{3}{4}$ of the probability of $X$ is on the interval $[ \mu - 2 \sqrt{ \Var{X} }, \mu + 2 \sqrt{ \Var{X} }]$. Since the mode, i.e., largest probability is $\lfloor (m+1) p \rfloor$ or $\lceil (m+1) p \rceil - 1$, it follows that
\[
 f(\mu) \geq \frac{3}{8 \cdot \Var{X}} = \frac{3}{8 \cdot \sqrt{ m \cdot p ( 1-p)}} \geq \frac{3}{8 \sqrt{ \mu}}.
\]
Then, 
it follows that
\begin{align*}
\Pro{ \tilde{x}_j \geq \mu + \xi/2 \cdot \sqrt{\mu \log n} } 
 &\geq \sum_{K=\xi/100 \cdot \sqrt{\mu \log n}}^{m}  f(\mu+K)  \\ &\geq  \sum_{K=\xi/100 \cdot \sqrt{\mu \log n}}^{\xi/50 \cdot \sqrt{\mu \log n}} f(\mu+K)  \\
 &\geq f(\mu) \cdot \sum_{K=\xi/100 \cdot \sqrt{\mu \log n}}^{\xi/50 \cdot \sqrt{\mu \log n}} e^{-\frac{16 K^2}{\mu} } \\
 &\geq \frac{3}{8 \sqrt{ \mu}} \cdot \xi/100 \cdot \sqrt{\mu \log n} \cdot e^{-16 \xi^2/2500 \cdot \log n} \\
 &\geq n^{-32/2500 \xi^2 } \geq n^{-\xi/2}.
\end{align*}

\end{proof}

\subsection{Concentration Inequalities}

We now proceed by stating a standard Chernoff bound.

\begin{lem}[Chernoff Bound] \label{lem:chernoff}
Let $X^1, \ldots, X^n$ be independent random variables taking values in $\{0, 1\}$. Let $X = \sum_{i = 1}^n X^i$, $\mu = \Ex{X}$ and $\delta \geq 0$ be arbitrary. Then, for any $\mu_H \geq \mu$,
\[
  \Pro{X \geq (1+\delta) \cdot \mu_H} \leq e^{-\delta^2 \mu_H/3}.
\]
\end{lem}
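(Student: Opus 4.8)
The plan is to prove this standard multiplicative Chernoff bound via the exponential-moment method. Fix a parameter $s \geq 0$, to be chosen at the very end. Since $u \mapsto e^{su}$ is non-decreasing, applying Markov's inequality to the non-negative random variable $e^{sX}$ gives
\[
\Pro{X \geq (1+\delta)\mu_H} = \Pro{e^{sX} \geq e^{s(1+\delta)\mu_H}} \leq e^{-s(1+\delta)\mu_H} \cdot \Ex{e^{sX}}.
\]

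The first step is to bound the moment generating function $\Ex{e^{sX}}$. By independence of $X^1, \ldots, X^n$ we have $\Ex{e^{sX}} = \prod_{i=1}^n \Ex{e^{sX^i}}$, and writing $p_i := \Pro{X^i = 1}$, each factor equals $1 + p_i(e^s - 1) \leq \exp\!\big(p_i(e^s - 1)\big)$ by the inequality $1 + x \leq e^x$. Taking the product and using $\sum_{i=1}^n p_i = \mu$ gives $\Ex{e^{sX}} \leq \exp\!\big((e^s - 1)\mu\big)$. It is here that the hypothesis $\mu_H \geq \mu$ is used: since $s \geq 0$ forces $e^s - 1 \geq 0$, we may replace $\mu$ by $\mu_H$ and obtain $\Ex{e^{sX}} \leq \exp\!\big((e^s - 1)\mu_H\big)$.

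Substituting back yields $\Pro{X \geq (1+\delta)\mu_H} \leq \exp\!\big(\mu_H\,((e^s - 1) - s(1+\delta))\big)$ for every $s \geq 0$. The second step is to optimise the exponent: the map $s \mapsto (e^s - 1) - s(1+\delta)$ is convex with derivative $e^s - (1+\delta)$, hence minimised at $s = \ln(1+\delta)$ (which is $\geq 0$ since $\delta \geq 0$), giving the sharp ``relative-entropy'' bound
\[
\Pro{X \geq (1+\delta)\mu_H} \leq \left( \frac{e^{\delta}}{(1+\delta)^{1+\delta}} \right)^{\mu_H} = \exp\!\big(\mu_H\,(\delta - (1+\delta)\ln(1+\delta))\big).
\]

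The last step is to weaken this to the stated form using the elementary inequality $\delta - (1+\delta)\ln(1+\delta) \leq -\delta^2/3$. I would verify this by setting $g(\delta) := \tfrac{\delta^2}{3} + \delta - (1+\delta)\ln(1+\delta)$ and checking $g(\delta) \leq 0$: one has $g(0) = 0$ and $g'(\delta) = \tfrac{2}{3}\delta - \ln(1+\delta)$ with $g'(0) = 0$, while $g''(\delta) = \tfrac{2}{3} - \tfrac{1}{1+\delta}$ changes sign only once, so $g'$ has a single interior minimum; since $g'(0) = 0$ and $g'(1) = \tfrac{2}{3} - \ln 2 < 0$, we get $g' \leq 0$ on $[0,1]$, hence $g$ is non-increasing there and $g(\delta) \leq 0$. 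This elementary inequality is the only mildly delicate point — it is where the slack in the constant $3$ is spent, and where the restriction of $\delta$ to a bounded range (which is how the bound is applied in the paper, with $\delta = \Oh(1)$) implicitly enters; everything else is routine, and the lemma could alternatively just be cited from a standard reference.
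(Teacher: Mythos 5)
Your argument is the standard exponential-moment proof and it is correct as far as it goes; the paper itself offers no proof of \cref{lem:chernoff} (it is stated in the appendix as ``a standard Chernoff bound'' with neither proof nor citation), so there is no in-paper argument to compare against. The one point worth making explicit is that the caveat you flag at the end is not merely presentational. The elementary inequality $\delta - (1+\delta)\ln(1+\delta) \leq -\delta^2/3$ genuinely fails once $\delta$ exceeds roughly $1.8$ (at $\delta = 2$ the left-hand side is $2 - 3\ln 3 \approx -1.296$ while the right-hand side is $-4/3$), and consequently the lemma as literally stated --- for ``arbitrary $\delta \geq 0$'' --- is false, not just unproved, in that regime: for $X \sim \mathsf{Bin}(n,1/n)$ with $n$ large, $\mu = \mu_H = 1$ and $\delta = 9$, one has $\Pro{X \geq 10} \approx e^{-1}/10! \approx 10^{-7}$, which vastly exceeds $e^{-\delta^2 \mu_H/3} = e^{-27}$. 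So your restriction to $\delta \in [0,1]$ is the best one can do with this exponent, and your monotonicity argument there is sound: $g''$ changes sign exactly once on $[0,1]$, so $\max_{[0,1]} g' = \max\{g'(0), g'(1)\} = 0$, whence $g$ is non-increasing and $g \leq g(0) = 0$. This range also covers the only invocation of the lemma, in the proof of \cref{thm:lower}, where $\delta = C \cdot \sqrt{(n/b)\cdot \log n} \leq 1/\sqrt{20} < 1$ once $b \geq (20C^2)\, n \log n$. If one insisted on all $\delta \geq 0$, the exponent would have to be weakened to the standard $-\delta^2\mu_H/(2+\delta)$ (which follows from your relative-entropy bound via $\ln(1+\delta) \geq 2\delta/(2+\delta)$) or to $-\min\{\delta,\delta^2\}\cdot \mu_H/3$; as it stands, the hypothesis $\delta \leq 1$ should simply be added to the lemma.
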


Following~\cite{K02}, we will now give the definition for \textit{strongly difference-bounded} and then give the statement for a bounded differences inequality with bad events.

\begin{defi}[Strongly difference-bounded -- {\cite[Definition 1.6]{K02}}] \label{def:strongly_dif_bounded}
Let $\Omega_1, \ldots, \Omega_N$ be probability spaces. Let $\Omega = \prod_{k = 1}^N \Omega_k$ and let $X$ be a random variable on $\Omega$. We say that $X$ is \textit{strongly difference-bounded by $(\eta_1, \eta_2, \xi)$} if the following holds: there is a ``bad'' subset $\mathcal{B} \subseteq \Omega$, where $\xi = \Pro{\omega \in \mathcal{B}}$. If $\omega, \omega' \in \Omega$ differ only in the $k$-th coordinate, and $\omega \notin B$, then
\[
|X(\omega) - X(\omega')| \leq \eta_2.
\]
Furthermore, for any $\omega$ and $\omega'$ differing only in the $k$-th coordinate, \[
|X(\omega) - X(\omega')| \leq \eta_1.
\]
\end{defi}

\begin{thm}[{\cite[Theorem 3.3]{K02}}] \label{lem:kutlin_3_3}
Let $\Omega_1, \ldots, \Omega_N$ be probability spaces. Let $\Omega = \prod_{k = 1}^N \Omega_k$, and let $X$ be a random variable on $\Omega$ which is strongly difference-bounded by $(\eta_1, \eta_2, \xi)$. Let $\mu = \ex{X}$. Then for any $\lambda > 0$ and any $\gamma_1, \ldots , \gamma_N > 0$,
\[
\Pro{X \geq \mu + \lambda} \leq \exp\left( - \frac{\lambda^2}{2 \cdot \sum_{k = 1}^N (\eta_2 + \eta_1 \gamma_k)^2} \right) + \xi \cdot \sum_{k = 1}^N \frac{1}{\gamma_k}.
\]
\end{thm}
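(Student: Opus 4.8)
The plan is to run a Doob-martingale (Azuma--McDiarmid) argument split according to a ``good'' event that controls how close the process gets to the bad set $\mathcal{B}$. Write $\mathcal{F}_k := \sigma(\omega_1,\dots,\omega_k)$ for the filtration generated by the first $k$ coordinates and set $Y_k := \Ex{X \mid \mathcal{F}_k}$, so that $Y_0 = \mu$, $Y_N = X$, and $D_k := Y_k - Y_{k-1}$ is a martingale difference sequence. The first step is a one-step difference bound straight from \cref{def:strongly_dif_bounded}: expressing $Y_k$ and $Y_{k-1}$ as expectations over fresh copies of coordinates $k,k+1,\dots,N$ resp.\ $k+1,\dots,N$, the two configurations being compared differ only in coordinate $k$, hence their $X$-values differ by at most $\eta_2$ unless \emph{both} lie in $\mathcal{B}$, in which case by at most $\eta_1$. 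Averaging over the resampled suffix and writing $q_k := \Pro{\mathcal{B} \mid \mathcal{F}_k}$, this yields
\[
 D_k \le \eta_2 + \eta_1 q_k \qquad\text{and}\qquad D_k \ge -\big(\eta_2 + \eta_1 q_{k-1}\big),
\]
the lower bound holding because averaging $q_k$ over coordinate $k$ returns $q_{k-1}$; in particular $|D_k|\le \eta_1$ always.

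Next I would isolate the bad event $\mathcal{E} := \bigcup_{k=1}^N \{q_k > \gamma_k\}$. Since $(q_k)_k$ is the Doob martingale of $\mathbf{1}_{\mathcal{B}}$ we have $\Ex{q_k} = \Pro{\mathcal{B}} = \xi$, so Markov's inequality together with a union bound gives $\Pro{\mathcal{E}} \le \xi \sum_{k=1}^N 1/\gamma_k$, which is exactly the additive term in the statement (and when every $\gamma_k<1$ this $\mathcal{E}$ also contains $\mathcal{B}$ itself, since $q_N=\mathbf{1}_{\mathcal{B}}$). On the complement $\mathcal{E}^c$ every increment obeys the \emph{deterministic} two-sided bound $|D_k| \le \eta_2+\eta_1\gamma_k$, so one expects the Azuma--Hoeffding estimate with variance proxy $\sum_k (\eta_2+\eta_1\gamma_k)^2$.

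To make the ``on $\mathcal{E}^c$'' part precise I would introduce the stopping time $\tau := \min\{k : q_k > \gamma_k\}$ (with $\tau = N+1$ if $\mathcal{E}$ never occurs) and analyze the stopped martingale $Z_k := Y_{k \wedge \tau}$, which coincides with $Y_k$ on $\mathcal{E}^c$. Peeling off one coordinate at a time, for $s>0$ the conditional exponential moment $\Ex{e^{s(Z_k - Z_{k-1})} \mid \mathcal{F}_{k-1}}$ is either $1$ (already stopped) or the exponential moment of a mean-zero variable confined to an interval of length at most $2(\eta_2+\eta_1\gamma_k)$, to which Hoeffding's lemma gives a factor $e^{s^2(\eta_2+\eta_1\gamma_k)^2/2}$; multiplying over $k$ and optimising $s = \lambda/\sum_k(\eta_2+\eta_1\gamma_k)^2$ yields $\Pro{Z_N \ge \mu+\lambda} \le \exp\!\big(-\lambda^2/(2\sum_k(\eta_2+\eta_1\gamma_k)^2)\big)$. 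Since $\{X \ge \mu+\lambda\} \subseteq \{Z_N \ge \mu+\lambda\}\cup\mathcal{E}$, combining with the bound on $\Pro{\mathcal{E}}$ gives the claim.

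The main obstacle is the measurability mismatch in the last step: the natural upper bound on $D_k$ involves the $\mathcal{F}_k$-measurable quantity $q_k$ rather than an $\mathcal{F}_{k-1}$-measurable one, so Hoeffding's lemma cannot be invoked naively coordinate-by-coordinate. Stopping at $\tau$ controls every increment strictly before $\tau$, but the increment at step $\tau$ itself (where $q_\tau$ may be close to $1$) needs care; this is handled by noting that on $\{\tau\ge k\}$ one has $q_{k-1}\le\gamma_{k-1}$, so the lower bound on $Z_k-Z_{k-1}$ is already fine, while the single remaining positive increment can be absorbed either by stopping one step earlier and estimating it with the trivial bound $|D_k|\le\eta_1$, or --- following \cite{K02} --- by a more careful accounting of the conditional exponential moment at the stopping step. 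Once this point is dispatched, what remains is the routine Azuma--McDiarmid computation together with the elementary Markov/union estimate for $\mathcal{E}$.
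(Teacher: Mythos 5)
First, note that the paper does not prove this statement at all: it is imported verbatim from Kutin \cite[Theorem 3.3]{K02} and used as a black box in the proof of \cref{lem:batching_gamma_linear_whp}, so there is no internal proof to compare against; I am assessing your argument on its own terms.

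Your architecture (Doob martingale $Y_k=\Ex{X\mid\mathcal{F}_k}$, increment bounds in terms of $q_k=\Pro{\mathcal{B}\mid\mathcal{F}_k}$, Markov plus a union bound giving the additive term $\xi\sum_k 1/\gamma_k$, then an Azuma computation on a stopped martingale) is the right one, but the step you yourself flag is a genuine gap, and neither of your proposed patches works as written. The bound you derive in the direction you need, $D_k\le \eta_2+\eta_1 q_k$, involves the $\mathcal{F}_k$-measurable $q_k$, so Hoeffding's lemma cannot be applied conditionally on $\mathcal{F}_{k-1}$; with your stopping time $\tau=\min\{k: q_k>\gamma_k\}$ the increment at step $\tau$ is uncontrolled, ``stopping one step earlier'' (multiplying $D_k$ by $\mathbf{1}_{\{\tau>k\}}$, which is only $\mathcal{F}_k$-measurable) destroys the martingale property, and falling back on the crude bound $|D_\tau|\le\eta_1$ for that one increment does not reproduce the claimed exponent $\lambda^2/\big(2\sum_k(\eta_2+\eta_1\gamma_k)^2\big)$; ``a more careful accounting following \cite{K02}'' is just deferring the missing step to the source. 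The fix is in fact already implicit in your own resampling computation: the event that \emph{both} compared configurations lie in $\mathcal{B}$ is contained in the event that the configuration with the resampled $k$-th coordinate lies in $\mathcal{B}$, whose conditional probability is $q_{k-1}$, so the \emph{two-sided} bound $|D_k|\le \eta_2+\eta_1\min(q_k,q_{k-1})\le \eta_2+\eta_1 q_{k-1}$ holds, with an $\mathcal{F}_{k-1}$-measurable (predictable) right-hand side. One should then pair $q_{k-1}$ with $\gamma_k$: take the bad events $\{q_{k-1}>\gamma_k\}$, each of probability at most $\xi/\gamma_k$ by Markov since $\Ex{q_{k-1}}=\xi$, set $T:=\min\{k\ge 1: q_{k-1}>\gamma_k\}$ (so $\{T>k\}\in\mathcal{F}_{k-1}$), and work with the frozen martingale whose increments are $D_k\cdot\mathbf{1}_{\{T>k\}}$; these are genuinely martingale increments, deterministically bounded by $\eta_2+\eta_1\gamma_k$, so Azuma--Hoeffding gives exactly the Gaussian term, and on $\{T>N\}$ the frozen martingale equals $X$, which yields the theorem after adding $\Pro{T\le N}\le \xi\sum_{k=1}^N 1/\gamma_k$. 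With that replacement your proof closes; without it, the central estimate remains unproved.
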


\end{document}